\newtheorem{lemma}{Lemma}
\title{Convolutional Monge Mapping Normalization\\ for learning on sleep data}
\author{%
  Theo Gnassounou
  \\
  Université Paris-Saclay, Inria, CEA\\
  Palaiseau 91120, France \\
  \texttt{theo.gnassounou@inria.fr} \\
  % examples of more authors
  \And
  R\'emi Flamary
  \\
  IP Paris, CMAP, UMR 7641\\
  Palaiseau 91120, France \\
  \texttt{remi.flamary@polytechnique.edu} \\
  \And
    Alexandre Gramfort\thanks{A. Gramfort joined Meta and can be reached at \texttt{agramfort@meta.com}}
  \\
  Université Paris-Saclay, Inria, CEA\\
  Palaiseau 91120, France \\
  \texttt{alexandre.gramfort@inria.fr}
}
\setlist{leftmargin=5.5mm}
\begin{document}

\maketitle

\begin{abstract}
  % The usual machine learning methods have difficulty in generalizing results across subjects for tasks using biological signal data such as electroencephalogram (EEG). The reason is that the variability in the signals can be huge between subjects and trials. To overcome this problem, several pre-processing methods have been proposed, as well as specific subject normalizations trained on the data. The downside of these methods is that they do not allow generalization to new subjects without additional training. This is why we propose a new method of Convolutional Monge Mapping Normalization (\texttt{CMMN}), which proposes an smart normalization that does not require subject specific training and allows to have an inference without new training. We first prove that our method outperforms classical normalization and finetuned normalization methods. We then show that our method generalizes very easily to more complicated tasks such as the adaptation between different datasets. Finally, for these last problems that are usually solved by domain adaptation methods that minimize the divergence between embeddings, we present that our method allows to reach similar results for a much lower cost and allows to have even better results when we couple the two methods.

  {In many machine learning applications on signals and biomedical data, especially electroencephalogram (EEG), one major challenge is the variability of the data across subjects, sessions, and hardware devices. In this work, we propose a new method called Convolutional Monge Mapping Normalization (\texttt{CMMN}), which consists in filtering the signals in order to adapt their power spectrum density (PSD) to a Wasserstein barycenter estimated on training data. \texttt{CMMN} relies on novel closed-form solutions for optimal transport mappings and barycenters and provides individual test time adaptation to new data without needing to retrain a prediction model. Numerical experiments on sleep EEG data show that \texttt{CMMN} leads to significant and consistent performance gains independent from the neural network architecture when adapting between subjects, sessions, and even datasets collected with different hardware. Notably our performance gain is on par with much more numerically intensive Domain Adaptation (DA) methods and can be used in conjunction with those for even better performances.}

\end{abstract}

\section{Introduction}
  %To overcome this problem several methods have been proposed, such as invariant feature extraction (covariances, invariant representations), data normalization, subject specific pre-processing and Domain Adaptation (DA) that aim at adapting a predictor to new unlabeled data. But those approach have several limits such as requiring labeled calibration data for subject specific normalization or the need to  re-train a predictor for each subject in the case of DA. 

\paragraph{Data shift in biological signals}
Biological signals, such as electroencephalograms (EEG), often exhibit a significant degree of variability. This variability arises from various factors, including the recording setup (hardware specifications, number of electrodes), individual human subjects (variations in anatomies and brain activities), and the recording session itself (electrode impedance, positioning).
% In biological signals such as electroencephalograms (EEG), the data notoriously
% suffer from a large variability. This variability can be due to the recording
% setup (hardware, number of electrode), the human subject themselves
% (anatomies, brain activities) or the recording session (impedance of the
% electrodes, their position).
In this paper, we focus on the problem of sleep
staging which consists of measuring the activities 
of the brain and body during a session (here one session is done over a night of sleep)
with electroencephalograms (EEG), electrooculograms (EOG), and electromyograms (EMG) 
\cite{STEVENS200445} to classify the sleep stages. 
%is collected in different settings leading to variability in the data. 
% The data
% can be recorded with different hardware (number of electrodes), on different
% subjects that have different anatomies and brain activity that affect the
% statistics of the signals, or different recording sessions where the condition of
% experiment (impedance of the electrodes or their position) can change.
% For example, in the problem of sleep staging which consists of measuring the activities
% of the brain and body during a session (here one session is done over a night of sleep)
% with electroencephalograms (EEG), electrooculograms (EOG) and electromyograms (EMG) 
% \cite{STEVENS200445} to classify the sleep stages,
% some variability comes with the long recording (around 8 hours) such as the body
% position, the humidity of skin or the position of sensors which can change
% during the session. 
Depending on the dataset, the populations studied may vary from healthy
cohorts to cohorts suffering from disease \cite{MASS, goldberger2000physiobank,
SHHS}. Different electrodes positioned in the front/back \cite{goldberger2000physiobank} or the side of the head \cite{SHHS} can be employed. Sleep staging is a perfect problem for studying
the need to adapt to this variability that is also commonly denoted as data shift between domains (that can be datasets, subjects, or even sessions).%Whereas other use the one in the side of the
%head \cite{SHHS}.}
% \rf{RF: why do the exmaple refer only to in session variability
% when its is the only one we do not manage? Or we remove it or we siscuss otehr
% ariabilities (subjects/sessions).}

% These temporal data of several hours are cut into 30s samples. Sleep specialists can then classify each sleep sample by analyzing the time and frequency domain \cite{AASM, Hobson1969AMO}. Sleep is commonly divided into 5 stages: Wake (W), Rapid Movement Eyes (REM), Non-Rapid Movement 1 (NREM1), Non-Rapid Movement Eye 2 (NREM2), Non-Rapid Movement Eye 3 (NREM3). After classification, it is possible to diagnose sleep disorders by analyzing the variations of sleep stages. This work requires time and energy. Several researchers have studied the automation of this sleep classification task with deep learning \cite{chambon2017deep,Perslev2021Usleep,guillot2021robustsleepnet}.

% An other parameter of the DA is if the target is available during the training time allowing to fit a model with the feature target. Otherwise if no target data is available during the training, we refer to this method test-time DA.

\paragraph{Normalization for data shift}

A traditional approach in machine learning to address data shifts between domains is to apply data normalization. 
% To illustrate this, one can
% consider EEG data collected during sleep. Here one recording is done over a
% night. For a given individual, multiple recordings are possible and each one is
% commonly referred in biosignals as a \emph{session}. 
% \rf{Why is this discussion
% deffinition here? it shoudl be in the first paragraph when we discuss shifts and exmaples}
Different
approaches exist in the literature to normalize data. One can normalize the
data per \texttt{Session} which allows to keep more within-session variability
in the data \cite{apicella2023effects}. If the variability during the session
is too large, one can normalize independently each window of data (\eg
30\,s on sleep EEG) \cite{chambon2017deep}, denoted as \texttt{Sample} in the following. It is also possible not to normalize
the data, and let a neural network learn to discard non-pertinent variabilities
helped with batch normalization
\cite{Supratak_2017, Perslev2021Usleep}. 
More recently, a number of works have explored the possibility of learning a
normalization layer that is domain specific in order to better adapt their
specificities \cite{li2016revisiting,chehab2022deep,liu2022convolutional,pmlr-v176-wei22a,csaky2023grouplevel,kobler2022spd,phan2023regu}.
However, this line of work usually requires to have labeled data from all domains
(subjects) for learning which might not be the case in practice when the
objective is to automatically label a new domain without an expert.

% These methods are very basics and some more complex solution propose to learn a specific neural network in order to normalize efficiently the data with for example subject specific layer to catch specificity of each subject \cite{liu2022convolutional, kobler2022spd} or specific training and architecture \cite{chehab2022deep, phan2023regu}. One major drawbacks of this method is that a domain needs to be present in both training and test set, otherwise for any new subject it will be difficult to adapt to the new subject.

\paragraph{Domain Adaptation (DA) for data shift} Domain Adaptation is a field
in machine learning that aims at adapting a predictor in the presence of data
shift but in the absence of labeled data in the target domain. 
% In DA, one considers one or multiple source domains with labeled data and one
% target domain with no labels available (unsupervised DA) or a few target
% labels (semi-supervised DA).
The goal of DA is to find an estimator using the labeled source domains which
generalizes well on the shifted target domain \cite{quinonero2008dataset}.
% DA is of particular interest in the context of biological signals because it can use
% new unlabeled data to specialize the predictor to
% each new target domain, for a computational cost.
In the context of biological signals, DA is especially relevant as it has the ability to fine-tune the predictor for each new target domain by leveraging unlabeled data.
\\
Modern DA methods, inspired by
successes in computer vision, usually try to reduce the shift
between the embeddings of domains learned by the feature extractor. To do that, a majority of the 
methods try to minimize the divergence between the features of the source and the
target data. Several divergences can be considered for this task such as
correlation distance \cite{sun2016deep}, adversarial method
\cite{ganin2016domainadversarial}, Maximum Mean Discrepancy (MMD) distance \cite{long2015learning} or
optimal transport \cite{damodaran2018deepjdot,montesuma2021wasserstein}. Another adaptation strategy consists in learning domain-specific batch normalization \cite{li2016revisiting} in the embedding.
Interestingly those methods share
an objective with the normalization methods: they aim to reduce
the shift between datasets. To this end, DA learns a complex invariant feature representation whereas normalization remains in the original data space.
% , on a low-level way for normalization that consist in a pre-processing
% before feeding the data to the network, or a more involved
% invariant feature representation for DA.
Finally, test-time DA aims at adapting a predictor to the target data
without access to the source data \cite{chen2022contrastive}, which might not be
available in practice due to privacy concerns or the memory limit of devices.
%The shifts can be defined differently according to the problem
%\cite{MORENOTORRES2012521}.

% Standard normalization or subject-specific normalization try to reduce the shift
% in the data as a pre-processing before feeding the data to the network. But
% another strategy is to try to reduce the shift in the embedding space of a
% neural network. This problem is well known in machine learning and several
% approaches have been proposed in a field called Domain Adaptation (DA).
%  \\
% Modern DA methods, inspired by successes in computer vision, usually try to
% reduce the shift
% between the embeddings of domains learned by the feature extractor. To do that, these DA
% methods try to minimize the divergence between features of the source and the
% target data. Several divergences can be considered for this task such as
% correlation distance \cite{sun2016deep}, adversarial method
% \cite{ganin2016domainadversarial}, Maximum Mean Discrepancy (MMD) distance \cite{long2015learning} or
% optimal transport \cite{damodaran2018deepjdot}. But these methods require the
% target data to do the training, and for each new domain (or subject) a new
% training has to be done. \\
% However, due to the privacy of some dataset or the
% memory limit of devices, the source data are often not available during the
% test-time. The test-time (a.k.a. source-free) DA is a branch of DA where the
% inference is done without access to the source data \cite{chen2022contrastive}.

\paragraph{Contributions} In this work we propose a novel and efficient
normalization approach that can compensate at test-time for the spectral variability
of the domain signals. Our approach called Convolutional Monge Mapping Normalization
(\texttt{CMMN}), illustrated in \autoref{fig:CMMN}, uses a novel closed-form to 
estimate a meaningful barycenter from the source domains.
Then \texttt{CMMN} uses a closed-form solution for the optimal transport Monge
mapping between Gaussian random signals to align the power spectrum density of
each domain (source and target) to the barycenter. We
emphasize that \texttt{CMMN} is, to the best of our knowledge, the first approach
that can adapt to complex spectral shifts in the data without the need to access
target datasets at training time or train a new estimator for each new
domain (which are the limits of DA). \\
We first
introduce in \autoref{sec:OT} the problem of optimal transport (OT) between Gaussian
distributions, and then propose a novel closed-form solution for the Wasserstein
barycenter for stationary Gaussian random signals. We then use this result to
propose our convolutional normalization procedure in \autoref{sec:CMMN}, where implementation details and related
works are also discussed. Finally \autoref{sec:expe} reports a number of numerical experiments on sleep EEG data, demonstrating the interest of \texttt{CMMN}
for adapting to new subjects, sessions, and even datasets, but also study its
interaction with DA methods.

% When working with signals, a common change in statistics between source and target is spectral changes (\ie shift in power spectrum density; PSD). \tg{need ref?}
% % Assuming Gaussian random signals allows to find a transport map between two signals to align their PSD \cite{Mallasto2017gp, takatsu2009wasserstein} with Optimal transport theory. 
% In this paper, we proposed a new method called Convolutional Monge Mapping Normalization (\texttt{CMMN}) based on Wasserstein distance between stationary Gaussian random signal.
% These assumptions lead to new closed-forms for the computation of the barycenter and the mapping between signals.
% Moreover, the close forms allow having a fast and efficient computation of the Monge mapping and Monge barycenter.
% Besides, \texttt{CMMN} is a source-free approach that does not need any source data to predict on a target domain. 

% In this paper, we first introduce in \autoref{sec:OT} notion of Optimal Transport between Gaussian and propose new closed-form for stationary signals. Then we explain our new \texttt{CMMN} approach based on optimal transport to normalize data in order to reduce the shift between signals in the \autoref{sec:CMMN}. A discussion is done of the efficient computation of this method. In \autoref{sec:expe}, we report results on various EEG sleep datasets to prove the efficiency of our method against normalization and domain adaptation method.  
\begin{figure}
    \centering
    \includegraphics[width=\columnwidth]{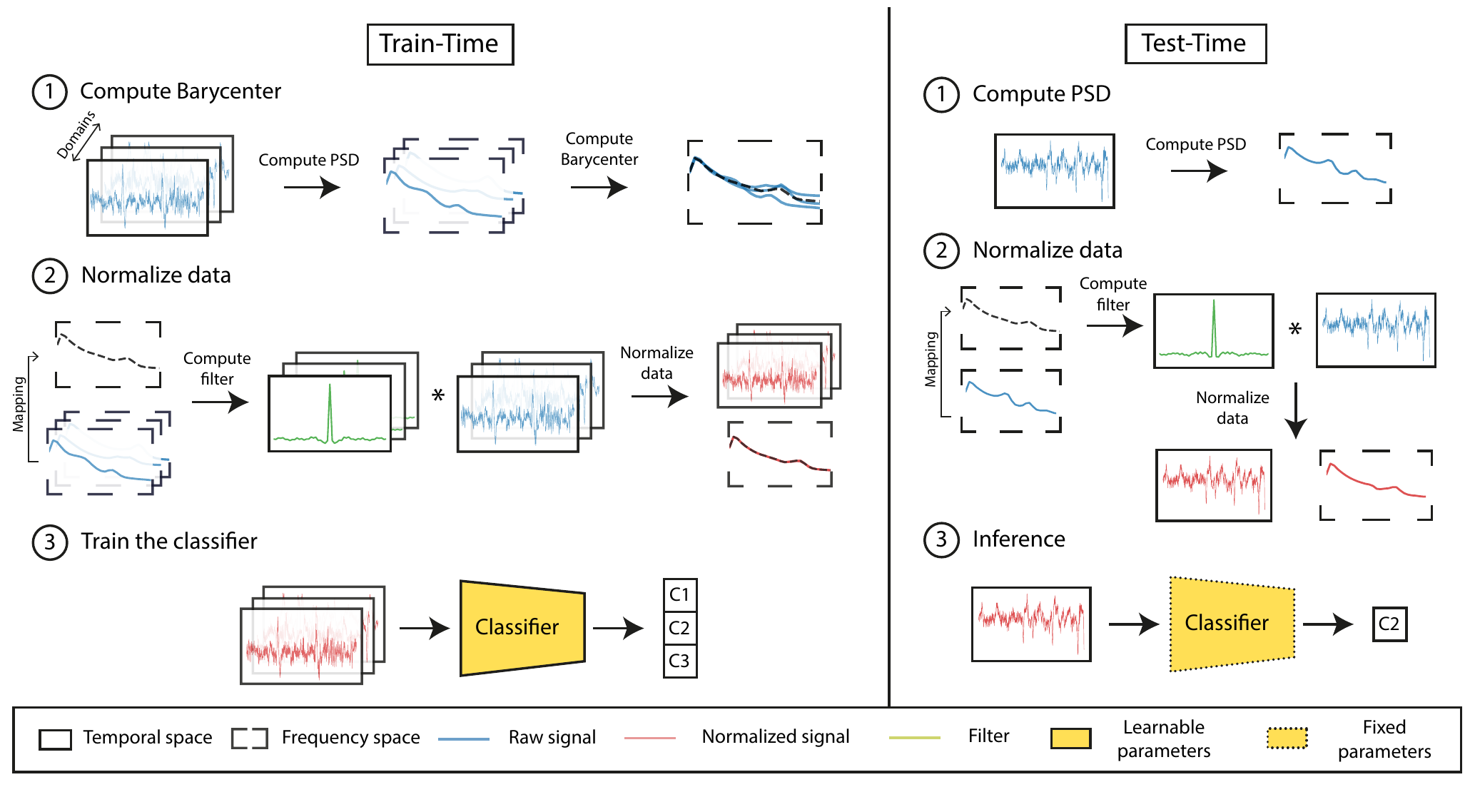}\vspace{-3mm}
    \caption{Illustration of the \texttt{CMMN} approach. At train-time the Wasserstein barycenter is estimated from 3 subjects/domains. The model is learned on normalized data. At test time the same barycenter is used to normalize test data and predict.}
    \label{fig:CMMN}
\end{figure}

\paragraph{Notations}
Vectors are denoted by small cap boldface letters (\eg $\bx$), and matrices are denoted 
by large cap boldface letters (\eg $\bX$).
% For a matrix $\bX$, rows and columns are denoted by $\bX_{i.}$ and $\bX_{.i}$. For a vector $\bx$ the $ith$ element is denoted by $\bx_i$.
% Some operations on matrices and vectors are considered element-wise: $\log$,
% $\exp$, division and square root. 
The element-wise product is denoted 
by $\odot$. The element-wise power of $n$ is denoted by $\cdot^{\odot n}$.
$\intset{K}$ denotes the set $\{1, ..., K\}$. $|.|$ denotes the absolute value.
The discrete convolution operator between two signals is denoted as $*$.
Any parameters written with a small $k$ (\eg $\bX_k$ or $\bx_i^k$) is related to the source domain
$k$ with $1\leq k\leq K$. Any parameters written with a small t (\eg $\bX_t$ or $\bx_i^t$)
is related to the target domain. 
%If nothing is specified, the parameter is
%considered in a general case (\ie both source and target). 
% \rf{RF: TODO check notations and add new one
% (pointwize power for instance))}

\section{Signal adaptation with Optimal Transport}
\label{sec:OT}
In this section, we first provide a short introduction to optimal transport
between Gaussian Distributions, and then discuss how those solutions can be
computed efficiently on stationary Gaussian signals, exhibiting a new
closed-form solution for Wasserstein barycenters. 
%Finally we introduce the proposed Convolutional Monge Mapping Normalization (\texttt{CMMN}) method that builds on those results.

\subsection{Optimal Transport between Gaussian distributions}

\paragraph{Monge mapping for Gaussian distributions} Let two Gaussian
distributions $\mu_s = \mathcal{N}(\bm_s, \bSigma_s)$ and $\mu_t =
\mathcal{N}(\bm_t, \bSigma_t)$, where $\bSigma_s$ and $\bSigma_t$ are symmetric
positive definite covariances matrices. OT between Gaussian distributions is
one of the rare cases where there exists a closed-form solution. The OT cost,
also called the Bures-Wasserstein distance when using a quadratic ground metric, is
\cite{bhatia2019bures,peyre2020computational}
\begin{equation}
    \mathcal{W}^2_2(\mu_s, \mu_t) = \Vert \bm_s - \bm_t \Vert_2^2 + \text{Tr}\left(\bSigma_s + \bSigma_t - 2 \left(\bSigma_t^{\frac{1}{2}} \bSigma_s \bSigma_t^{\frac{1}{2}}\right)^{\frac{1}{2}}\right) \; ,
    \label{eq:bures_wass}
\end{equation}
where the second term is called the Bures metric \cite{Forrester_2015} between
positive definite matrices. The OT mapping, also called  Monge mapping, can be
expressed as the following affine function :
\begin{equation}
    m(\bx) = \bA (\bx - \bm_s) + \bm_t, \quad \text{with} \quad  \bA = \bSigma_s^{-\frac{1}{2}}\left( \bSigma_s^{\frac{1}{2}} \bSigma_t\bSigma_s^{\frac{1}{2}} \right)^{\frac{1}{2}}  \bSigma_s^{-\frac{1}{2}} = \bA^\trans\;.
    \label{eq:monge_map}
\end{equation}
In practical applications, one can estimate empirically the means and covariances of the two distributions and plug them into the equations above. Interestingly, in this case, the concentration of the estimators has been shown to be in $O(N^{-1/2})$, where $N$ is the number of samples, for the divergence estimation  \cite{nadjahi2021fast} and for the mapping estimation  \cite{flamary2020concentration}. This is particularly interesting because optimal transport in the general case is known to be very sensitive to the curse of dimensionality with usual concentrations in $O(N^{-1/D})$ where $D$ is the dimensionality of the data \cite{fournier2015rate}.
% with:
% \begin{equation*}
%     \bA = \bSigma_s^{-\frac{1}{2}}\left( \bSigma_s^{\frac{1}{2}} \bSigma_t\bSigma_s^{\frac{1}{2}} \right)^{\frac{1}{2}}  \bSigma_s^{-\frac{1}{2}} = \bA^\trans\;.
% \end{equation*}

% \begin{equation*}
% \mathcal{B}(\bSigma_s, \bSigma_t)^2 = \text{Tr}(\bSigma_s + \bSigma_t - 2 \left(\bSigma_t^{\frac{1}{2}} \bSigma_s \bSigma_t^{\frac{1}{2}}\right)^{\frac{1}{2}}) \; .
% \end{equation*}
\paragraph{Wasserstein barycenter between Gaussian distributions} The Wasserstein barycenter that searches for an average distribution can also be estimated between multiple Gaussian distributions $\mu_k$. This barycenter $\bar\mu$ is expressed as
\begin{equation}
    \bar\mu  = \underset{\mu}{\text{arg min}} \; \frac{1}{K} \sum_{k=1}^{K} \mathcal{W}_2^2(\mu, \mu_k) \; .
    \label{eq:bw_bary}
\end{equation}
Interestingly, the barycenter is still a Gaussian distribution $\bar\mu =
\mathcal{N}(\bar\bm, \bar\bSigma)$ \cite{agueh2011bary}. Its mean
$\bar\bm=\frac{1}{K}\sum_k\bm_k$ can be computed as an average of the means of
the Gaussians, yet there is no closed-form for computing the covariance
$\bar\bSigma$. In practical applications, practitioners often use the following
optimality condition from \cite{agueh2011bary}
\begin{equation}
    \bar\bSigma = \frac{1}{K}\sum_{k=1}^{K} \left( \bar\bSigma^{\frac{1}{2}} \bSigma_k\bar\bSigma^{\frac{1}{2}} \right)^{\frac{1}{2}} \;,
    \label{eq:bw_fixed_point}
\end{equation}
in a  fixed point algorithm that consists in updating the covariance \cite{mroueh2019wasserstein} using equation \eqref{eq:bw_fixed_point} above until convergence. Similarly to the distance estimation and mapping estimation, statistical estimation of the barycenter from sampled distribution has been shown to have a concentration in $O(N^{-1/2})$ \cite{kroshnin2021statistical}.

% for $l = 0, \dots, L - 1$ where $L$ is the number of iteration before convergence.
 
\subsection{Optimal Transport between Gaussian stationary signals}

We now discuss the special case of OT between stationary Gaussian random signals.
In this case, the covariance matrices are Toeplitz matrices. 
A classical assumption
in signal processing is that for a long enough signal, one can assume that the signal is periodic, and therefore the covariance matrix is a
Toeplitz circulant matrix.
The circulant matrix can be diagonalized by the Discrete Fourier
Transform (DFT) $\bSigma = \bF \text{diag}(\bp) \bF^*$, with $\bF$ and $\bF ^*$
the Fourier transform operator and its inverse, and $\bp$
the Power Spectral Density (PSD) of the signal. 

\paragraph{Welch PSD estimation} As discussed above, there is a direct relation between
the correlation matrix of a signal and its PSD. In practice, one has access to a
matrix $\bX \in \mathbb{R}^{N \times T}$ containing $N$  signals of length $T$
that are samples of Gaussian random signals. In this case, the PSD $\bp$ of one
random signal can be estimated using the Welch periodogram method \cite{Welch:67} with $\hat \bp =
\frac{1}{N} \sum_{i=1}^N |\bF \bx_i|^{\odot 2}$ where $|\cdot|$ is the element-wise magnitude of the complex vector.

\paragraph{Monge mapping between two Gaussian signals} The optimal transport mapping
between Gaussian random signals can be computed from \eqref{eq:monge_map} and
simplified by using the Fourier-based eigen-factorization of the covariance matrices. The
mapping between two stationary Gaussian signals of PSD respectively $\bp_s$ and 
$\bp_t$ can be expressed with the following convolution \cite{flamary2020concentration}:
\begin{equation}
    m(\bx) = \bh * \bx\;, \quad \text{with} \quad \bh = \bF^* \left(\bp_t^{\odot\frac{1}{2}}\odot \bp_s^{\odot-\frac{1}{2}} \right)\;.
    \label{eq:monge_map_conv}
\end{equation}
% where $*$ is the discrete convolution.
Note that the bias terms $\bm_s$ and $\bm_t$ do not appear above because one can suppose in practice that the signals are centered (or have been high-pass filtered to be centered), which means that the processes are zero-mean.
The Monge mapping is a convolution with a filter $\bh$ that can be efficiently
estimated from the PSD of the two signals. It was suggested in
\cite{flamary2020concentration} as a
Domain Adaptation method to compensate for convolutional shifts between datasets. For instance, it enables compensation for variations such as differing impedances, which can be physically modeled as convolutions \cite{oppen1996elec}.
Nevertheless, this paper focused on theoretical results, and no evaluation on real signals is reported. Moreover, the method proposed in \cite{flamary2020concentration} cannot be used between
multiple domains (as explored here).
This is why in the following we propose a novel closed-form for estimating a barycenter of Gaussian signals that we will use for the normalization of our \texttt{CMMN} method.

\paragraph{Wasserstein barycenter between Gaussian signals}
As discussed above, there is no known closed-form solution for a Wasserstein barycenter between Gaussian distributions.
Nevertheless, in the case of stationary Gaussian signals, one can exploit the structure of the covariances to derive a closed-form solution that we propose below.
% Now, we supposed there are $K$ domains such that  $\bX_d \in \mathbb{R}^{M_d \times N_T}$  $\forall d \in [1, N]$. The distribution barycenter $\mu_b$ of the sources is given by the minimizer of:
% \begin{align*}
%     F(\mu_b)  &= \frac{1}{2N} \sum_{d=0}^{N} \mathcal{W}_2^2(\mu_b, \mu_d) \\
%     &= \frac{1}{2N} \sum_{d=0}^{N}  \Vert \bm_b - \bm_i \Vert^2 + \text{tr}(\bSigma_b + \bSigma_d - 2 \left(\bSigma_b^{\frac{1}{2}} \bSigma_d \bSigma_b^{\frac{1}{2}}\right)^{\frac{1}{2}}) \\
%     & =  \frac{1}{2N}\sum_{d=0}^{N} \Vert \bm_b - \bm_d \Vert^2 +  \Vert \bp_b + \bp_d - 2 (\bp_b \bp_d)^{\frac{1}{2}}\Vert_1 \;.
% \end{align*}
\begin{lemma}
Consider $K$ centered stationary Gaussian signals of PSD $\bp_k$ with $k \in \intset{K}$,
the Wasserstein barycenter of the $K$ signals is a centered stationary Gaussian signal of
PSD $\bar\bp$ with:
% Considering $K$ stationary signal following Gaussian distribution. With the same approximation in the lemma \ref{lemma1}, it exists now a closed-form for Wasserstein barycenter $\bar\mu = \mathcal{N}(\bar\bm, \bar\bSigma)$:
\begin{equation}
    \label{eq:barycenter}
    \bar\bp =\left(\frac{1}{K} \sum^{K}_{k=1} \bp_k^{\odot\frac{1}{2}}\right)^{\odot2} \; .\\
  %  \bar\bm &=\frac{1}{N} \sum^{K}_k \bm_k\;.
\end{equation}
\end{lemma}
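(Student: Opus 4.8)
The plan is to exploit the fact that, under the periodicity assumption stated above, each covariance $\bSigma_k = \bF\,\text{diag}(\bp_k)\,\bF^*$ is circulant, so that all $K$ covariances share a single eigenbasis given by the Fourier operator $\bF$. Since every signal is centered we have $\bar\bm = \frac1K\sum_k \bm_k = \mathbf{0}$, and by \cite{agueh2011bary} the barycenter is a centered Gaussian $\bar\mu = \mathcal{N}(\mathbf{0},\bar\bSigma)$ whose covariance is the \emph{unique} positive definite solution of the fixed-point optimality condition \eqref{eq:bw_fixed_point}. I would therefore adopt a guess-and-verify strategy: propose that $\bar\bSigma$ is itself circulant, i.e. $\bar\bSigma = \bF\,\text{diag}(\bar\bp)\,\bF^*$ for some nonnegative PSD $\bar\bp$, substitute this ansatz into \eqref{eq:bw_fixed_point}, and show the equation can be satisfied; uniqueness then forces this circulant matrix to be the barycenter.

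The core of the argument is a direct computation in the Fourier domain, where every matrix is simultaneously diagonalized and all matrix functions act entrywise on the PSDs. Using $\bF^*\bF = \mathbf{I}$ together with $\bar\bSigma^{\frac{1}{2}} = \bF\,\text{diag}(\bar\bp^{\odot\frac12})\,\bF^*$ (the matrix square root of a circulant matrix is circulant with entrywise-square-rooted PSD), I would first simplify
\begin{equation}
    \bar\bSigma^{\frac{1}{2}}\,\bSigma_k\,\bar\bSigma^{\frac{1}{2}} = \bF\,\text{diag}\!\left(\bar\bp \odot \bp_k\right)\bF^* \; .
\end{equation}
Taking the square root of this circulant matrix and substituting into \eqref{eq:bw_fixed_point} collapses the matrix identity into the single entrywise equation
\begin{equation}
    \bar\bp = \bar\bp^{\odot\frac{1}{2}} \odot \left(\frac{1}{K}\sum_{k=1}^{K} \bp_k^{\odot\frac{1}{2}}\right)\; .
\end{equation}

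Finally I would solve this identity. Because each $\bp_k$ is a PSD with strictly positive entries (the covariances are positive definite), $\bar\bp^{\odot\frac12}$ has positive entries and can be cancelled on both sides, yielding $\bar\bp^{\odot\frac12} = \frac1K\sum_k \bp_k^{\odot\frac12}$ and hence the claimed formula \eqref{eq:barycenter} after an entrywise squaring. The one genuinely delicate point, and the step I expect to require the most care, is justifying that the barycenter covariance may be taken circulant rather than an arbitrary symmetric positive definite matrix. I would close this gap by appealing to uniqueness of the Gaussian Wasserstein barycenter: the circulant matrix constructed above is a bona fide positive definite solution of the optimality condition \eqref{eq:bw_fixed_point}, and since that condition characterizes the barycenter uniquely, the constructed circulant matrix must coincide with $\bar\bSigma$. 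The remaining manipulations are routine once one records that functions of a circulant matrix stay circulant and act entrywise on the PSD.
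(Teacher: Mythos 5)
Your proof is correct and follows essentially the same route as the paper's: reduce the fixed-point optimality condition \eqref{eq:bw_fixed_point} to an entrywise equation on the PSDs via the Fourier diagonalization of circulant covariances, cancel $\bar\bp^{\odot\frac{1}{2}}$, and square. You are in fact slightly more careful than the paper, which implicitly assumes the barycenter covariance is itself circulant; your guess-and-verify ansatz combined with the uniqueness of the Gaussian Wasserstein barycenter closes that gap explicitly.
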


\begin{proof}
The proof is a direct application of the optimality condition
\eqref{eq:bw_fixed_point} of the barycenter.  The factorized
covariances in \eqref{eq:bw_fixed_point}, the matrix square root and the inverse
can be simplified as element-wise square root and inverse, recovering equation
\eqref{eq:barycenter}. We provide a detailed proof in the appendix.
\end{proof}

The closed-form solution is notable for several reasons. First, it is a
novel closed-form solution that avoids the need for iterative fixed-point algorithms
and costly computations of matrix square roots.
Secondly, the utilization of the Wasserstein space introduces an alternative approach to the conventional $\ell_2$ averaging of Power Spectral Density (PSD). This approach involves employing the square root, like the Hellinger distance \cite{bhatia2019matrix}, potentially enhancing robustness to outliers.
% Second, due to the Wasserstein space, it leads to an alternative to standard $\ell_2$ averaging of PSD that uses the square root as in the Hellinger distance \cite{bhatia2019matrix}, which is potentially more robust to outliers. %Note that the barycenter is actually equivalent to a barycenter using the squared Hellinger distance between PSDs which recovers
%known relations between the Bures metric and the Hellinger distance
%\cite{bhatia2019matrix}. 
Note that while other estimators for PSD averaging could be used this choice is motivated here by the fact that we use OT mappings and that the barycenter above is optimal \emph{w.r.t.} those OT mappings.

\section{Multi-source DA with Convolutional Monge Mapping Normalization}
\label{sec:CMMN}
We now introduce the core
contribution of the paper, that is an efficient method that allows to adapt to the
specificities of multiple domains and train a predictor that can generalize to
new domains at test time without the need to train a new model.
% \subsection{\texttt{CMMN} for multi-source domain adaptation} 
We recall here that we
have access to $K$ labeled source domains $(\bX_k,\by_k)_k$. We assume that each
domain contains $N_k$ centered signals $\bx_i^k$ of size $T$.

\paragraph{ {\texttt{CMMN}} at train time} The proposed approach, illustrated in
Figure \ref{fig:CMMN} and detailed in Algorithm \ref{alg:cmmn_train},
 consists of the
following steps: \vspace{-2mm}
{ 
\begin{enumerate}[noitemsep]\setlength{\itemsep}{1pt}
  \item Compute the PSD $\hat\bp_k$ for each source domain and use them to estimate the
  barycenter $\hat\bp$ with \eqref{eq:barycenter}.
  \item Compute the convolutional mapping $\bh_k$ \eqref{eq:monge_map_conv} between each source domain and
  the barycenter $\bar\bp$.
  \item Train a predictor on the normalized source data using the mappings $\bh_k$: \vspace{-.5mm}
  \begin{equation}
    \min_f\quad  \sum_{k=1}^{K} \sum_{i=1}^{N_k} L\left(y^{k}_i,f(\bh_k * \bx^{k}_i)\right)\;.
    \label{eq:cmm}
\end{equation} \vspace{-3mm}
\end{enumerate}
} In order to keep notations simple, we consider here the case for a single sensor, but \texttt{CMMN}
can be extended to multi-sensor data by computing independently the Monge mapping for each sensor.
Note that steps 1 and 2 can be seen as a pre-processing and are independent of
the training of the final predictor, so \texttt{CMMN} can be used as 
pre-processing on any already existing learning framework.
% One reasonable assumption for this is that for a large number of source domains, a new target domain would have a small impact on the barycenter and one can suppose that it will remain similar.
% Another notable aspect of the method is that it does not need to have access to the source and target data together. It is therefore compatible with test time and source-free adaption.

% Let's now consider a multi-source multi-target domain adaptation problem. $N_s$ source domains are available with labels and $N_t$ target domains are available but without labels. The \texttt{CMMN} approach consists of the following steps:
% \begin{itemize}
%     \item Compute the barycenter with the \autoref{eq:barycenter} on the source data.
%     \item Project all source signals on the barycenter.
%     \item Train an estimator on the transformed signals to minimize the following equation:
%     \begin{equation}
%         \min_f\quad  \sum_{d=0}^{N_{s}} \sum_{i=0}^{M_{s_d}} L(y^{s_d}_i,f(T_d(x^{s_d}_i)))\;.
%         \label{eq:cmm}
%     \end{equation}
%     \item For any of the new $N_t$ subject, project the signals on the barycenter and do the inference.
% \end{itemize}

\begin{wrapfigure}[19]{r}{0.45\textwidth}
   \vspace{-4mm}
  \begin{minipage}{.45\textwidth}
    \begin{algorithm}[H]
     
     \DontPrintSemicolon
   \KwInput{$f$, $F$, $\{\bX_k\}_k^K$}
   %\KwOutput{$f$, $\bar\mu$}
   \For{$k=1 \to K$}
     {   
        %  $\tilde \bX_k \leftarrow \text{Window}(\bX_k)$\;
        %  $\hat \bp_k \leftarrow \frac{1}{M_k} \sum_{i=1}^{M_k} |\bF
        %  \tilde\bx^k_i|^{\odot 2}$    
         $\hat \bp_k \leftarrow $ Welch PSD estimation of $\bX_k$\; % with window size $F$\;
     }
     $\bar\bp$ $\leftarrow$ Compute barycenter with \eqref{eq:barycenter}\;
     \For{$k=1 \to K$}
     {   
         $\bh_k$ $\leftarrow$ Compute mapping from \eqref{eq:monge_map_conv}
     }
     $\hat f\leftarrow$ Train on adapted data with \eqref{eq:cmm}\;
     \Return $\hat f$, $\bar\bp$
     
  \caption{Train-Time \texttt{CMMN} \label{alg:cmmn_train}}
  \end{algorithm} 
  \end{minipage}
  %\hfill
  \\
  \begin{minipage}{.465\textwidth}
    \begin{algorithm}[H]
     \DontPrintSemicolon
   \KwInput{$\hat f$, $\bar\bp$, $\bX_t$}
   %\KwOutput{$\hat \by$}
  %   $\tilde \bX_t \leftarrow \text{Window}(\bX_t)$\;
  % $\hat \bp_t\leftarrow \frac{1}{M_t} \sum_{i=1}^{M_t} |\bF \tilde\bx^t_i|^{\odot
  % 2}$\;    
   $\hat \bp_t\leftarrow $  Welch PSD estimation of $\bX_t$\;
     $\bh_t$ $\leftarrow$ compute mapping from \ref{eq:monge_map_conv}\;
    \Return  $\hat \by_t =\hat f(\bh^t_k * \bX_k^t)$
  \caption{Test-Time \texttt{CMMN} \label{alg:cmmn_test}}

  \end{algorithm} 
  \end{minipage}
\end{wrapfigure}
\paragraph{\texttt{CMMN} at test time} At test time, one has access to a new unlabeled
target domain $(\bX_t)$ and the procedure, detailed in Algorithm
\ref{alg:cmmn_test}, is very simple.  One can estimate the
PSD $\hat\bp_t$ from the target domain unlabeled data and compute the mapping $\bh_t$ to the barycenter $\bar\bp$. Then the final
predictor for the new domain is $f^t(\bx_t)=f(\bh_t*\bx_t)$, that is the composition of the domain-specific
mapping to the barycenter of the training data, and the already trained predictor $f$. This is a very efficient test-time adaptation approach that only requires an estimation of the
target domain PSD that can be done with few unlabeled target samples. Yet, it allows for 
a final predictor to adapt to the spectral
specificities of new domains thanks to the convolutional Monge normalization.

\paragraph{Numerical complexity and filter size} The numerical complexity of the
method is very low as it only requires to compute the PSD of the domains and
the barycenter in $O\left(\sum_k^KN_kT\log(T)\right)$. 
It is also important to note that in practice,
the method allows for a size of normalization filter $F$ that is different
(smaller) than the size $T$ of the signals. This consists in practice in
estimating the PSD using Welch periodogram on signal windows of size $F\leq T$ that can be extracted from the raw signal or from already
extracted fixed sized source training samples.
% \tg{To compute the PSD, the signal of size $T$ is cut into window of size $F$ with possible overlap. To keep the notations simple, we consider windows without overlap in the following, but in practice an overlap can be used. This step transform a matrix $\bX \in \mathbb{R}^{N \times T}$ to a matrix $\tilde \bX \in \mathbb{R}^{M \times F}$ with $M=T/F \times N$.}
Indeed, if we use $F=T$ then the estimated
average PSD can be perfectly adapted by the mapping, yet using many
parameters can lead to overfitting which can be limited using a smaller filter size $F\leq
T$, %which consists in performing the Welch PSD estimation with a different
%window size,
% one can reduce the number of parameters and avoid overfitting.
In fact, it is interesting to note that the special case $F=1$ boils down
to a scaling of the whole signal similar to what is done with a simple z-score operation. This means
that the filter size $F$ is an hyperparameter that can be tuned on the data.
From an implementation point of view, one can use the Fast Fourier Transform (FFT) to compute the convolution (for large
filters) or the direct convolution, which both have very
efficient implementation on modern hardware (CPU and GPU).

\paragraph{Related Works}

\texttt{CMMN} is a computationally efficient approach that benefits from a wide array of recent
results in optimal transport and domain adaptation. The idea of using optimal
transport to adapt distributions was first proposed in \cite{courty2014domain}.
The idea to compute a Wasserstein barycenter of distributions from multiple
domains and use it for adaptation was introduced in \cite{montesuma2021wasserstein}. Both of those approaches
have shown encouraging performances but were strongly limited by the numerical
complexity of solving the OT problems  (mapping and barycenters) on large datasets ($O(\sum_{k=1}^{K} N_k^3\log(N_k))$ or at least quadratic in $N_k$ for entropic OT). \texttt{CMMN} does not suffer from this limitation as it relies on
both Gaussian and stationary signals assumptions that allow to estimate all the
parameters for a complexity $O(\sum_{k=1}^{K} N_k\log(N_k))$ linear with the number of samples $N_k$, and quasi-linear in dimensionality of the signals $T$.
The use of Gaussian modeling and convolutional Monge
mapping for DA was first proposed in \cite{flamary2020concentration} but the
paper was mostly theoretical and only focused on the standard 2-domain DA problem
whereas \texttt{CMMN} handles multi-source and provides test-time adaptation.\\
Finally, \texttt{CMMN} also bears resemblance with the convolutional normalization layer
proposed in  \cite{liu2022convolutional} that also uses the FFT for fast
implementation. Yet,  it needs to be trained using labeled source and target data, which prevents its use on DA at test time on new unseen domains.

\section{Numerical experiments}
\label{sec:expe}
In this section, we evaluate \texttt{CMMN} on the clinical application of sleep stage
classification from EEG signals with \cite{chambon2017deep,xsleepnet}. In the following one domain can be a session of one subject (\ie in the domain-specific experiment in \autoref{sec:subjectspecific}) or one subject (\ie all other experiments). We first
compare \texttt{CMMN} to classical normalization methods and to subject-specific
normalizations. Next, we illustrate the behavior of \texttt{CMMN} when used with different
neural network architectures, and study the effect of \texttt{CMMN} on low-performing
subjects. Finally, we study the use of \texttt{CMMN} in conjunction
with domain adaptation approaches. In order to promote research reproducibility,
code is available on github \footnote{\url{https://github.com/PythonOT/convolutional-monge-mapping-normalization}}, and the datasets used are publicly available.
% , including dataset split and pre-processing will be available
% publicly on Github upon publication of the paper.

% Polysomnography (PSG) recordings  contain EEG data. and they are the clinical
% signals collected to diagnose and study sleep disorders. 

% Below \texttt{CMMN} is compared to different methods: normalization, 
% subject specific normalization and domain adaptation method. First, in the section
% \ref{sec:setup} we introduce the three datasets used for the experiments, the neural 
% network architecture used for sleep staging and all the hyperparameter set for training.
% In the section \ref{sec:normalization}, section \ref{sec:subjectspecific} and 
% section \ref{sec:DA}, we compare \texttt{CMMN} to classic 
% sleep staging normalization of the data, subject-specific normalization 
% and domain adaptation methods respectively.

\subsection{Experimental setup}
\label{sec:setup}

%\subsubsection{Sleep staging data and pre-processing}
\label{sec:dataset}

\paragraph{Sleep staging datasets} We use three publicly available datasets: Physionet (a.k.a SleepEDF)
\cite{goldberger2000physiobank}, SHHS \cite{SHHS,SHHS2} and MASS \cite{MASS}. On
all datasets, we want to perform sleep staging from 2-channels EEG signals. The
considered EEG channels are 2 bipolar channels, Fpz-Cz and Pz-Cz that have been
known to provide discriminant information for sleep staging. Note that those
channels were not available on the SHHS dataset, and we used the C3-A2 and C4-A1
instead. This brings another level of variability in the data. More details
about the datasets are available as supplementary material.

\paragraph{Pre-processing} For all experiments, we keep 60 subjects of each
dataset and the two EEG channels. The same
pre-processing is applied to all sensors. First, the recordings are
low-pass filtered with a 30\,Hz cutoff frequency, then the signals are resampled to 100\,Hz. 
Then we extract 30\,s epochs 
% with 50\% overlap each 
having a unique class.
This pre-processing is common in the field~\cite{chambon2017deep, Stephansen_2018}. 
All the data extraction and the pre-processing steps are done with MNE-BIDS
\cite{Appelhoff2019} and MNE-Python \cite{GramfortEtAl2013a}.

%\paragraph{Differences in the dataset} Each of the dataset are different due to the difference of hardware and cohort variability. But as the description shows one dataset can be much more different than other. One big difference in these dataset is the electrodes for the recordings. MASS and Physionet use the same electrodes when SHHS use two different electrodes. It is possible to predict that the adaptation between MASS/Physionet and SHHS will be harder than MASS and Physionet and even harder to adaption whithin the same dataset.

\paragraph{Neural network architectures and training}
\label{sec:network}
Many neural network architectures dedicated to sleep staging have been proposed \cite{Perslev2021Usleep, Supratak_2017, Eldele2021atten}.
In the following, we choose to focus on two architectures: \texttt{Chambon} \cite{chambon2017deep} and \texttt{DeepSleepNet} \cite{Supratak_2017}. For both architectures, we use the implementation from {braindecode} \cite{braindecode}.
\texttt{Chambon} is an end-to-end neural network proposed to deal with multivariate time series and is composed of two convolutional layers with non-linear activation functions.
\texttt{DeepSleepNet} is a more complex model with convolutional layers, non-linear activation functions, and a Bi-LSTM to model temporal sequences.\\
% \texttt{Chambon} is a end-to-end neural network proposed to dea\\l with multivariate time series. In our case we do not consider any temporal context as proposed in \cite{chambon2017deep}. The architecture is composed of  two convolutional layers with non-linear activation function. The implementation of the network taken from the braindecode software \cite{braindecode}.
% \texttt{DeepSleepNet} is a end-to-end neural network proposed to deal with sequence of time series or multivariate time series. In our case we do not consider sequence of time series and we skip the phase of pre-training with sequences as it turned out to have no clear benefit on the data considered. The architecture is composed of convolutional layer, non linear activation function and a Bi-LSTM. The implementation of the network used is also the one of braindecode \cite{braindecode}.
We use the Adam optimizer  with a learning rate of $10^{-3}$ for
\texttt{Chambon} and $10^{-4}$ with a weight decay of $1 \times 10^{-3}$ for
\texttt{DeepSleepNet}. The batch size is set to $128$ and the early stopping is
done on a validation set corresponding to {$20\%$ of the subjects in the training set} with 
a patience of $10$ epochs. For all methods, we optimize the cross entropy with
class weight, which amounts to optimizing for the balanced accuracy (BACC).\\
% To evaluate the model the balanced accuracy score is used.
%\paragraph{Performance measures} 
Various metrics are commonly used in the field such as Cohen's kappa, F1-score, or Balanced Accuracy (BACC) \cite{xsleepnet,chambon2017deep}. We report here the BACC score as it is a metric well adapted to unbalanced classification tasks such as sleep staging. We also report in some experiments the gain the balanced accuracy, when using \texttt{CMMN}, of the 20\% worst performing domains/subjects in the target
domain denoted as $\Delta$BACC@20 in the following.
% We compare methods with and without 
% adaptation (whether \texttt{CMMN} or DA). 
% The average gain in BACC on those subject is
% denoted as $\Delta$BACC@20 in the following.
%For all experiments, we report BACC and
%$\Delta$BACC@20 averaged over 10 runs with different data splits and random seeds.
% The balanced accuracy is defined as the
% average of the accuracy of each class. It is defined as:
% \begin{equation*}
%     \text{Balanced accuracy} = \frac{1}{N_c} \sum_{c=0}^{N_c} \frac{TP_c}{TP_c + FN_c} \;,
% \end{equation*}

\paragraph{Filter size and sensitivity analysis for \texttt{CMMN}}
\label{sec:sensitivity}
Our method has a unique hyperparameter that is the size of the filter $F$ used.
% for the convolutional Monge mapping toward the barycenter (and also barycenter estimation).
In our experiments, we observed that, while this
parameter has some impact, it is not critical and has quite a wide range of values ($[8,512]$) that leads to
systematic performance gains. We provide in supplementary material a sensitivity analysis of the performance for different adaptation scenarios (pairs of datasets). It shows that the
value $F=128$ is a good trade-off that we used below for all experiments.

% To compare the effect of filter size, we run several experiments. Too small filter sizes prevents to match well enough two distributions. On the other hand, for distant distributions the larger the filter size the better the adaptation.
% % But these same filter sizes can prevent a good adaptation for close distributions.
% A complete study a parameter selection is proposed in the supplementary material. In what follows, the filter size is set to 128 even if this parameter is not critical. A large interval of values are as good.

\subsection{Comparison between different normalizations}
\label{sec:normalization}
We now evaluate the ability of \texttt{CMMN} to adapt to new subjects within
and across datasets and also between two sessions of the same subject.

%In the sleep stage classification, a normalization of the data is often done to make data more comparable. Indeed, a night of sleep lasts about 8 hours. Conditions tend to vary during this long period of time, such as body temperature or humidity, patient position or even loss of contact of certain electrodes. In this section, we compare first the standard normalization with \texttt{CMMN} and in a second time we compare it to subject-specific trainable layer.

\paragraph{Classical normalizations} We first compare \texttt{CMMN} to several classical normalization strategies. We compare the use of raw data \cite{Supratak_2017} letting the neural network learn the normalization from the data (\texttt{None}), standard normalization of each 30\,s samples \cite{chambon2017deep} that discard global trend along the session (\texttt{Sample}) and finally normalization by session that consists in our case to perform normalization independently on each domain (\texttt{Session}) \cite{apicella2023effects}.
% to align each step of the night. However, this method discards any global trend during the all session \cite{chambon2017deep}. 
% Others normalize the data by recording session (by night), which allows to keep the global trends but is much more sensitive to transient artifacts as the loss of contact of an electrode\cite{apicella2023effects}. 
% Others choose not to normalize the data at all, which allows to keep all the subtleties of the signals but requires a more complex network to normalize the data during the training \cite{Supratak_2017} (\ie with some batch normalization).  
% For each experiment, we split randomly for each dataset  subjects for training and 20\% for testing.
We train the \texttt{Chambon} neural network on the source data of one dataset and evaluate on the target data of all other datasets for different splits.\\% To compare all this possible normalization we implement several different adaptations across the three datasets described in section \ref{sec:dataset}. For each parameter, 10 training runs are done over data from source dataset, and then test over data from the target dataset. If the same dataset is taken for the source and the target then the number of subjects per domain is 40 and 20 respectively. If the source and the target are two different datasets, each domain contains 60 subjects. 
\begin{table}[t]
    \begin{minipage}{0.65\linewidth}
    \centering
    \tiny
    %\small
    \begin{tabular}{l|cccc}
    \toprule
    Datasets \textbackslash Norm. &                     \texttt{None} \cite{Supratak_2017}   &          \texttt{Sample} \cite{chambon2017deep}  &         \texttt{Session} \cite{apicella2023effects}&                     \texttt{CMMN} \\
    % Source->Target          &                          &                 &                 &                          \\
    \midrule
    MASS$\rightarrow$MASS   &           $73.9 \pm 1.4$ &  $75.1 \pm 1.0$ &  $76.0 \pm 2.4$ &  $\mathbf{76.2 \pm 2.2}$ \\
    Phys.$\rightarrow$Phys. &           $68.8 \pm 2.8$ &  $69.2 \pm 2.7$ &  $69.4 \pm 3.0$ &  $\mathbf{71.7 \pm 2.4}$ \\
        SHHS$\rightarrow$SHHS   &          $55.1 \pm 12.5$ &  $61.2 \pm 3.8$ &  $60.8 \pm 2.6$ &  $\mathbf{64.3 \pm 2.7}$ \\\hline
    MASS$\rightarrow$Phys.  &           $55.9 \pm 3.1$ &  $58.4 \pm 2.4$ &  $57.5 \pm 2.0$ &  $\mathbf{62.3 \pm 1.5}$ \\
    MASS$\rightarrow$SHHS   &           $45.8 \pm 3.3$ &  $41.8 \pm 3.6$ &  $37.4 \pm 3.6$ &  $\mathbf{47.6 \pm 4.0}$ \\
    Phys.$\rightarrow$MASS  &           $63.8 \pm 3.9$ &  $64.0 \pm 2.7$ &  $63.7 \pm 2.3$ &  $\mathbf{68.3 \pm 2.5}$ \\

    Phys.$\rightarrow$SHHS  &  $\mathbf{53.9 \pm 3.2}$ &  $45.6 \pm 2.1$ &  $47.9 \pm 1.8$ &           $51.6 \pm 1.8$ \\
    SHHS$\rightarrow$MASS   &           $48.7 \pm 4.8$ &  $57.0 \pm 2.8$ &  $51.8 \pm 6.4$ &  $\mathbf{64.5 \pm 2.8}$ \\
    SHHS$\rightarrow$Phys.  &           $52.6 \pm 4.2$ &  $55.0 \pm 2.7$ &  $52.4 \pm 4.1$ &  $\mathbf{58.3 \pm 1.7}$ \\
    \hline
    Mean                    &           $57.6 \pm 4.3$ &  $58.6 \pm 2.6$ &  $57.4 \pm 3.1$ &  $\mathbf{62.7 \pm 2.4}$ \\
    \bottomrule
    \end{tabular}
    \end{minipage}
    \begin{minipage}{0.35\linewidth}
    \includegraphics[width=\linewidth]{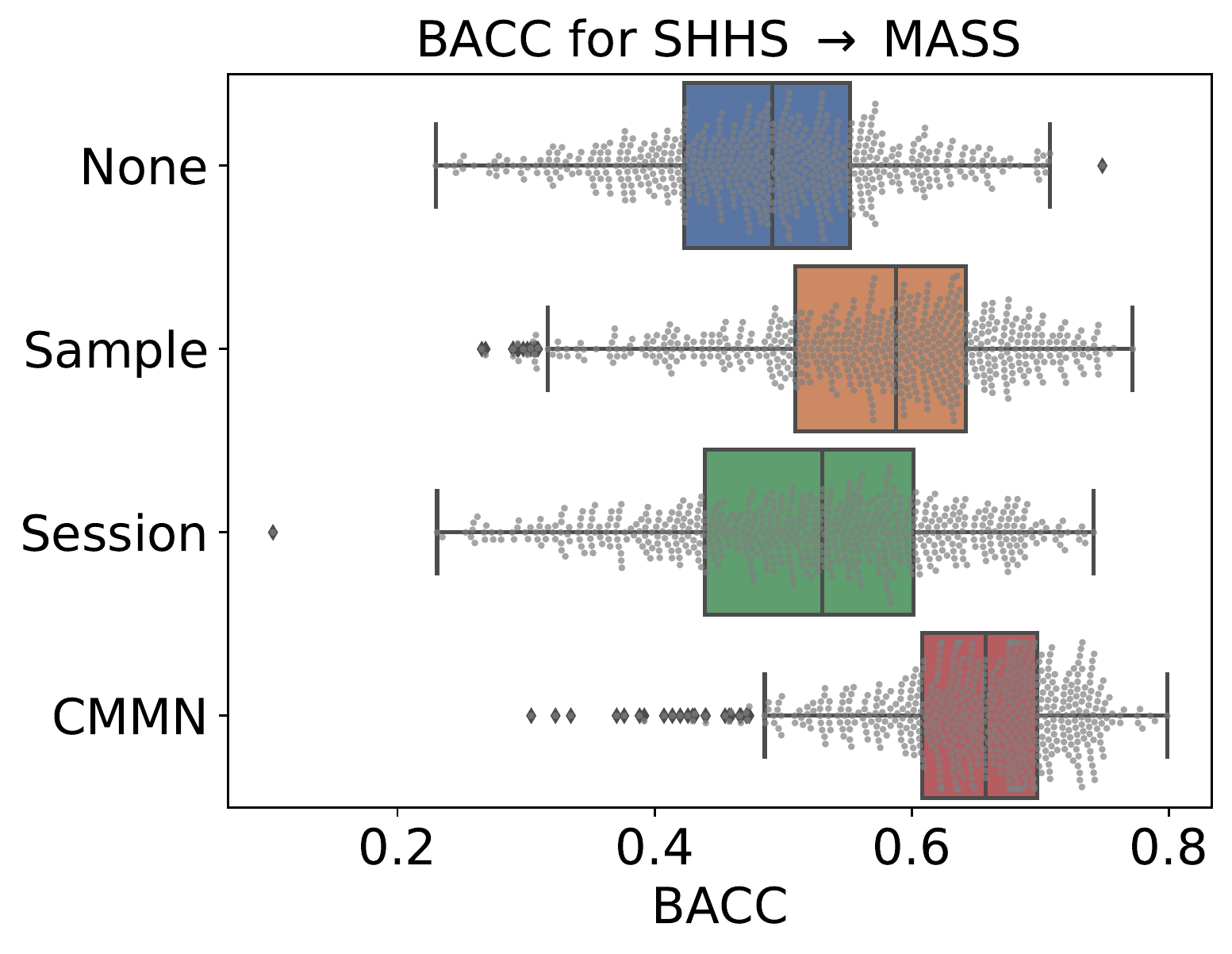}
    \end{minipage}
    \caption{Balanced accuracy (BACC) for different normalizations and different train/test dataset pairs (left). Boxplot for all normalization approaches on the specific pair SHHS$\rightarrow$MASS (right).  CMNN outperforms other normalizations.}
    \label{tab:normalization}
\end{table}The BACC for all dataset pairs and normalization are presented in the \autoref{tab:normalization}.
The three classical normalizations have similar performances with a slight edge for \texttt{Sample} on average. All those approaches are outperformed by \texttt{CMMN} in 8 out of 9 dataset pairs with an average gain of 4\% w.r.t. the best performing \texttt{Sample}.
This is also visible on the Boxplot on the right where the BACC for all subjects/domains (\ie points) is higher for \texttt{CMMN}.
Since the very simple \texttt{Sample} normalization is the best-performing competitor, we used it as a baseline called \texttt{No Adapt} in the following.

% provides also performance of each target domains with a significant shift to the right (higher BACC) for nearly all domains. 

%In fact, on average \texttt{CMMN} is 4\% higher than the standardization per \texttt{Sample}. In the figure provided in the \autoref{tab:normalization}, it is clear that the contribution is consistent on all subjects. The points on the \texttt{CMMN} boxplot are always above the points of the other normalization. From now on, when we refer to this method as \texttt{No Adapt}. It corresponds to the per-\texttt{Sample} normalization as it offers the best performances.

%

\paragraph{Domain specific normalization}
\label{sec:subjectspecific}
We have shown above that the \texttt{CMMN} approach allows to
better cope with distribution shifts than standard data normalizations.
This might be explained by the fact that \texttt{CMMN} normalization is domain/subject specific. This is why we now compare to several existing domain-specific normalizations.
% Some propose to learn convolutions and/or batch domain specific normalization \cite{liu2022convolutional}. Others try to integrate a layer allowing to learn a  domain specific tangent space mapping model \cite{kobler2022spd}. 
To do this we adapt the method  of Liu \etal
\cite{liu2022convolutional} which was designed for image classification.
We implemented domain-specific convolution layers (\texttt{Conv}), 
 batch normalization (\texttt{Norm}), or both (\texttt{ConvNorm}).
In practice, we have one layer per domain that is trained (jointly with the
predictor $f$) only on data from the corresponding domain.
% is specific to a d and will have its parameters 
% updated only according to the data coming from a single subject. 
The limit of domain-specific normalization is that all test domains must be represented in the training set. Otherwise, if a new domain arrives in the test
set, no layer specific to that domain will have been trained.\\
To be able to compare these methods to \texttt{CMMN}, we use for 
this section the Physionet dataset for which two sessions are 
available for some subjects. The first sessions are considered 
as the training set where the domains are the subjects and the second sessions
are split between 
the validation set (20\%) and the test set (80\%). The validation 
set is used to do the early stopping, and validate the
kernel size of the subject-specific convolution for \texttt{Conv} and \texttt{ConvNorm}. \\
\begin{table}[]
    \begin{minipage}{0.5\linewidth}
    \centering
    \begin{tabular}{l|c}
    \toprule
    % {} &            BACC \\
    Normalization &      BACC           \\
    \midrule
    \texttt{No Adapt}      &  $73.7 \pm 0.7$ \\
    \texttt{Conv} \cite{liu2022convolutional}         &  $67.5 \pm 2.7$ \\
    \texttt{Norm} \cite{liu2022convolutional}         &  $69.4 \pm 1.6$ \\
    \texttt{ConvNorm} \cite{liu2022convolutional}     &  $68.1 \pm 1.3$ \\
    \texttt{CMMN}          &  $\mathbf{74.8 \pm 0.6}$ \\
    \bottomrule
    \end{tabular}
    \end{minipage}
    \hfill
    \begin{minipage}{0.5\linewidth}
    \includegraphics[width=\linewidth]{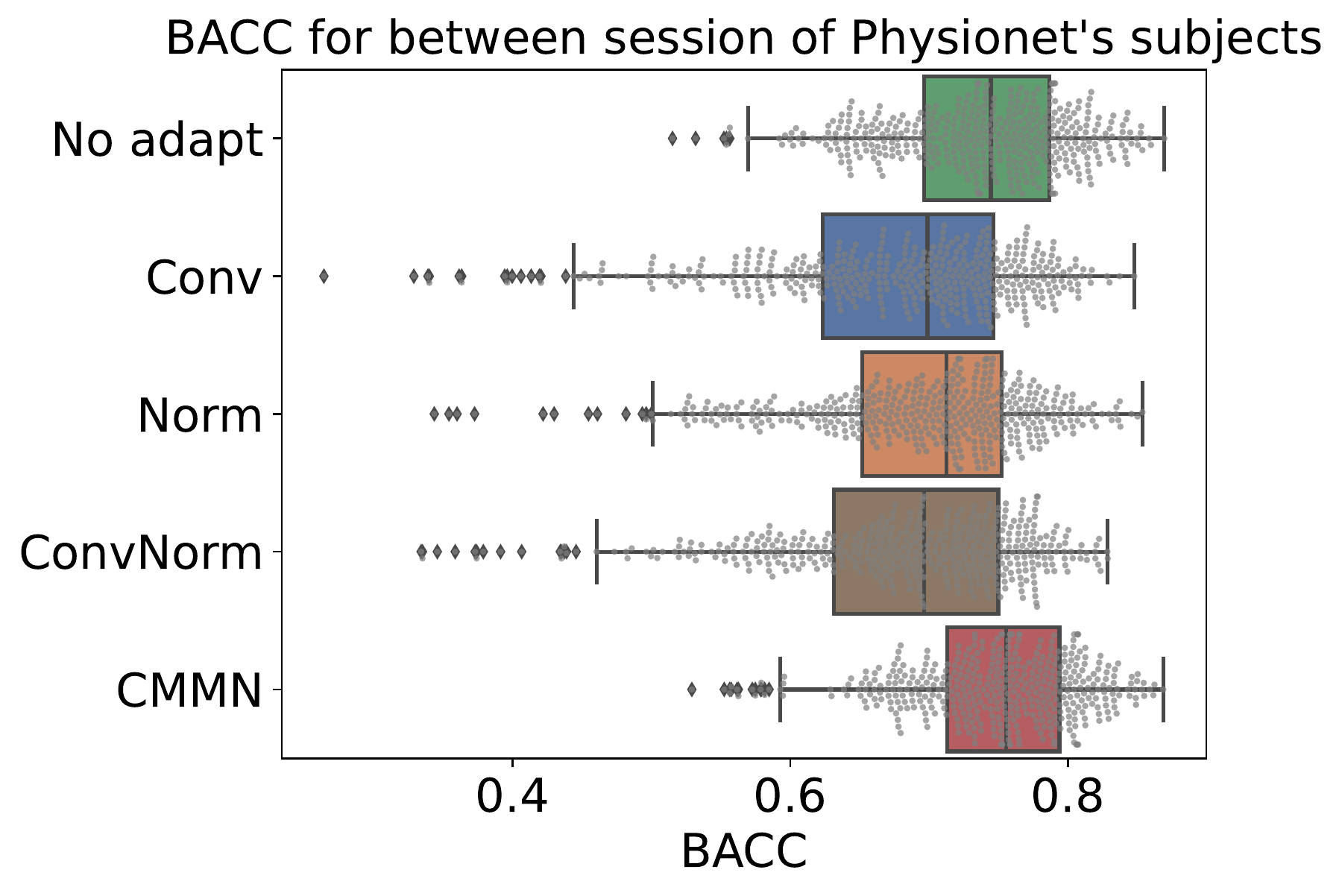}
    \end{minipage}
    \caption{Balanced accuracy (BACC) for different subject-specific normalizations and \texttt{CMMN} (left). Boxplot for all normalization approaches (right). CMNN outperforms subject-specific normalizations.}
    \label{tab:subject}
\end{table}We can see in \autoref{tab:subject} that for cross-session adaptation, the gain with \texttt{CMMN} is smaller than previous results (1\% BACC gain), which can be explained by the presence of subjects data in both domains resulting in a smaller shift between distribution.
However, \texttt{CMMN} outperforms all other subject-specific normalizations which are struggling to improve the results (\ie around 4\% of BACC loss).

% A first notable result is that domain specific normalization do not
% work as well as \texttt{No Adapt}, maybe because the shift between subject is smaller than the shift between sessions because
% \texttt{CMMN} is still better
% than \texttt{No Adapt} but with only a 1\% BACC gain here, but
% while providing out-of-sample domain generalization.
% contribution which is only 1\%
% here. On the other hand, none of the subject-specific methods manages to
% normalize the subjects in order to improve the performance of the model. \\
% \texttt{CMMN} therefore outperforms both the classic normalization methods and the subject-specific ones, while offering an easy implementation that does not require additional training or the presence of target subjects during training.

\subsection{Study of the performance gain: neural architecture and human subjects}
Previous experiments have shown the superiority of \texttt{CMMN} over all the
other normalizations. In this section, we study the behavior of \texttt{CMMN} on
different neural network architectures and study which subject gains the most
performance gain.

% But a good method is also a method that is able to be consistent over many different settings or subjects.
\paragraph{Performance of \texttt{CMMN} with different architectures}
In addition to \texttt{Chambon} that was used in the previous experiments, we
now evaluate \texttt{CMMN} considering a different network architecture:
\texttt{DeepSleepNet} \cite{Supratak_2017}. The results for both architectures
are reported in \autoref{tab:archi}, where \texttt{CMMN} is consistently better for both architectures. Notably, the only configuration where the gain is limited is MASS$\rightarrow$MASS with
\texttt{DeepSleepNet} because MASS is the easiest dataset with less variability
than other pairs. Finally, we were surprised to see that \texttt{DeepSleepNet}
does not perform as well as \texttt{Chambon} on cross-dataset adaptation,
probably due to overfitting caused by a more complex architecture.
% Results with \texttt{DeepSleepNet} show the same trend. \texttt{CMMN} helps to improve the BACC on all between datasets adaptations. Only considering between subjects for MASS, which is the easiest task, is doing no adaptation slightly better (73.3 against 73.1). On average, \texttt{DeepSleepNet} gains 4\% of BACC when used with \texttt{CMMN}. 
% It also interesting to highlight the fact that \texttt{DeepSleeNet} has a harder time to adapt between dataset than \texttt{Chambon}. Having a much more complex architecture likely leads to overfitting here.
% except if more data are available in the training set. That can explain why \texttt{Chambon} have better results. 
% Moreover, \texttt{CMMN} with a simple architecture allow to reach very good results. 

% This experiment shows that \texttt{CMMN} is consistent on different datasets and on different architectures. 

\begin{table}[]
    \small
    \centering
    
    \begin{tabular}{l|cc|cc}
    \toprule
    % {} & \multicolumn{4}{l}{BACC} \\
    Architecture & \multicolumn{2}{c}{\texttt{Chambon} \cite{chambon2017deep}} & \multicolumn{2}{c}{\texttt{DeepSleepNet} \cite{Supratak_2017}} \\
     &        \texttt{No Adapt} &                     \texttt{CMMN} &        \texttt{No Adapt} &                     \texttt{CMMN} \\
    % Source->Target          &                 &                          &                 &                          \\
    \midrule
    MASS$\rightarrow$MASS   &  $75.1 \pm 1.0$ &  $\mathbf{76.2 \pm 2.2}$ &  $\mathbf{73.3 \pm 1.7}$ &           $73.1 \pm 2.6$ \\
    Phys.$\rightarrow$Phys. &  $69.2 \pm 2.7$ &  $\mathbf{71.7 \pm 2.4}$ &  $66.5 \pm 2.5$ &  $\mathbf{69.4 \pm 2.5}$ \\
        SHHS$\rightarrow$SHHS   &  $61.2 \pm 3.8$ &  $\mathbf{64.3 \pm 2.7}$ &  $58.7 \pm 2.3$ &  $\mathbf{60.1 \pm 3.5}$ \\\hline
    
    MASS$\rightarrow$Phys.  &  $58.4 \pm 2.4$ &  $\mathbf{62.3 \pm 1.5}$ &  $50.1 \pm 2.4$ &  $\mathbf{54.5 \pm 1.2}$ \\
    MASS$\rightarrow$SHHS   &  $41.8 \pm 3.6$ &  $\mathbf{47.6 \pm 4.0}$ &  $38.3 \pm 2.6$ &  $\mathbf{47.8 \pm 2.4}$ \\
    Phys.$\rightarrow$MASS  &  $64.0 \pm 2.7$ &  $\mathbf{68.3 \pm 2.5}$ &  $59.5 \pm 1.0$ &  $\mathbf{62.1 \pm 1.9}$ \\

    Phys.$\rightarrow$SHHS  &  $45.6 \pm 2.1$ &  $\mathbf{51.6 \pm 1.8}$ &  $45.2 \pm 2.2$ &  $\mathbf{48.6 \pm 1.6}$ \\
    SHHS$\rightarrow$MASS   &  $57.0 \pm 2.8$ &  $\mathbf{64.5 \pm 2.8}$ &  $51.2 \pm 5.9$ &  $\mathbf{56.8 \pm 6.1}$ \\
    SHHS$\rightarrow$Phys.  &  $55.0 \pm 2.7$ &  $\mathbf{58.3 \pm 1.7}$ &  $48.6 \pm 5.8$ &  $\mathbf{54.7 \pm 6.8}$ \\

    \hline
    Mean                    &  $58.6 \pm 2.6$ &  $\mathbf{62.7 \pm 2.4}$ &  $54.6 \pm 2.9$ &           $\mathbf{58.6 \pm 3.2}$ \\
    \bottomrule
    \end{tabular}\vspace{1mm}
     \caption{Balanced accuracy (BACC) for different train/test dataset pairs and for different architectures (\texttt{Chambon}/\texttt{DeepSleepNet}).
    \texttt{CMMN} works independently of the network architecture.}
    \label{tab:archi}
\end{table} 

\paragraph{Performance gain on low-performing subjects} 
In medical applications, it is often more critical to have a model that has a
low failure mode, rather than the best average accuracy. 
%As shown with previous experiments, \texttt{CMMN} works very well on average on
%the different datasets and the different architectures. Yet one question
%remains. Is the improvement the same for all subjects?\\
As a first step toward studying this, we report two scatter plots reported in
\autoref{tab:delta} plotting the BACC for individual target subjects
without adaptation as a function of the BACC with \texttt{CMMN}, for different
architectures and dataset pairs.
First, the majority of the subjects are above the
axis $x=y$, which means that \texttt{CMMN} improves their score. But the most
interesting finding is the large improvement for the low-performing subjects
that can gain from 0.3 to 0.65 BACC. \\
We also provide in \autoref{tab:delta} the $\Delta$BACC@20, that is the average
BACC gain on the 20\% lowest performing subjects without adaptation.
%  $\Delta$BACC@20
% for the hardest subjects, \ie the ones with bad accuracy without any adaptation.
On average, both architectures increase by 7\% the BACC on those subjects, when
it is only increased by 4\% for all subjects. Some $\Delta$BACC@20 are even greater
than 10\% on some dataset pairs. 
% The gain is almost twice as good for bad subjects. Adapting the signals can also achieve up to 15\% more accuracy for the worst subjects in the case of SHHS to MASS with \texttt{Chambon}.\\
These results show the consistency of the method on all subjects but also the
huge impact on the more challenging ones. 

%  , we can see that all the
% subjects are not
% improved in the same way. The x-axis shows the BACC without adaptation, while
% the y-axis shows the BACC for \texttt{CMMN}. 
% The majority of the subjects (\ie
% the point in the plots) are above the axis $x=y$, which means that \texttt{CMMN}
% improves the score. But the most interesting finding, is the huge improvement
% for the hardest subjects, \ie the ones with bad accuracy without any adaptation.
% For the figure in the center, \texttt{CMMN} can increase the score of some
% subjects from 0.3 to 0.65. \\
% % For the figure with \texttt{DeepSleepNet} (the one in the right), the improvement is smaller than with \texttt{Chambon} but still better than easy subjects.
% To quantify this, we introdduce the$\Delta$BACC@20 score which is the difference in BACC between \texttt{CMMN} and no adaptation for the 20\% hardest subjects (\ie the 20\% of the subjects with the worst BACC without adaptation). The results are reported in the \autoref{tab:delta} (left). On average, both architectures increase by 7\% the BACC on the hardest subjects, when it is only increased by 4\% for all subjects. The gain is almost twice as good for bad subjects. Adapting the signals can also achieve up to 15\% more accuracy for the worst subjects in the case of SHHS to MASS with \texttt{Chambon}.\\
% These results show the consistency of the method on all subjects but also the huge impact on the hardest ones. 

\begin{table}[]
 \begin{minipage}{0.3\linewidth}
    \centering
    \tiny
    \begin{tabular}{l|rr}
    \toprule
    % Adapt & \multicolumn{2}{c}{\texttt{CMMN}} \\
    Archi &     \texttt{Chambon} \cite{chambon2017deep}&    \tiny \texttt{DeepSleepNet} \cite{Supratak_2017} \\
    % Source->Target          &                 &                 \\
    \midrule
    MASS$\rightarrow$MASS   &   $\mathbf{3.3 \pm 3.6}$ &   $1.8 \pm 3.8$ \\
    MASS$\rightarrow$Phys.  &   $6.4 \pm 4.1$ &   $\mathbf{9.7 \pm 2.5}$ \\
    MASS$\rightarrow$SHHS   &   $7.5 \pm 2.8$ &  $\mathbf{11.7 \pm 2.6}$ \\
    Phys.$\rightarrow$MASS  &   $\mathbf{6.9 \pm 2.8}$ &   $3.6 \pm 2.1$ \\
    Phys.$\rightarrow$Phys. &   $5.9 \pm 3.0$ &   $\mathbf{6.9 \pm 4.5}$ \\
    Phys.$\rightarrow$SHHS  &   $\mathbf{7.6 \pm 4.6}$ &   $4.5 \pm 1.7$ \\
    SHHS$\rightarrow$MASS   &  $\mathbf{15.0 \pm 5.3}$ &   $9.5 \pm 6.8$ \\
    SHHS$\rightarrow$Phys.  &   $6.5 \pm 3.4$ &   $\mathbf{9.9 \pm 6.3}$ \\
    SHHS$\rightarrow$SHHS   &   $\mathbf{4.2 \pm 3.7}$ &   $3.6 \pm 4.0$ \\
    \hline
    Mean                    &   $\mathbf{7.0 \pm 3.7}$ &   $6.8 \pm 3.8$ \\
    \bottomrule
    \end{tabular}

    \end{minipage}
    \hfill
    \begin{minipage}{0.30\linewidth}
    \includegraphics[width=\linewidth]{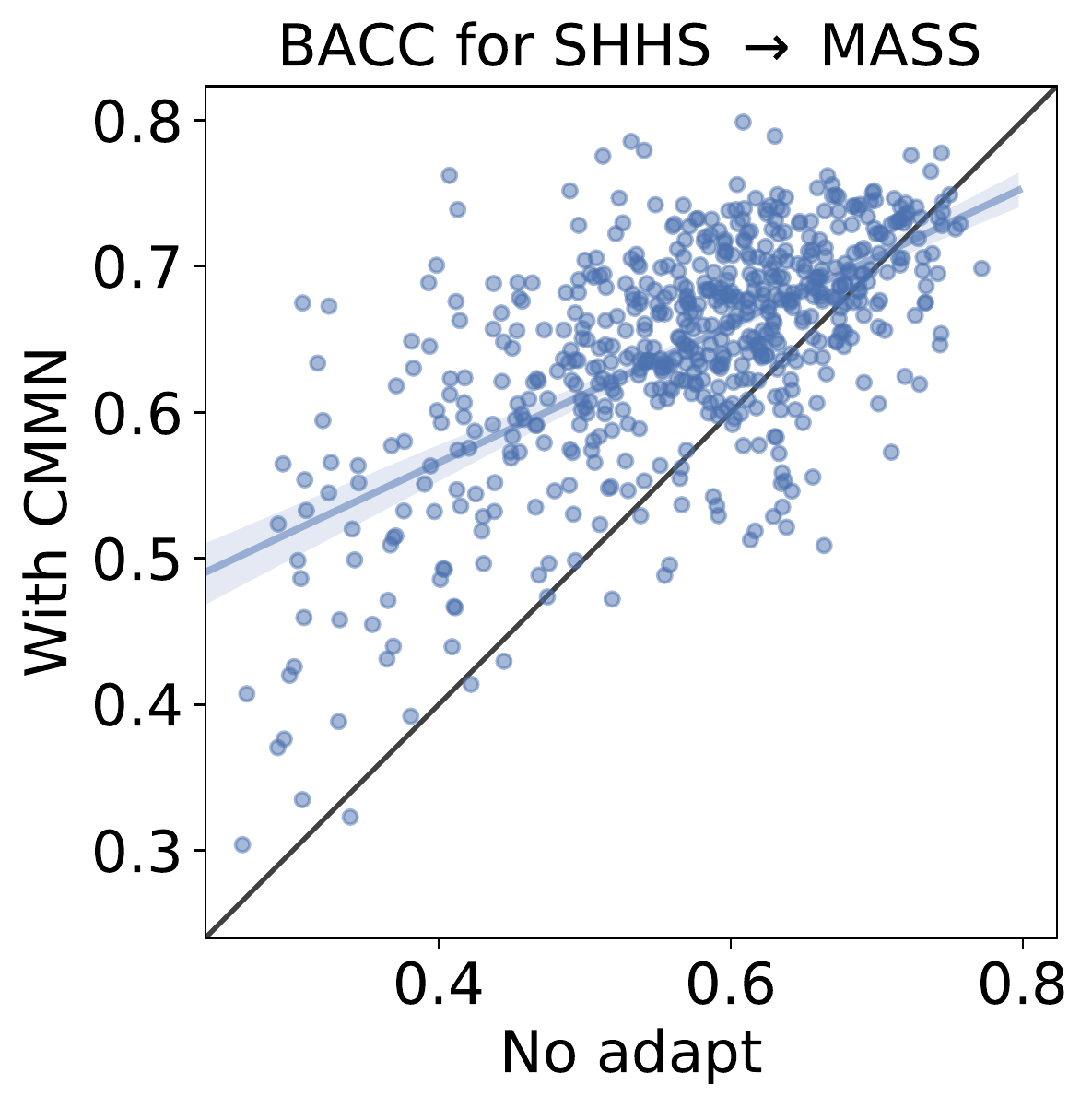}
    \end{minipage}
    %\hspace{-10mm}
    \begin{minipage}{0.30\linewidth}
    \includegraphics[width=\linewidth]{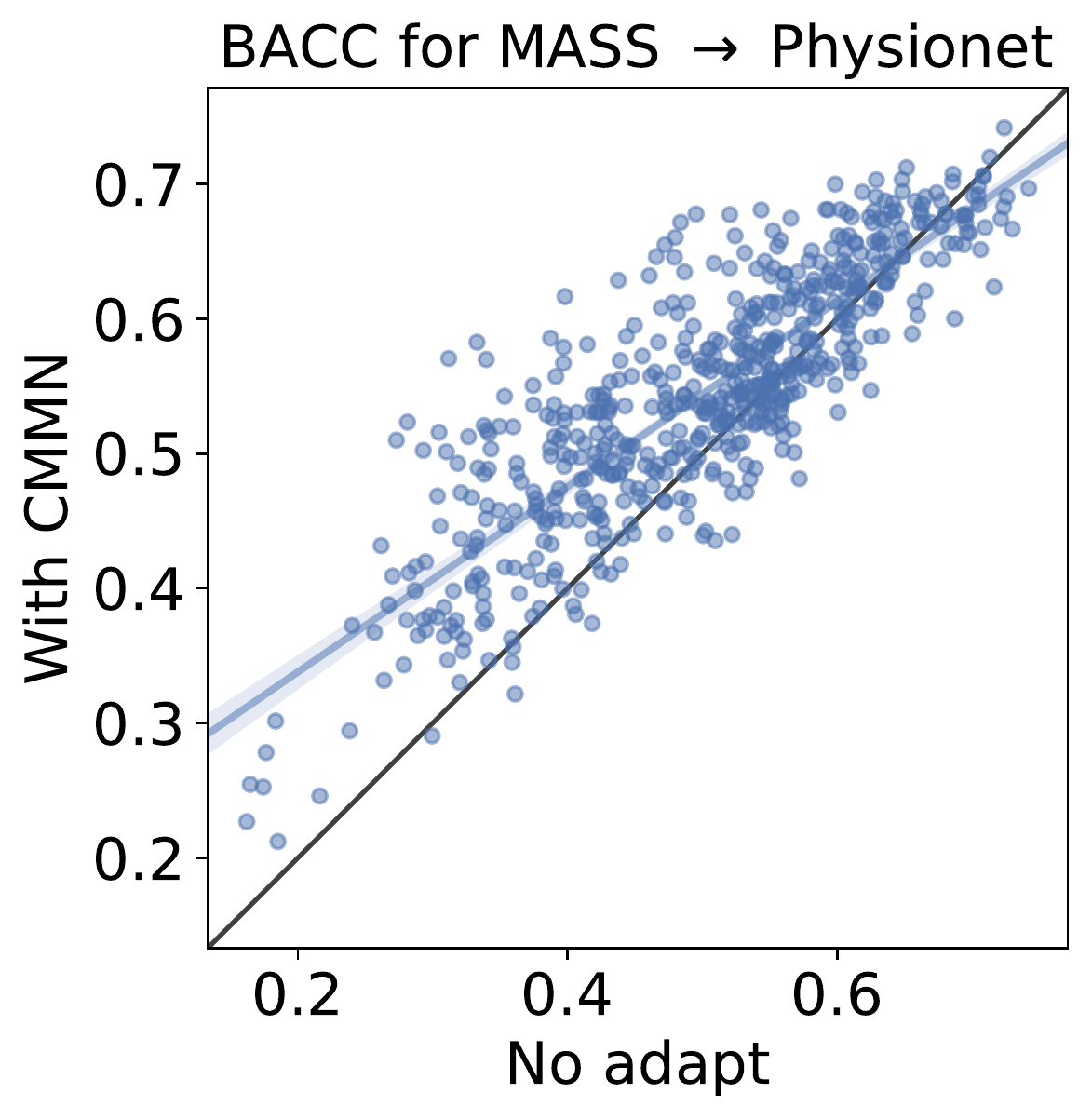}
    \end{minipage}
    \caption{$\Delta$BACC@20 for different train/test dataset pairs and for
    different architectures (\texttt{Chambon}/\texttt{DeepSleepNet}) (left).
    Scatter plot of balanced accuracy (BACC) with \texttt{No Adapt} as a fuction
    of BACC with \texttt{CMMN} and 
    the dataset pair SHHS $\rightarrow$ MASS with \texttt{Chambon} (center) and
    on the dataset pair MASS $\rightarrow$ Physionet with \texttt{DeepSleepNet}
    (right).
    \texttt{CMMN} leads to a big performance boost on the low-performing subjects.}
    \label{tab:delta}
\end{table}
\subsection{Complementarity of \texttt{CMMN} with Domain Adaptation}
\label{sec:DA}

We have shown in the previous experiments that \texttt{CMMN} is clearly the best
normalization in many settings. But the main idea of \texttt{CMMN} is to adapt
the raw signals to a common barycentric domain. Interestingly, many Domain
Adaptation (DA) methods also try to reduce the discrepancies between datasets by learning a feature representation 
 that is invariant to the domain. In this section, we compare the
two strategies and investigate if they are complementary.\\
We implement the celebrated DA method \texttt{DANN}
\cite{ganin2016domainadversarial} that aims at learning a feature representation
that is invariant to the domain using an adversarial formulation. Note that
this DA approach is much more complex than \texttt{CMMN} because it requires to
have access to the target data during training and a model needs to be trained
for each new target domain. The choice of hyperparameters for DA methods is
not trivial in the absence of target labels. But since we have access to several
target domains, we propose to select the weight parameters for \texttt{DANN}
using a validation on 20\% of the target subjects
\cite{salvador2022reproducible}. Note that this is not a realistic setting since
in real applications the target domains are usually not labeled, yet it is a way
to compare the two approaches in a configuration favorable for DA. We focus
on cross-dataset adaptation where many shifts are known to exist: different
sensors (SHHS vs Physionet/MASS), doctor scoring criteria (SHHS/MASS vs
Physionet), or brain activity (SHHS vs MASS vs Physionet).\\
We report in \autoref{tab:DA} the BACC and $\Delta$BACC@20 for all dataset pairs
and all combinations of \texttt{CMMN} and \texttt{DANN} with \texttt{Chambon}. First, we can see that
the best approaches are clearly \texttt{CMMN} and \texttt{CMMN+DANN}.
\texttt{CMMN} is better in BACC on 4/6 dataset pairs and \texttt{CMMN+DANN} is
better in $\Delta$BACC@20 on 4/6 dataset pairs. First, it is a very impressive
performance for \texttt{CMMN} that is much simpler than \texttt{DANN} and again
does not use target data when learning the predictor $f$. But it also illustrates
the interest of \texttt{CMMN+DANN} especially for
low-performing subjects.

\begin{table}[]
    \centering
    \tiny
    \begin{tabular}{l|cccc||rrr}
    \toprule
    {} & \multicolumn{4}{c}{BACC} & \multicolumn{3}{c}{$\Delta$BACC@20} \\
    % Monge & \multicolumn{2}{l}{\xmark} & \multicolumn{2}{l}{\checkmark} &         \xmark & \multicolumn{2}{l}{\checkmark} \\
    Adapt &           \texttt{No Adapt} &            \texttt{DANN} &                    \texttt{CMMN} &                     \texttt{CMMN}+\texttt{DANN} &           \texttt{DANN} &           \texttt{CMMN} &            \texttt{CMMN}+\texttt{DANN} \\
    % Source->Target &                 &                 &                          &                          &                &                 &                 \\
    \midrule
    MASS->Phys.    &  $59.2 \pm 6.1$ &  $60.9 \pm 1.0$ &           $62.9 \pm 0.8$ &  $\mathbf{62.9 \pm 1.2}$ &  $2.0 \pm 5.9$ &   $5.3 \pm 5.8$ &   $\mathbf{5.5 \pm 6.2}$ \\
    MASS->SHHS     &  $44.5 \pm 6.0$ &  $44.3 \pm 2.1$ &  $\mathbf{50.3 \pm 4.2}$ &           $49.9 \pm 1.9$ &  $4.2 \pm 7.1$ &   $7.6 \pm 2.9$ &  $\mathbf{10.4 \pm 5.9}$ \\
    Phys.->MASS    &  $65.3 \pm 1.4$ &  $65.2 \pm 0.7$ &           $68.9 \pm 1.0$ &  $\mathbf{69.1 \pm 1.0}$ &  $0.4 \pm 1.3$ &   $\mathbf{6.3 \pm 2.0}$ &   $5.6 \pm 2.7$ \\
    Phys.->SHHS    &  $43.1 \pm 6.3$ &  $45.4 \pm 2.6$ &  $\mathbf{50.0 \pm 4.3}$ &           $49.9 \pm 2.4$ &  $2.8 \pm 5.3$ &   $8.2 \pm 5.7$ &   $\mathbf{9.7 \pm 6.1}$ \\
    SHHS->MASS     &  $59.9 \pm 3.4$ &  $59.4 \pm 1.1$ &  $\mathbf{66.5 \pm 2.5}$ &           $66.3 \pm 0.9$ &  $0.4 \pm 2.8$ &  $\mathbf{13.4 \pm 3.0}$ &  $12.6 \pm 2.7$ \\
    SHHS->Phys.    &  $57.1 \pm 3.9$ &  $57.2 \pm 2.2$ &  $\mathbf{61.6 \pm 3.1}$ &           $59.4 \pm 2.6$ &  $4.1 \pm 3.7$ &  $10.0 \pm 6.9$ &  $\mathbf{10.9 \pm 5.7}$ \\\hline
    Mean           &  $54.9 \pm 4.5$ &  $55.4 \pm 1.6$ &  $\mathbf{60.0 \pm 2.7}$ &           $59.6 \pm 1.7$ &  $2.3 \pm 4.4$ &   $8.5 \pm 4.4$ &   $\mathbf{9.1 \pm 4.9}$ \\
    \bottomrule
    \end{tabular} \vspace{1mm}
    \caption{Balanced accuracy (BACC) and $\Delta$BACC@20 for different
    train/test dataset pairs and for different adaptation methods. \texttt{CMMN}
    outperforms \texttt{DANN} and \texttt{CMMN+DANN} on average on all subjects. Combining
    \texttt{CMMN} \texttt{DANN} improves the lower-performing subjects.}
    \label{tab:DA}
   
\end{table}

\subsection{Different PSD targets}
CMMN has shown a significant boost in performance for different adaptations in various settings. 
The method consists of learning a barycenter, and mapping all domains to this barycenter to obtain a homogeneous frequency spectrum. 
Mapping to the barycenter is one way to achieve this, but it is reasonable to evaluate the optimality of this choice by comparing to alternatives. For example, the classical method called spectral whitening is equivalent to OT mapping towards a uniform PSD, which is equivalent to a white noise PSD. 
In this part, we propose to compare the mapping to different PSD targets: barycenter (classic \texttt{CMMN}), white noise PSD (whitening), or a non-uniform power-law. 
The last PSD target is a mathematical distribution that describes a functional relationship between frequency and magnitude, where magnitude is inversely proportional to the power of the frequency $P(f) = af^{a-1}$. Below we selected $a=0.659$ for the experiment. \\
The table \autoref{tab:target} gives the BACC score of the mapping for the different PSD targets and shows the importance of the reference PSD. First, we can see that mapping to a PSD increases the score significantly w.r.t. to raw data. Second, the mapping to the Wasserstein barycenter (\ie \texttt{CMMN}) is not always the better performer (only 5/9), but overall, \texttt{CMMN} gives better results (2\% higher) and with less variance. 
The robustness of the chosen PSD target, coupled with the fast computation of the barycenter, makes \texttt{CMMN} a strong and easy-to-compute normalization method for time series.
\begin{table}
\centering
\caption{BACC for different PSD targets for the mapping.}\small
\begin{tabular}{l|cccc}
\toprule
Target PSD &            None &               Barycenter &                 Powerlaw &                Whitening \\
% Source->Target       &                 &                          &                          &                          \\
\midrule
MASS$\rightarrow$MASS           &  $75.1 \pm 1.0$ &  $\mathbf{77.1 \pm 1.3}$ &           $75.6 \pm 2.4$ &           $73.2 \pm 6.4$ \\
MASS$\rightarrow$Physionet      &  $58.4 \pm 2.4$ &           $62.6 \pm 2.0$ &           $63.1 \pm 1.5$ &  $\mathbf{63.2 \pm 1.0}$ \\
MASS$\rightarrow$SHHS           &  $41.8 \pm 3.6$ &           $50.4 \pm 8.0$ &           $51.9 \pm 3.0$ &  $\mathbf{52.4 \pm 2.4}$ \\
Physionet$\rightarrow$MASS      &  $64.0 \pm 2.7$ &  $\mathbf{68.0 \pm 1.5}$ &           $66.7 \pm 2.5$ &           $65.9 \pm 2.6$ \\
Physionet$\rightarrow$Physionet &  $69.2 \pm 2.7$ &  $\mathbf{72.3 \pm 1.8}$ &          $66.9 \pm 16.5$ &           $71.3 \pm 1.9$ \\
Physionet$\rightarrow$SHHS      &  $45.6 \pm 2.1$ &           $52.5 \pm 1.9$ &           $53.9 \pm 1.7$ &  $\mathbf{54.0 \pm 2.4}$ \\
SHHS$\rightarrow$MASS           &  $57.0 \pm 2.8$ &  $\mathbf{65.9 \pm 3.2}$ &          $59.0 \pm 13.9$ &          $59.0 \pm 13.8$ \\
SHHS$\rightarrow$Physionet      &  $55.0 \pm 2.7$ &  $\mathbf{62.0 \pm 3.2}$ &          $56.5 \pm 13.1$ &          $55.9 \pm 12.8$ \\
SHHS$\rightarrow$SHHS           &  $61.2 \pm 3.8$ &           $63.5 \pm 3.2$ &  $\mathbf{63.6 \pm 2.6}$ &           $62.4 \pm 2.8$ \\ \hline
Mean                 &  $58.6 \pm 2.6$ &  $\mathbf{63.8 \pm 2.9}$ &           $61.9 \pm 6.4$ &           $61.9 \pm 5.1$ \\
\bottomrule
\end{tabular}
\label{tab:target}
\end{table}
\section{Conclusion}

We proposed in this paper a novel approach for the normalization of bio-signals that
can adapt to the spectral specificities of each domain while being a test-time
adaptation method that does not require retraining a new model. The method builds
on a new closed-form solution for computing Wasserstein barycenters on stationary
Gaussian random signals. We showed that this method leads to a systematic
performance gain
on different configurations of data shift (between subjects, between sessions, and
between datasets) and on different architectures. We also show that
\texttt{CMMN} benefits greatly the subjects that had bad performances when
trained jointly without sacrificing performance on the well-predicted subjects.
Finally, we show that \texttt{CMMN} even outperforms DA methods
and can be used in conjunction with DA for even better results.\\
Future work will investigate the use of \texttt{CMMN} for other biomedical
applications and study the use of the estimated filters $\bh_k$ as vector
representations of the subjects that can be used for interpretability. Finally,
we believe that a research direction worth investigating is the federated
estimation of \texttt{CMMN} with the objective of learning an unbiased estimator in
the context of differential privacy \cite{wei2020federated,kairouz2021advances}.

\newpage
\section{Acknowledgement}
The authors thank Antoine Collas, Cédric Allain, and Nicolas Courty for their valuable comments on the manuscript, and Lina Dalibard for her help with Figure 1. Numerical computation was enabled by the scientific Python ecosystem: NumPy~\cite{harris_array_2020}, SciPy~\cite{virtanen_scipy_2020}, Matplotlib~\cite{hunter_matplotlib_2007}, Seaborn~\cite{waskom_seaborn_2021}, PyTorch~\cite{paszke_pytorch_2019}, and MNE for EEG data processing \cite{GramfortEtAl2013a}. This work was partly supported by the grants ANR-20-CHIA-0016 and ANR-20-IADJ-0002 and ANR-23-ERCC-0006-01 from Agence nationale de la recherche (ANR).\\

\bibliographystyle{utphys}
\bibliography{biblio}

\providecommand{\href}[2]{#2}\begingroup\raggedright\begin{thebibliography}{10}

\bibitem{STEVENS200445}
S.~Stevens and G.~Clark, \href{https://dx.doi.org/https://doi.org/10.1016/B978-1-56053-592-8.50010-5}{``Chapter 6 - polysomnography,''} in {\em Sleep Medicine Secrets}, D.~STEVENS, ed., pp.~45--63.
\newblock Hanley \& Belfus, 2004.

\bibitem{MASS}
C.~O'Reilly, N.~Gosselin, and J.~Carrier, ``Montreal archive of sleep studies: an open-access resource for instrument benchmarking and exploratory research,'' \href{https://dx.doi.org/10.1111/jsr.12169}{{\em Journal of sleep research} {\bfseries 23} (06, 2014) }.

\bibitem{goldberger2000physiobank}
A.~L. Goldberger, L.~A. Amaral, L.~Glass, J.~M. Hausdorff, P.~C. Ivanov, R.~G. Mark, J.~E. Mietus, G.~B. Moody, C.-K. Peng, and H.~E. Stanley, ``{PhysioBank, PhysioToolkit, and PhysioNet}: components of a new research resource for complex physiologic signals,'' {\em circulation} {\bfseries 101} no.~23, (2000) e215--e220.

\bibitem{SHHS}
G.-Q. Zhang, L.~Cui, R.~Mueller, S.~Tao, M.~Kim, M.~Rueschman, S.~Mariani, D.~Mobley, and S.~Redline, ``The national sleep research resource: Towards a sleep data commons,'' \href{https://dx.doi.org/10.1145/3233547.3233725}{{\em Journal of the American Medical Informatics Association} (08, 2018) 572--572}.

\bibitem{apicella2023effects}
A.~Apicella, F.~Isgrò, A.~Pollastro, and R.~Prevete, ``On the effects of data normalisation for domain adaptation on {EEG} data,'' \href{https://arxiv.org/abs/2210.01081}{{\ttfamily arXiv:2210.01081 [cs.LG]}}.

\bibitem{chambon2017deep}
S.~Chambon, M.~Galtier, P.~Arnal, G.~Wainrib, and A.~Gramfort, ``A deep learning architecture for temporal sleep stage classification using multivariate and multimodal time series,'' \href{https://arxiv.org/abs/1707.03321}{{\ttfamily arXiv:1707.03321 [stat.ML]}}.

\bibitem{Supratak_2017}
A.~Supratak, H.~Dong, C.~Wu, and Y.~Guo, ``{DeepSleepNet}: A model for automatic sleep stage scoring based on raw single-channel {EEG},'' \href{https://dx.doi.org/10.1109/tnsre.2017.2721116}{{\em {IEEE} Transactions on Neural Systems and Rehabilitation Engineering} {\bfseries 25} no.~11, nov1998--2008}.

\bibitem{Perslev2021Usleep}
M.~Perslev, S.~Darkner, L.~Kempfner, M.~Nikolic, P.~Jennum, and C.~Igel, ``{U-Sleep}: resilient high-frequency sleep staging,'' \href{https://dx.doi.org/10.1038/s41746-021-00440-5}{{\em npj Digital Medicine} {\bfseries 4} (04, 2021) 72}.

\bibitem{li2016revisiting}
Y.~Li, N.~Wang, J.~Shi, J.~Liu, and X.~Hou, ``Revisiting batch normalization for practical domain adaptation,'' {\em arXiv preprint arXiv:1603.04779} (2016) .

\bibitem{chehab2022deep}
O.~Chehab, A.~Defossez, J.-C. Loiseau, A.~Gramfort, and J.-R. King, ``Deep recurrent encoder: A scalable end-to-end network to model brain signals,'' \href{https://arxiv.org/abs/2103.02339}{{\ttfamily arXiv:2103.02339 [q-bio.NC]}}.

\bibitem{liu2022convolutional}
S.~Liu, X.~Li, Y.~Zhai, C.~You, Z.~Zhu, C.~Fernandez-Granda, and Q.~Qu, ``Convolutional normalization: Improving deep convolutional network robustness and training,'' \href{https://arxiv.org/abs/2103.00673}{{\ttfamily arXiv:2103.00673 [cs.CV]}}.

\bibitem{pmlr-v176-wei22a}
X.~Wei, A.~A. Faisal, {\em et~al.}, ``2021 {BEETL} competition: Advancing transfer learning for subject independence and heterogenous eeg data sets,'' in {\em Proceedings of the NeurIPS 2021 Competitions and Demonstrations Track}, D.~Kiela, M.~Ciccone, and B.~Caputo, eds., vol.~176 of {\em Proceedings of Machine Learning Research}, pp.~205--219.
\newblock PMLR, 06--14 dec, 2022.

\bibitem{csaky2023grouplevel}
R.~Csaky, M.~V. Es, O.~P. Jones, and M.~Woolrich, ``Group-level brain decoding with deep learning,'' \href{https://arxiv.org/abs/2205.14102}{{\ttfamily arXiv:2205.14102 [cs.LG]}}.

\bibitem{kobler2022spd}
R.~J. Kobler, J.~ichiro Hirayama, Q.~Zhao, and M.~Kawanabe, ``{SPD} domain-specific batch normalization to crack interpretable unsupervised domain adaptation in {EEG},'' \href{https://arxiv.org/abs/2206.01323}{{\ttfamily arXiv:2206.01323 [cs.LG]}}.

\bibitem{phan2023regu}
H.~Phan, E.~Heremans, O.~Y. Chén, P.~Koch, A.~Mertins, and M.~De~Vos, \href{https://dx.doi.org/10.1109/ICASSP49357.2023.10095805}{``Improving automatic sleep staging via temporal smoothness regularization,''} in {\em ICASSP 2023 - 2023 IEEE International Conference on Acoustics, Speech and Signal Processing (ICASSP)}, pp.~1--5.
\newblock 2023.

\bibitem{quinonero2008dataset}
J.~Quinonero-Candela, M.~Sugiyama, A.~Schwaighofer, and N.~D. Lawrence, {\em Dataset shift in machine learning}.
\newblock Mit Press, 2008.

\bibitem{sun2016deep}
B.~Sun and K.~Saenko, ``Deep {CORAL}: Correlation alignment for deep domain adaptation,'' \href{https://arxiv.org/abs/1607.01719}{{\ttfamily arXiv:1607.01719 [cs.CV]}}.

\bibitem{ganin2016domainadversarial}
Y.~Ganin, E.~Ustinova, H.~Ajakan, P.~Germain, H.~Larochelle, F.~Laviolette, M.~Marchand, and V.~Lempitsky, ``Domain-adversarial training of neural networks,'' \href{https://arxiv.org/abs/1505.07818}{{\ttfamily arXiv:1505.07818 [stat.ML]}}.

\bibitem{long2015learning}
M.~Long, Y.~Cao, J.~Wang, and M.~I. Jordan, ``Learning transferable features with deep adaptation networks,'' \href{https://arxiv.org/abs/1502.02791}{{\ttfamily arXiv:1502.02791 [cs.LG]}}.

\bibitem{damodaran2018deepjdot}
B.~B. Damodaran, B.~Kellenberger, R.~Flamary, D.~Tuia, and N.~Courty, ``{DeepJDOT}: Deep joint distribution optimal transport for unsupervised domain adaptation,'' \href{https://arxiv.org/abs/1803.10081}{{\ttfamily arXiv:1803.10081 [cs.CV]}}.

\bibitem{montesuma2021wasserstein}
E.~F. Montesuma and F.~M.~N. Mboula, ``Wasserstein barycenter for multi-source domain adaptation,'' in {\em Proceedings of the IEEE/CVF conference on computer vision and pattern recognition}, pp.~16785--16793.
\newblock 2021.

\bibitem{chen2022contrastive}
D.~Chen, D.~Wang, T.~Darrell, and S.~Ebrahimi, ``Contrastive test-time adaptation,'' \href{https://arxiv.org/abs/2204.10377}{{\ttfamily arXiv:2204.10377 [cs.CV]}}.

\bibitem{bhatia2019bures}
R.~Bhatia, T.~Jain, and Y.~Lim, ``On the bures--wasserstein distance between positive definite matrices,'' {\em Expositiones Mathematicae} {\bfseries 37} no.~2, (2019) 165--191.

\bibitem{peyre2020computational}
G.~Peyré and M.~Cuturi, ``Computational optimal transport,'' \href{https://arxiv.org/abs/1803.00567}{{\ttfamily arXiv:1803.00567 [stat.ML]}}.

\bibitem{Forrester_2015}
P.~J. Forrester and M.~Kieburg, ``Relating the {Bures} measure to the {Cauchy} two-matrix model,'' \href{https://dx.doi.org/10.1007/s00220-015-2435-4}{{\em Communications in Mathematical Physics} {\bfseries 342} no.~1, (Oct, 2015) 151--187}.

\bibitem{nadjahi2021fast}
K.~Nadjahi, A.~Durmus, P.~E. Jacob, R.~Badeau, and U.~Simsekli, ``Fast approximation of the sliced-{Wasserstein} distance using concentration of random projections,'' {\em Advances in Neural Information Processing Systems} {\bfseries 34} (2021) 12411--12424.

\bibitem{flamary2020concentration}
R.~Flamary, K.~Lounici, and A.~Ferrari, ``Concentration bounds for linear {Monge} mapping estimation and optimal transport domain adaptation,'' \href{https://arxiv.org/abs/1905.10155}{{\ttfamily arXiv:1905.10155 [stat.ML]}}.

\bibitem{fournier2015rate}
N.~Fournier and A.~Guillin, ``On the rate of convergence in wasserstein distance of the empirical measure,'' {\em Probability theory and related fields} {\bfseries 162} no.~3-4, (2015) 707--738.

\bibitem{agueh2011bary}
M.~Agueh and G.~Carlier, ``Barycenters in the wasserstein space,'' \href{https://dx.doi.org/10.1137/100805741}{{\em SIAM Journal on Mathematical Analysis} {\bfseries 43} no.~2, (2011) 904--924}, \href{https://arxiv.org/abs/https://doi.org/10.1137/100805741}{{\ttfamily https://doi.org/10.1137/100805741}}. \url{https://doi.org/10.1137/100805741}.

\bibitem{mroueh2019wasserstein}
Y.~Mroueh, ``Wasserstein style transfer,'' \href{https://arxiv.org/abs/1905.12828}{{\ttfamily arXiv:1905.12828 [cs.LG]}}.

\bibitem{kroshnin2021statistical}
A.~Kroshnin, V.~Spokoiny, and A.~Suvorikova, ``Statistical inference for {Bures--Wasserstein} barycenters,'' {\em The Annals of Applied Probability} {\bfseries 31} no.~3, (2021) 1264--1298.

\bibitem{Welch:67}
P.~Welch, ``The use of fast fourier transform for the estimation of power spectra: A method based on time averaging over short, modified periodograms,'' \href{https://dx.doi.org/10.1109/TAU.1967.1161901}{{\em IEEE Transactions on Audio and Electroacoustics} {\bfseries 15} no.~2, (1967) 70--73}.

\bibitem{oppen1996elec}
A.~V. Oppenheim, A.~S. Willsky, and S.~H. Nawab, {\em Signals \& Systems (2nd Ed.)}.
\newblock Prentice-Hall, Inc., USA, 1996.

\bibitem{bhatia2019matrix}
R.~Bhatia, S.~Gaubert, and T.~Jain, ``Matrix versions of the hellinger distance,'' {\em Letters in Mathematical Physics} {\bfseries 109} (2019) 1777--1804.

\bibitem{courty2014domain}
N.~Courty, R.~Flamary, and D.~Tuia, ``Domain adaptation with regularized optimal transport,'' in {\em Machine Learning and Knowledge Discovery in Databases: European Conference, ECML PKDD 2014, Nancy, France, September 15-19, 2014. Proceedings, Part I 14}, pp.~274--289, Springer.
\newblock 2014.

\bibitem{xsleepnet}
H.~Phan, O.~Y. Chen, M.~C. Tran, P.~Koch, A.~Mertins, and M.~D. Vos, ``{XSleepNet}: Multi-view sequential model for automatic sleep staging,'' \href{https://dx.doi.org/10.1109/TPAMI.2021.3070057}{{\em IEEE Transactions on Pattern Analysis and Machine Intelligence} {\bfseries 44} no.~09, (Sep, 2022) 5903--5915}.

\bibitem{SHHS2}
S.~Quan, B.~Howard, {\em et~al.}, ``The sleep heart health study: Design, rationale, and methods,'' \href{https://dx.doi.org/10.1093/sleep/20.12.1077}{{\em Sleep} {\bfseries 20} (01, 1998) 1077--85}.

\bibitem{Stephansen_2018}
J.~B. Stephansen, A.~N. Olesen, {\em et~al.}, ``Neural network analysis of sleep stages enables efficient diagnosis of narcolepsy,'' \href{https://dx.doi.org/10.1038/s41467-018-07229-3}{{\em Nature Communications} {\bfseries 9} no.~1, (Dec, 2018) }.

\bibitem{Appelhoff2019}
S.~Appelhoff, M.~Sanderson, {\em et~al.}, ``{MNE-BIDS}: Organizing electrophysiological data into the {BIDS} format and facilitating their analysis,'' \href{https://dx.doi.org/10.21105/joss.01896}{{\em Journal of Open Source Software} {\bfseries 4} no.~44, (2019) 1896}.

\bibitem{GramfortEtAl2013a}
A.~Gramfort, M.~Luessi, {\em et~al.}, ``{{MEG}} and {{EEG}} data analysis with {{MNE}}-{{Python}},'' \href{https://dx.doi.org/10.3389/fnins.2013.00267}{{\em Frontiers in Neuroscience} {\bfseries 7} no.~267, (2013) 1--13}.

\bibitem{Eldele2021atten}
E.~Eldele, Z.~Chen, C.~Liu, M.~Wu, C.-K. Kwoh, X.~Li, and C.~Guan, ``An attention-based deep learning approach for sleep stage classification with single-channel {EEG},'' \href{https://dx.doi.org/10.1109/TNSRE.2021.3076234}{{\em IEEE Transactions on Neural Systems and Rehabilitation Engineering} {\bfseries 29} (2021) 809--818}.

\bibitem{braindecode}
R.~T. Schirrmeister, J.~T. Springenberg, L.~D.~J. Fiederer, M.~Glasstetter, K.~Eggensperger, M.~Tangermann, F.~Hutter, W.~Burgard, and T.~Ball, ``Deep learning with convolutional neural networks for {EEG} decoding and visualization,'' \href{https://dx.doi.org/10.1002/hbm.23730}{{\em Human Brain Mapping} (Aug, 2017) }.

\bibitem{salvador2022reproducible}
T.~Salvador, K.~Fatras, I.~Mitliagkas, and A.~Oberman, ``A reproducible and realistic evaluation of partial domain adaptation methods,'' \href{https://arxiv.org/abs/2210.01210}{{\ttfamily arXiv:2210.01210 [cs.CV]}}.

\bibitem{wei2020federated}
K.~Wei, J.~Li, M.~Ding, C.~Ma, H.~H. Yang, F.~Farokhi, S.~Jin, T.~Q. Quek, and H.~V. Poor, ``Federated learning with differential privacy: Algorithms and performance analysis,'' {\em IEEE Transactions on Information Forensics and Security} {\bfseries 15} (2020) 3454--3469.

\bibitem{kairouz2021advances}
P.~Kairouz, H.~B. McMahan, {\em et~al.}, ``Advances and open problems in federated learning,'' {\em Foundations and Trends{\textregistered} in Machine Learning} {\bfseries 14} no.~1--2, (2021) 1--210.

\bibitem{harris_array_2020}
C.~R. Harris, K.~J. Millman, {\em et~al.}, ``Array programming with {NumPy},''. \url{https://www.nature.com/articles/s41586-020-2649-2}. Number: 7825 Publisher: Nature Publishing Group.

\bibitem{virtanen_scipy_2020}
P.~Virtanen, R.~Gommers, {\em et~al.}, ``{SciPy} 1.0: fundamental algorithms for scientific computing in python,''. \url{https://www.nature.com/articles/s41592-019-0686-2}. Number: 3 Publisher: Nature Publishing Group.

\bibitem{hunter_matplotlib_2007}
J.~D. Hunter, ``Matplotlib: A 2d graphics environment,''. Conference Name: Computing in Science \& Engineering.

\bibitem{waskom_seaborn_2021}
M.~L. Waskom, ``seaborn: statistical data visualization,''. \url{https://joss.theoj.org/papers/10.21105/joss.03021}.

\bibitem{paszke_pytorch_2019}
A.~Paszke, S.~Gross, {\em et~al.}, ``{PyTorch}: An imperative style, high-performance deep learning library,'' in {\em Advances in Neural Information Processing Systems}, vol.~32.
\newblock Curran Associates, Inc.
\newblock \url{https://proceedings.neurips.cc/paper/2019/hash/bdbca288fee7f92f2bfa9f7012727740-Abstract.html}.

\bibitem{Hobson1969AMO}
J.~A. Hobson, ``A manual of standardized terminology, techniques and scoring system for sleep stages of human subjects: {A. Rechtschaffen and A. Kales} (editors). (public health service, u.s. government printing office, washington, d.c., 1968, 58 p., \$4.00),'' {\em Electroencephalography and Clinical Neurophysiology} {\bfseries 26} (1969) 644.

\bibitem{AASM}
C.~Iber, S.~Ancoli-Israel, A.~Chesson, and S.~Quan, ``The {AASM} manual for the scoring of sleep and associated events: Rules, terminology and technical specifications,'' {\em Westchester, IL: American Academy of Sleep Medicine} (01, 2007) .

\end{thebibliography}\endgroup

\newpage
\appendix
\section{Supplementary material}
\subsection{Proof of the convolutional Wasserstein barycenter}
\begin{proof}
Consider $K$ centered stationary Gaussian signals of covariance $\bSigma_k$  $\bp_k$ (respectively PSD $\bp_k$) with $k \in \intset{K}$,
the Wasserstein barycenter of the $K$ signals is a centered stationary Gaussian signal of
PSD $\bar\bp$ with:
\begin{equation}
        \bar\bSigma = \frac{1}{K}\sum_{k=1}^{K} \left( \bar\bSigma^{\frac{1}{2}} \bSigma_k\bar\bSigma^{\frac{1}{2}} \right)^{\frac{1}{2}}
\end{equation}

The signals are supposed to be stationary. Therefore the covariance matrix is a
Toeplitz circulant matrix.
The circulant matrix can be diagonalized by the Discrete Fourier
Transform (DFT) $\bSigma = \bF \text{diag}(\bp) \bF^*$, with $\bF$ and $\bF ^*$
the Fourier transform operator and its inverse, and $\bp$
the Power Spectral Density (PSD) of the signal.
The above equation becomes:
\begin{equation}
    \bar\bp = \frac{1}{K}\sum_{k=1}^{K} \left( \bar\bp^{\odot\frac{1}{2}}\odot \bp_k \odot \bar\bp^{\odot\frac{1}{2}} \right)^{\odot\frac{1}{2}}
\end{equation}
 The matrix square root and the inverse become element-wise square root and inverse. The equation becomes easier, and the term can be managed to isolate $\bar\bp$: 
\begin{equation*}
    \begin{split}
        \bar\bp &= \frac{1}{K}\sum_{k=1}^{K} \left( \bar\bp^{\odot\frac{1}{2}}\odot \bp_k \odot \bar\bp^{\odot\frac{1}{2}} \right)^{\odot\frac{1}{2}} \\
        \bar\bp &= \frac{1}{K}\sum_{k=1}^{K} \bar\bp^{\odot\frac{1}{2}}\odot \bp_k^{\odot\frac{1}{2}} \\
        \bar\bp^{\odot\frac{1}{2}} &= \frac{1}{K}\sum_{k=1}^{K} \bp_k^{\odot\frac{1}{2}} \\     
        \bar\bp &=\left(\frac{1}{K} \sum^{K}_{k=1} \bp_k^{\odot\frac{1}{2}}\right)^{\odot2} \\
    \end{split}
\end{equation*}
\end{proof}

\begin{proof}
A second possible proof is considering the optimization problem \emph{w.r.t} $\bSigma$:
\begin{equation}
    \bar\bSigma = \underset{\bSigma}{\text{arg min}} \sum_{k=1}^K \text{Tr}\left(\bSigma + \bSigma_k - 2 \left(\bSigma^{\frac{1}{2}} \bSigma_k \bSigma^{\frac{1}{2}}\right)^{\frac{1}{2}}\right) \; .
\end{equation}
As mentioned above, it is possible to use the PSD $\bp$ to transform the equation into an element-wise problem as before:
\begin{equation}
    \begin{split}
        \bar\bp &= \underset{\bp}{\text{arg min}} \sum_{k=1}^K \Vert\bp + \bp_k - 2 \left(\bp^{\frac{1}{2}} \odot \bp_k \odot \bp^{\odot\frac{1}{2}}\right)^{\odot \frac{1}{2}}\Vert_1  \\
        \bar\bp &= \underset{\bp}{\text{arg min}} \sum_{k=1}^K \Vert\bp + \bp_k - 2 \bp^{\odot \frac{1}{2}} \odot \bp_k^{\odot \frac{1}{2}}\Vert_1 \\
        \label{eq:barycenter}
    \end{split}
\end{equation}
After derivation, the $\bar\bp$ minimizing the optimization problem is given by:
\begin{equation}
    \bar\bp =\left(\frac{1}{K} \sum^{K}_{k=1} \bp_k^{\odot\frac{1}{2}}\right)^{\odot2} 
\end{equation}

\end{proof}

\subsection{Computation}
The training is done on Tesla V100-DGXS-32GB with Pytorch. We are considering the following train settings: \texttt{Chambon} architecture with a learning rate of $1e^{-3}$ for Adam optimizer and a patience of 10 for the early stopping. The training for one dataset pair with ten different splits and seeds lasts approximately 1 hour. The data processing time with the CMMN is insignificant compared to the network's computation time (a few minutes).

\subsection{Dataset descriptions}

\paragraph{SHHS}The Sleep Heart Health Study is a multi-center cohort study 
proposed by the National Heart Lung \& Blood Institute \cite{SHHS, SHHS2} to 
help detect cardiovascular disease and sleep disorders. This large dataset 
comprises 6441 subjects (age $63.1 \pm 11.2$) from 1995 to 1998. 
Five sensors are available for each subject: 2 EEGs from C3-A2 and C4-A1 channels, 
left and right EOGs, and one EMG. The EEGs have a sampling rate of 125 Hz. 
The hypnograms were scored according to the Rechtschaffen and Kales criteria \cite{Hobson1969AMO}.

\paragraph{MASS} The Montreal Archive of Sleep Studies comprised five different subsets of recordings. This paper focuses on the SS3 with recordings from 62 healthy subjects (age $42.5 \pm 18.9$). For each subject, 20 EEGs, left and right EOGs, and 3 EMGs are available. We reduced the number of EEG channels to 2 bipolar channels, Fpz-Cz and Pz-Cz, obtained by montage reformatting. The EEGs have a sampling rate of 256 Hz. The MASS hypnograms were scored according to the AASM criteria \cite{AASM}. 

\paragraph{Physionet SleepEDF} This dataset comprises two subsets, one for the age effect in healthy subjects (SC) and one for the Temazepam effect on sleep (ST). We focused on the SC subset where 78 subjects are available (age $28.7 \pm 2.9$). Each recording comprises 2 EEGs from Fpz-Cz and Pz-Cz channels, 1 EOG, and 1 EMG. The EEGs have a sampling rate of 100 Hz. Some of the subjects have two sessions of PSGs available. The hypnograms were scored according to the Rechtschaffen and Kales criteria \cite{Hobson1969AMO}. The stages N3 and N4 have been merged for the following.

\paragraph{Ethical consideration} All datasets used in our experiments are anonymized and public datasets that have already passed an ethics committee before recording.

\subsection{Sensitivity analysis to the filter size}
Several adaptations across different dataset pairs are done to compare the effect of the filter size. The smallest filter size means no transformation, and the largest size corresponds to a perfect transformation between the two signals. For each parameter, ten training are done over data from the source dataset and then tested over data from the target dataset.

\begin{figure}[t]
  \centering
  \includegraphics[scale=0.6]{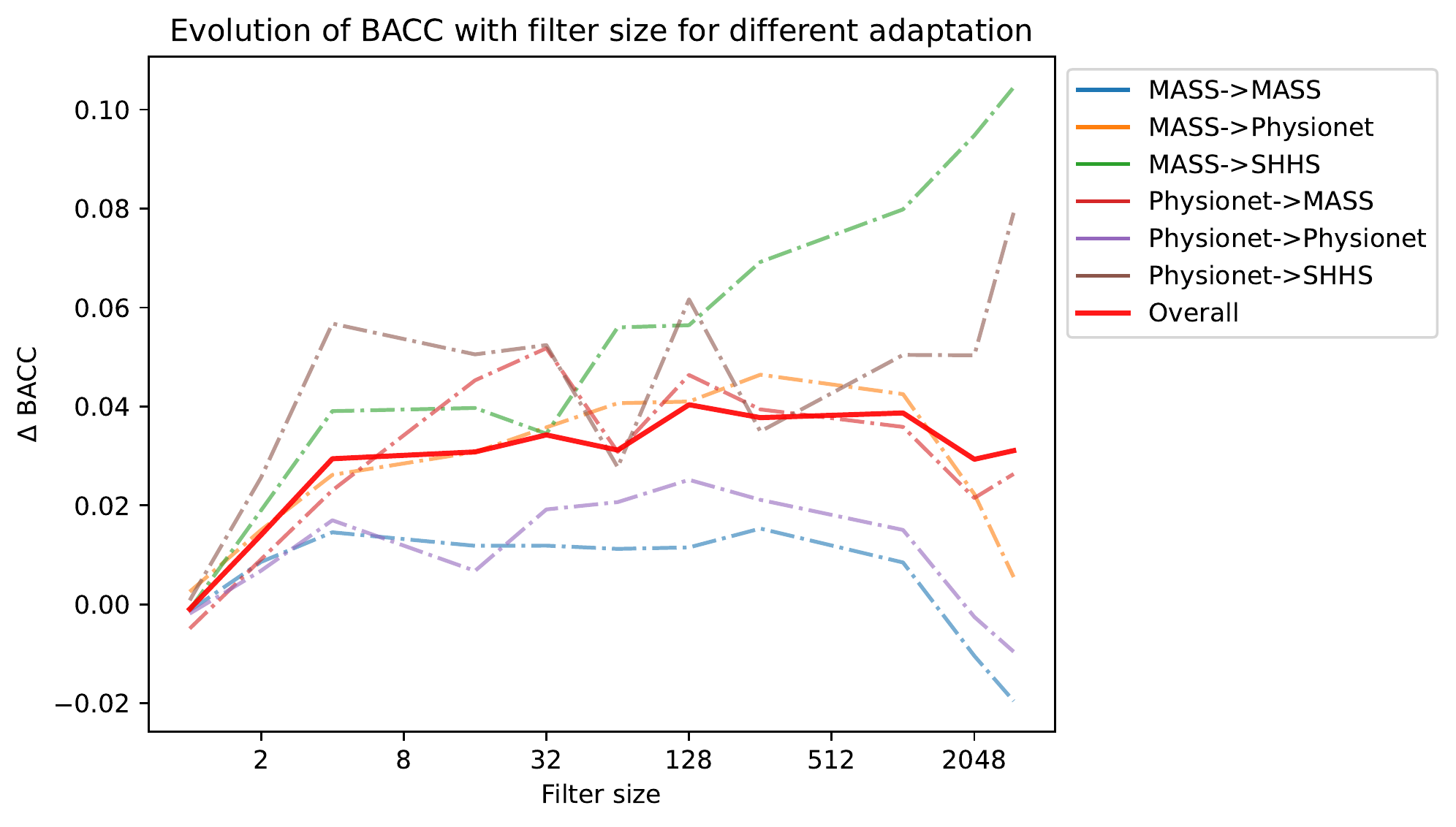}
  \caption{Evolution of the $\Delta$BACC (BACC - BACC without mapping) for different filter sizes for different adaptation problems}
  \label{fig:sensitivity}
\end{figure}

To evaluate the benefit of the method, we measure the $\Delta$BACC corresponding to the difference between the balanced accuracy score with monge mapping and the balanced accuracy score without monge mapping (\ie using \texttt{Sample} normalization). The \autoref{fig:sensitivity} shows the evaluation of $\Delta$BACC with filter size for different dataset pairs. The slightest improvement is for adaptation between the same dataset, which is logical because there is less difference to compensate between the subjects. And the best improvement is for the most challenging task, adaptation between datasets with different sensors (MASS/Physionet $\rightarrow$ SHHS). The mapping did not capture enough information for the smallest filter size to reduce the difference between distributions. On the other hand, the bigger the distribution gap between datasets, the bigger the filter size helps to adapt. Indeed, for an adaptation between the same dataset, having a filter size close to the sample size decreases the performance (see MASS  $\rightarrow$ MASS, Physionet  $\rightarrow$ Physionet), while for an adaptation between two different or very different datasets, increasing the filter size causes the performance to remain the same (MASS $\leftrightarrow$ Physionet) or even increases the scores considerably (MASS/Physionet $\rightarrow$ SHHS).

\subsection{Boxplot of BACC for different data normalizations, different architectures, and different dataset pairs}
\label{sec:}
As shown in the experimental section of the paper, CMMN outperforms standard normalization (\texttt{Session} or \texttt{Sample}) for different architectures. Here we provide more boxplots for other dataset pairs than in the main article. For \texttt{Chambon}, the figure \ref{fig:boxplot_chambon} shows again that \texttt{CMMN} outperforms other normalizations except with dataset SHHS in  target. Indeed, if \texttt{CMMN} is still better than \texttt{Sample} and \texttt{Session}, using no normalization is better for this experimental setting. SHHS is the most different dataset since the sensors used differ from Physionet and MASS, which can explain this difference. Even when SHHS is in train, \texttt{CMMN} is better than \texttt{None}. 

The results for \texttt{DeepSleepNet} in figure \ref{fig:boxplot_deep} are slightly different. If \texttt{CMMN} is still the better performer overall, the best standard normalization is \texttt{None}. Using no normalization is better in 5/6 dataset pairs over \texttt{Sample} and \texttt{Session}. These results were expected since no normalization was used in the paper proposing \texttt{DeepSleepNet} \cite{Supratak_2017}. 

For the sake of simplicity, we used \texttt{Sample} normalization before \texttt{CMMN} and also for the baseline \texttt{No Adapt.} even for \texttt{DeepSleepNet}.
\begin{figure}
\centering
\begin{subfigure}{.5\textwidth}
  \centering
  \includegraphics[width=.9\linewidth]{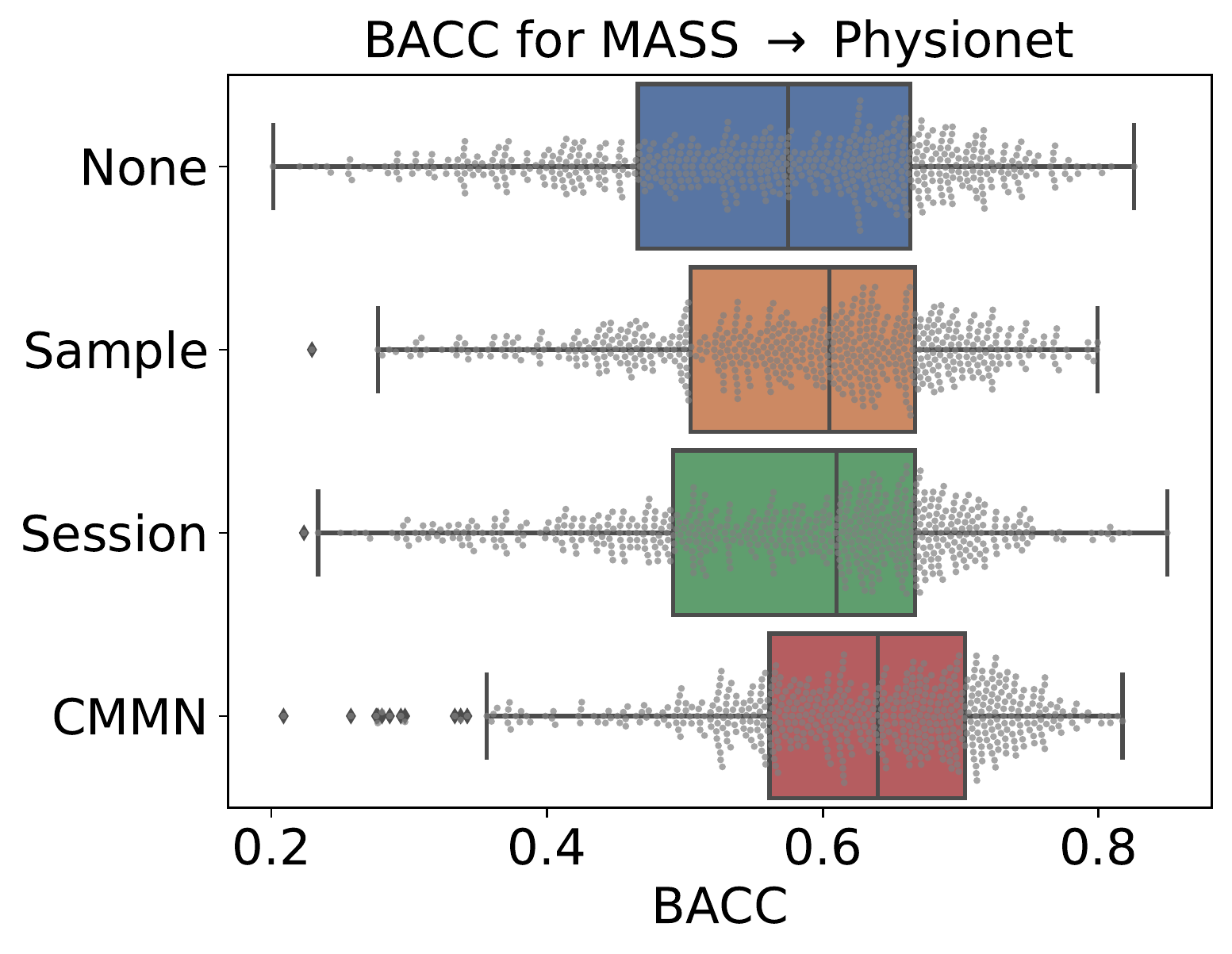}
  \caption{}
  \label{fig:sub1}
\end{subfigure}%
\begin{subfigure}{.5\textwidth}
  \centering
  \includegraphics[width=.9\linewidth]{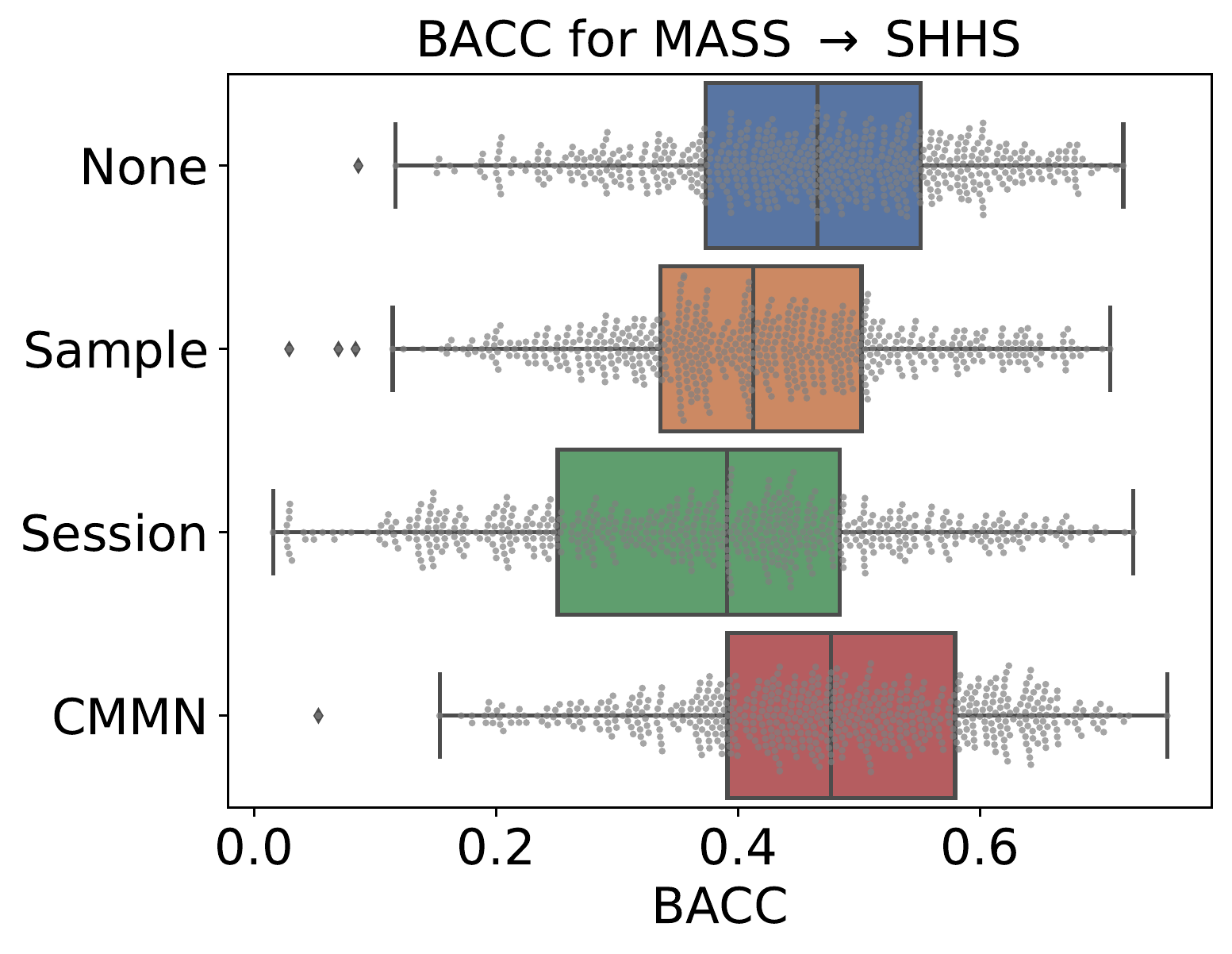}
  \caption{}
  \label{fig:sub2}
\end{subfigure}
\begin{subfigure}{.5\textwidth}
  \centering
  \includegraphics[width=.9\linewidth]{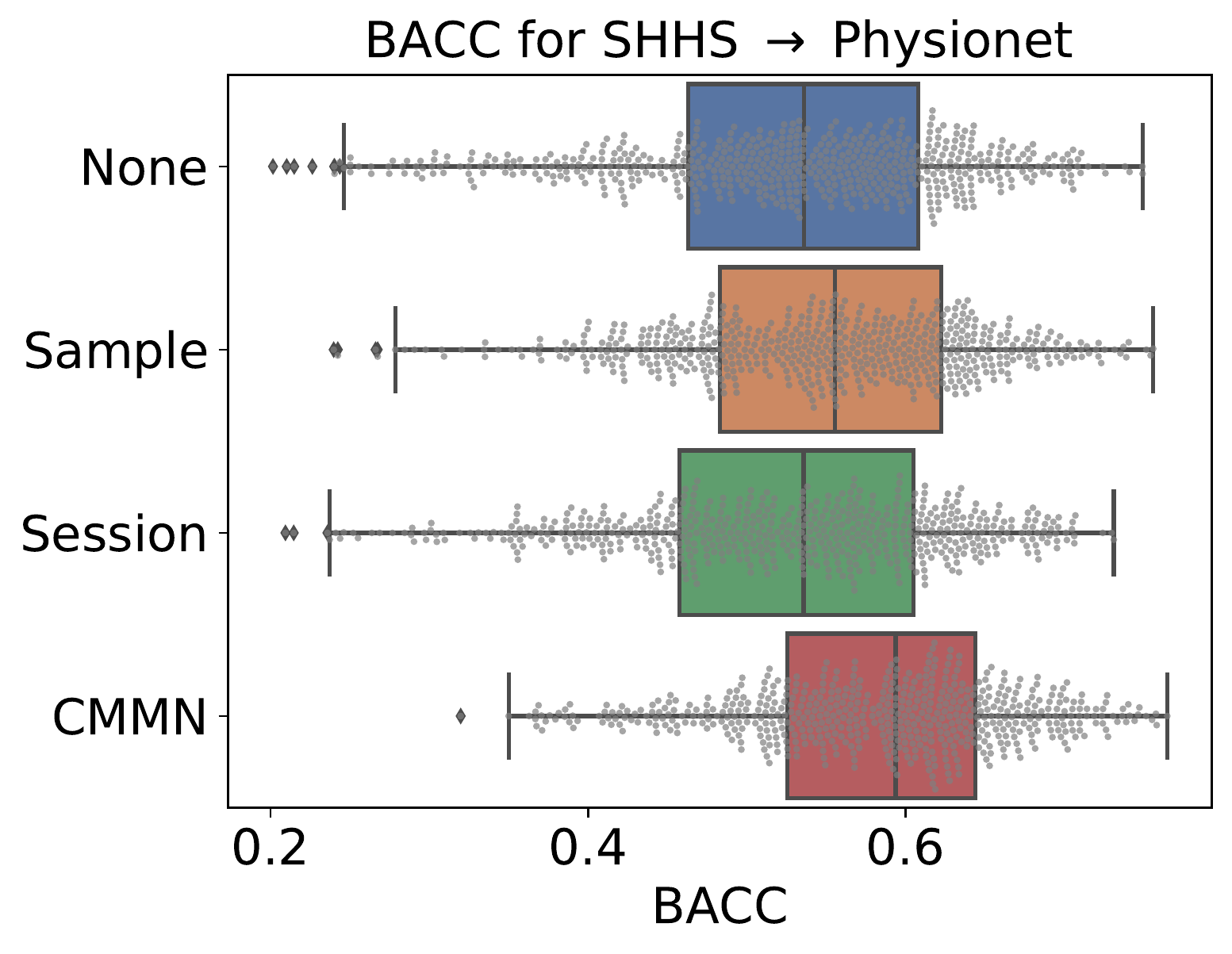}
  \caption{}
  \label{fig:sub1}
\end{subfigure}%
\begin{subfigure}{.5\textwidth}
  \centering
  \includegraphics[width=.9\linewidth]{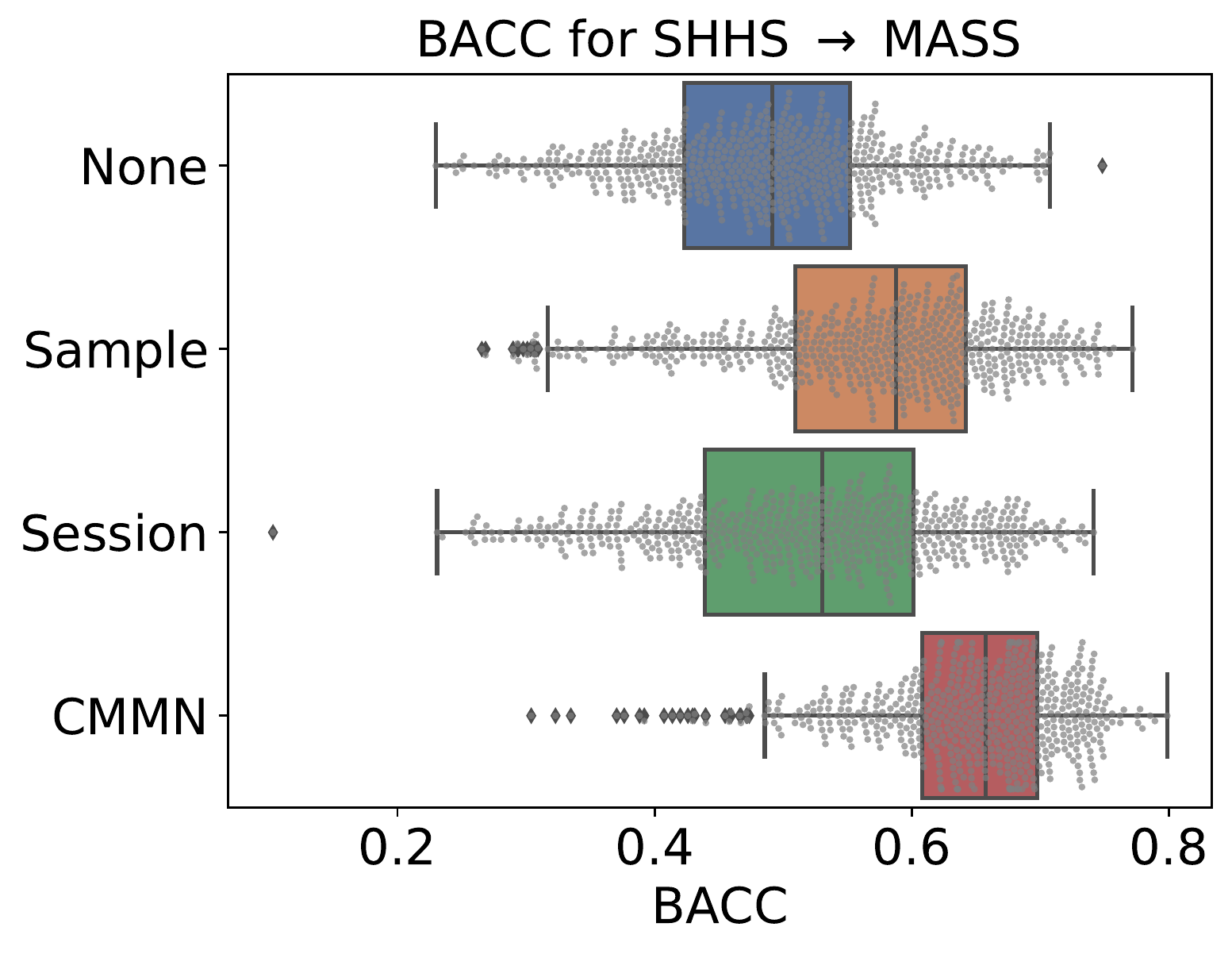}
  \caption{}
  \label{fig:sub2}
\end{subfigure}
\begin{subfigure}{.5\textwidth}
  \centering
  \includegraphics[width=.9\linewidth]{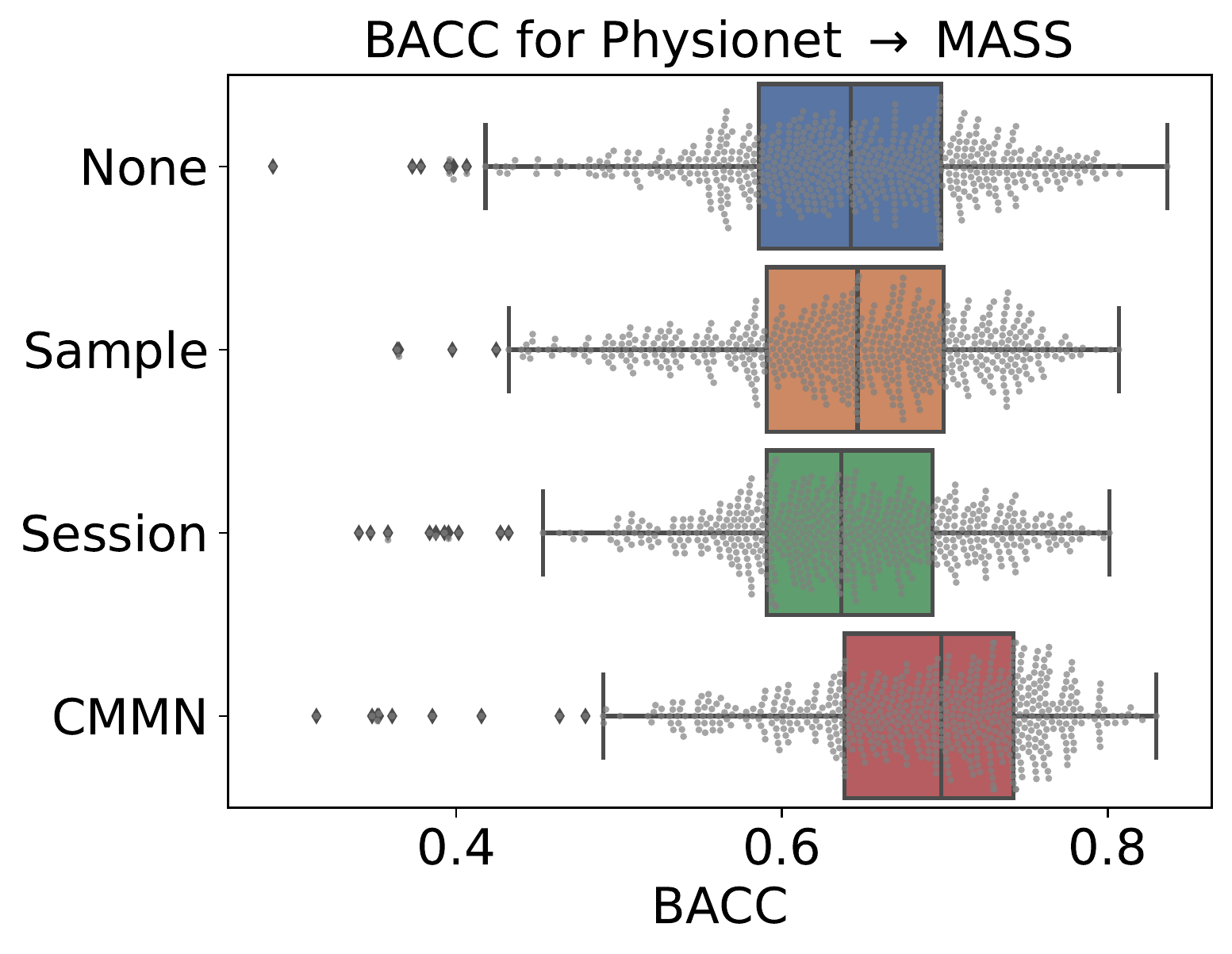}
  \caption{}
  \label{fig:sub1}
\end{subfigure}%
\begin{subfigure}{.5\textwidth}
  \centering
  \includegraphics[width=.9\linewidth]{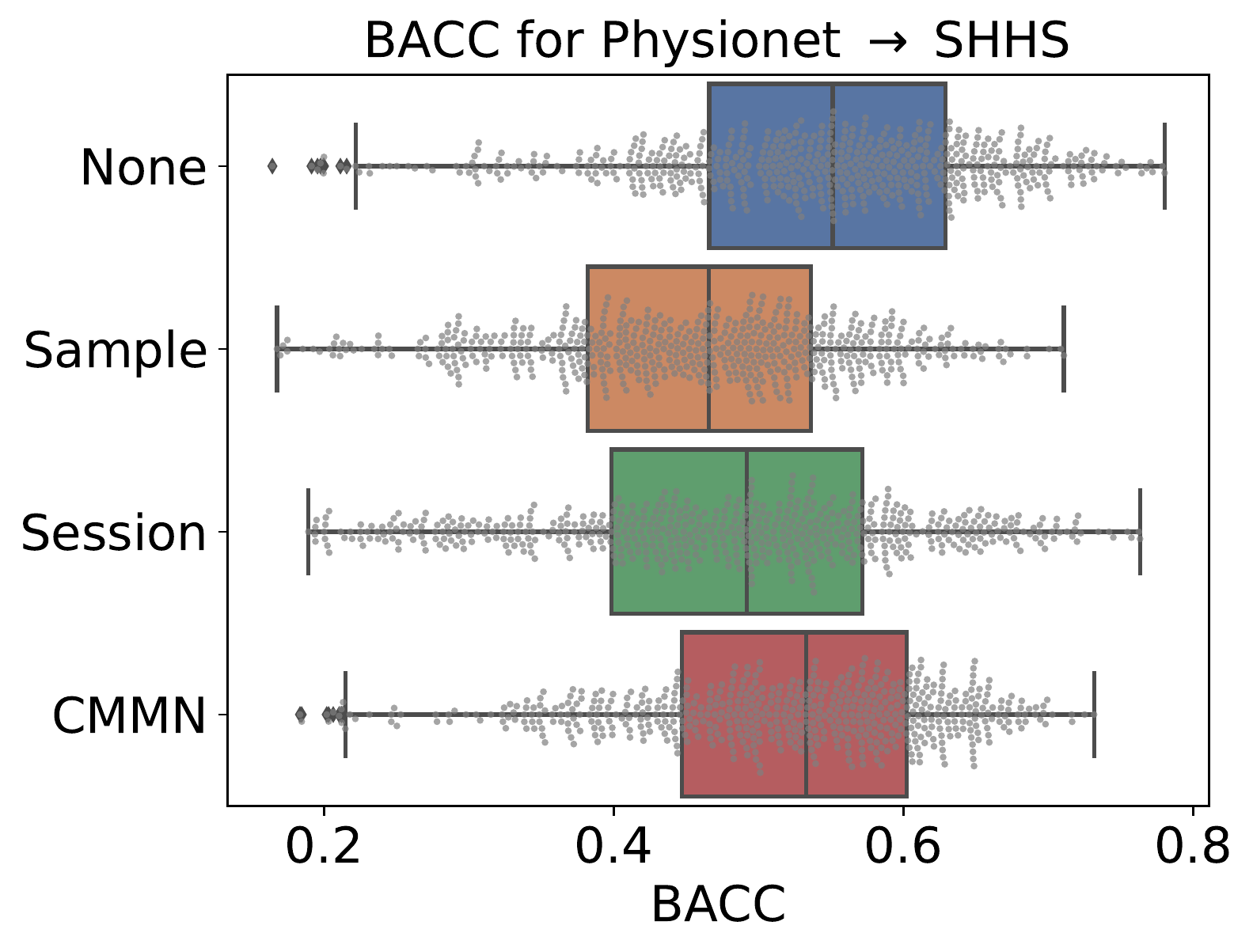}
  \caption{}
  \label{fig:boxplot_chambon}
\end{subfigure}
\caption{Boxplot of balanced accuracy (BACC)  for different normalizations 
and different train/test dataset pairs with \texttt{Chambon}.}
\end{figure}

\begin{figure}
\centering
\begin{subfigure}{.5\textwidth}
  \centering
  \includegraphics[width=.9\linewidth]{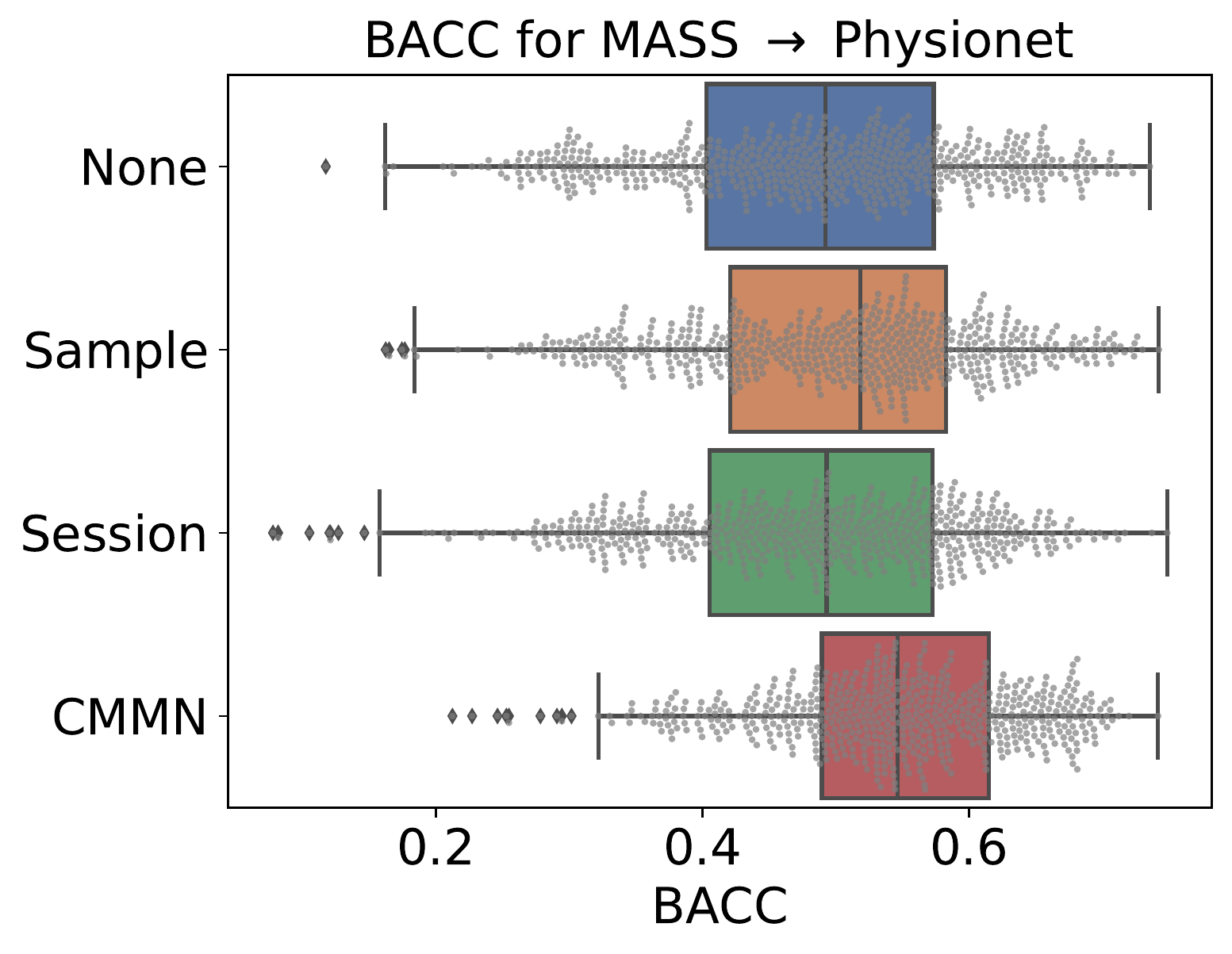}
  \caption{}
  \label{fig:sub1}
\end{subfigure}%
\begin{subfigure}{.5\textwidth}
  \centering
  \includegraphics[width=.9\linewidth]{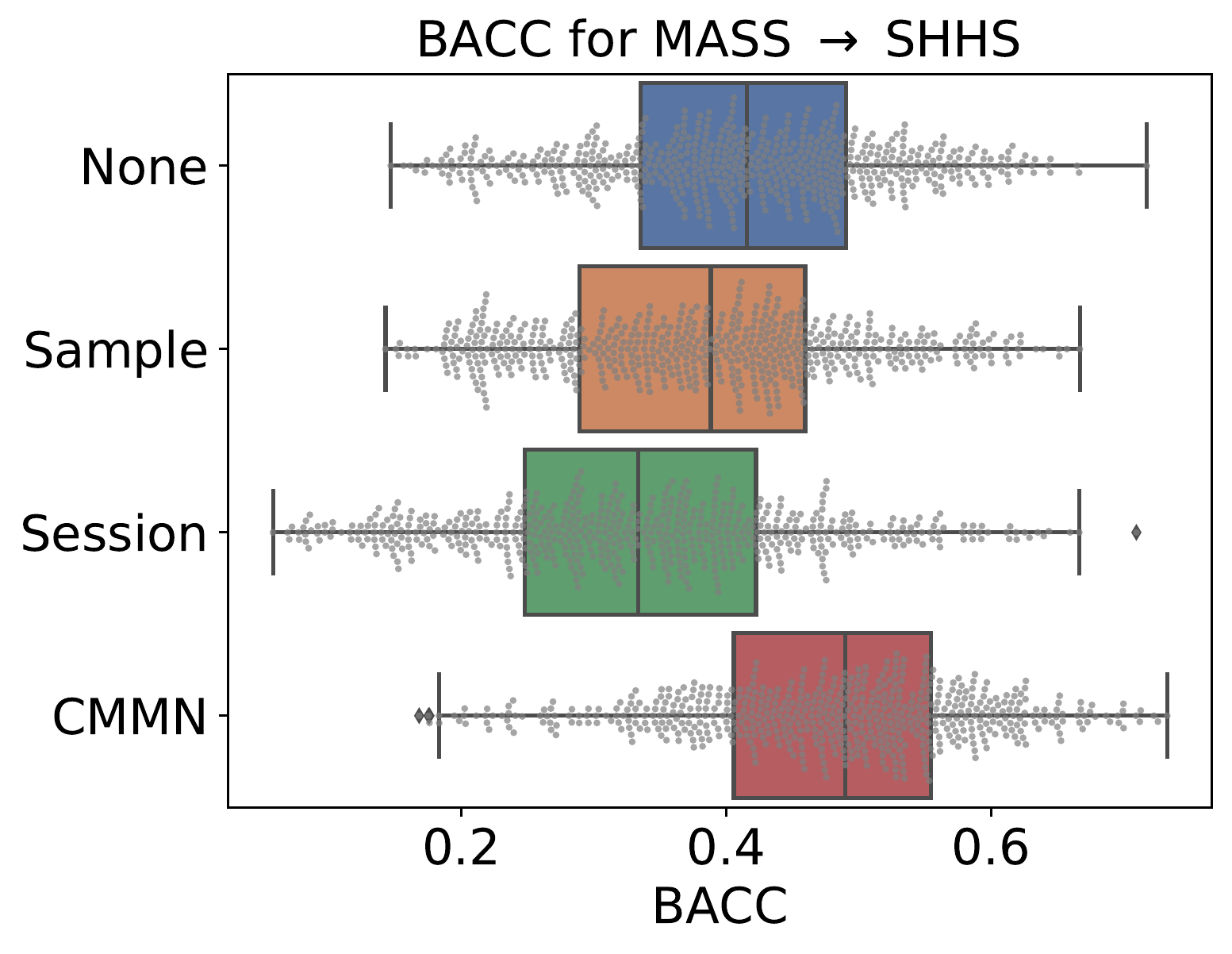}
  \caption{}
  \label{fig:sub2}
\end{subfigure}
\begin{subfigure}{.5\textwidth}
  \centering
  \includegraphics[width=.9\linewidth]{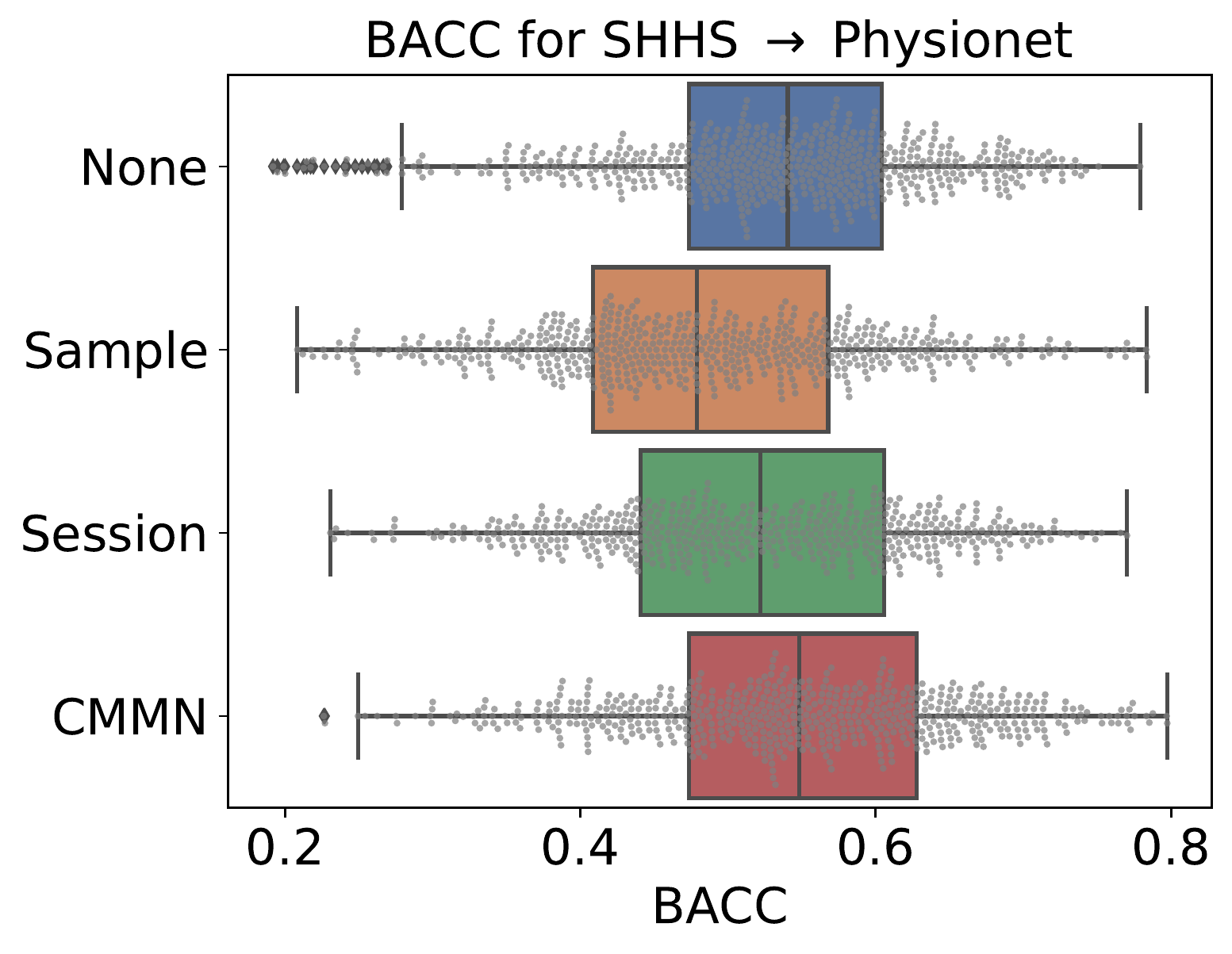}
  \caption{}
  \label{fig:sub1}
\end{subfigure}%
\begin{subfigure}{.5\textwidth}
  \centering
  \includegraphics[width=.9\linewidth]{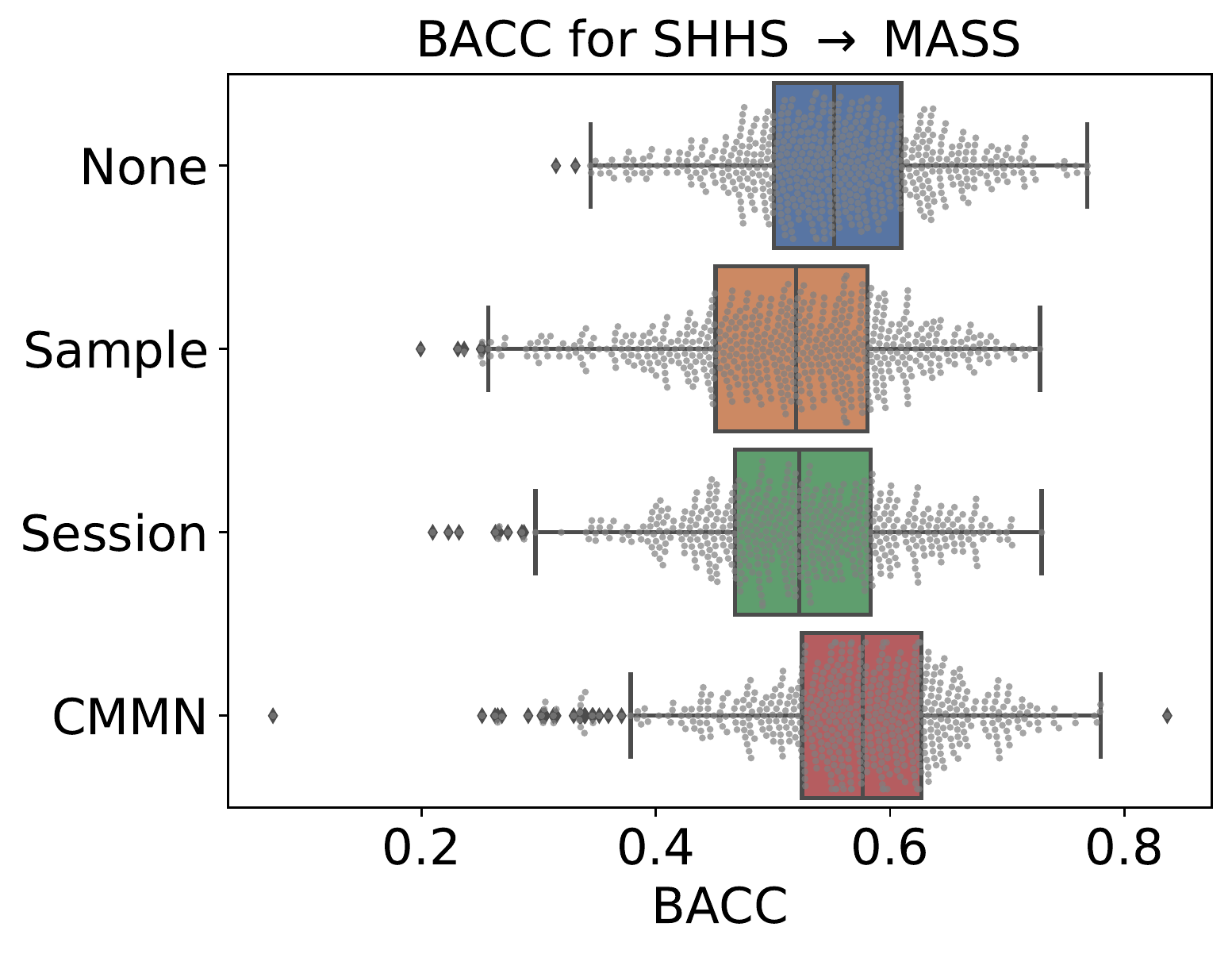}
  \caption{}
  \label{fig:sub2}
\end{subfigure}
\begin{subfigure}{.5\textwidth}
  \centering
  \includegraphics[width=.9\linewidth]{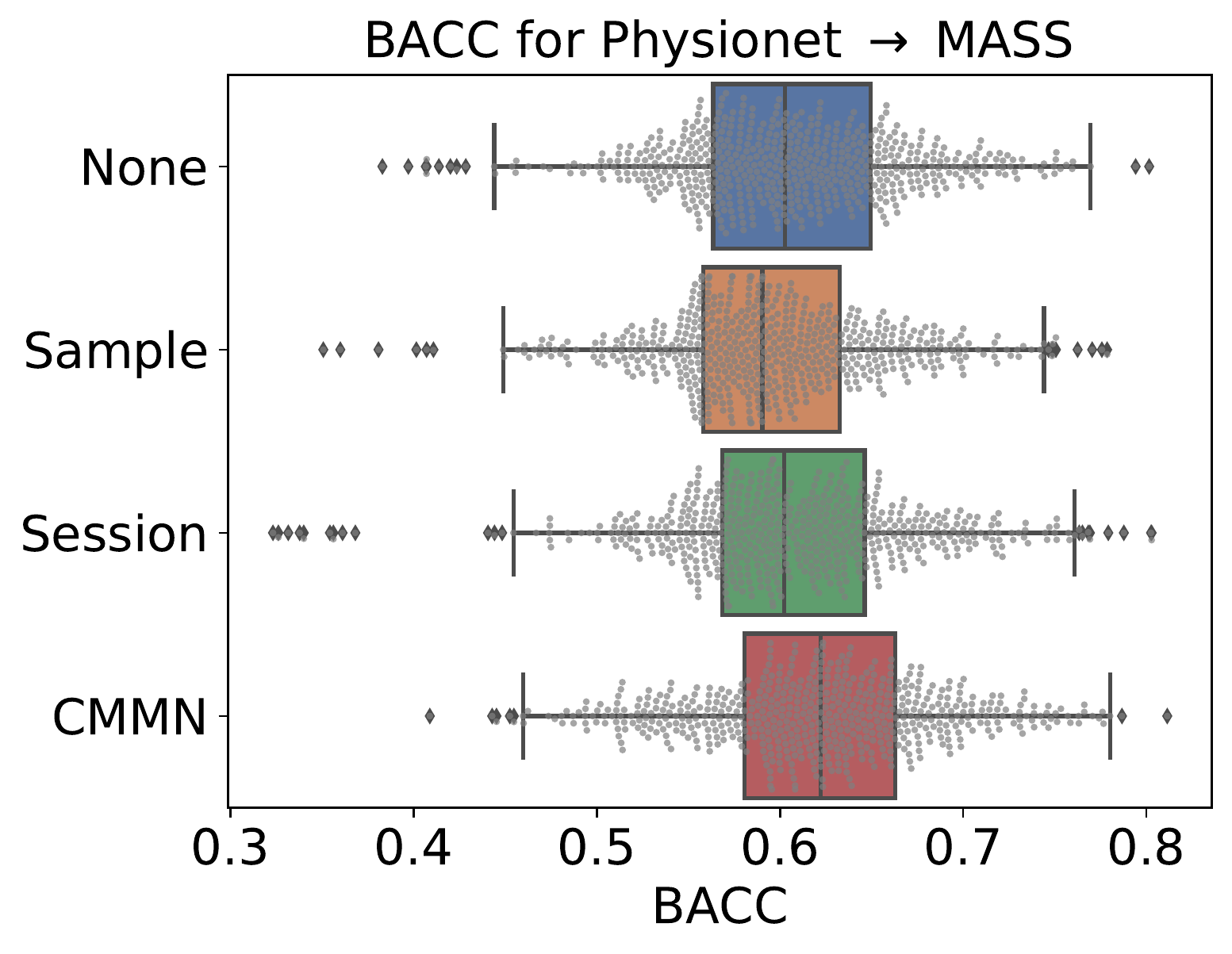}
  \caption{}
  \label{fig:sub1}
\end{subfigure}%
\begin{subfigure}{.5\textwidth}
  \centering
  \includegraphics[width=.9\linewidth]{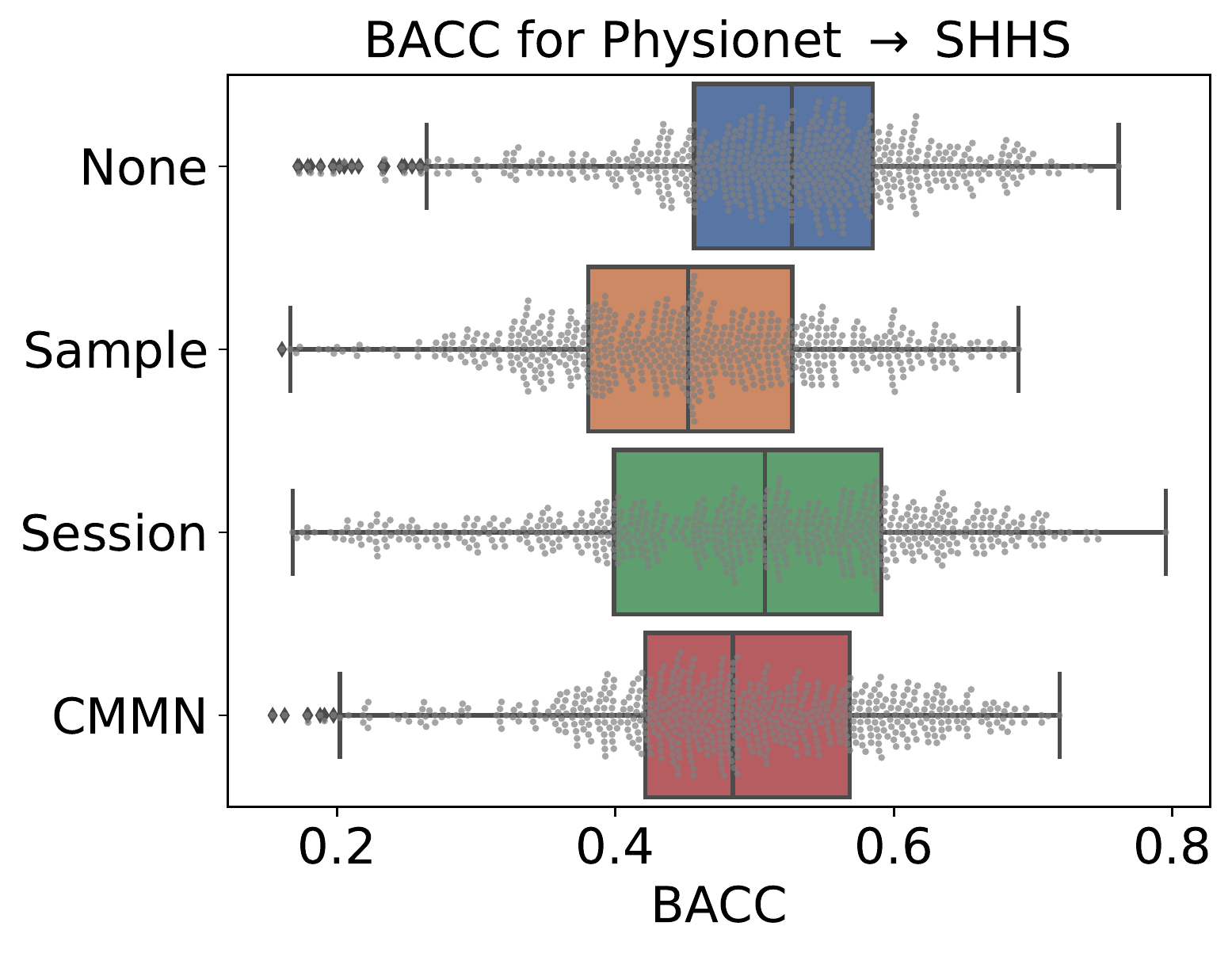}
  \caption{}
  \label{fig:boxplot_deep}
\end{subfigure}
\caption{Boxplot of balanced accuracy (BACC)  for different normalizations 
and different train/test dataset pairs with \texttt{DeepSleepNet}.}
\end{figure}

\begin{figure}
\begin{subfigure}{.5\textwidth}
  \centering
  \includegraphics[width=.9\linewidth]{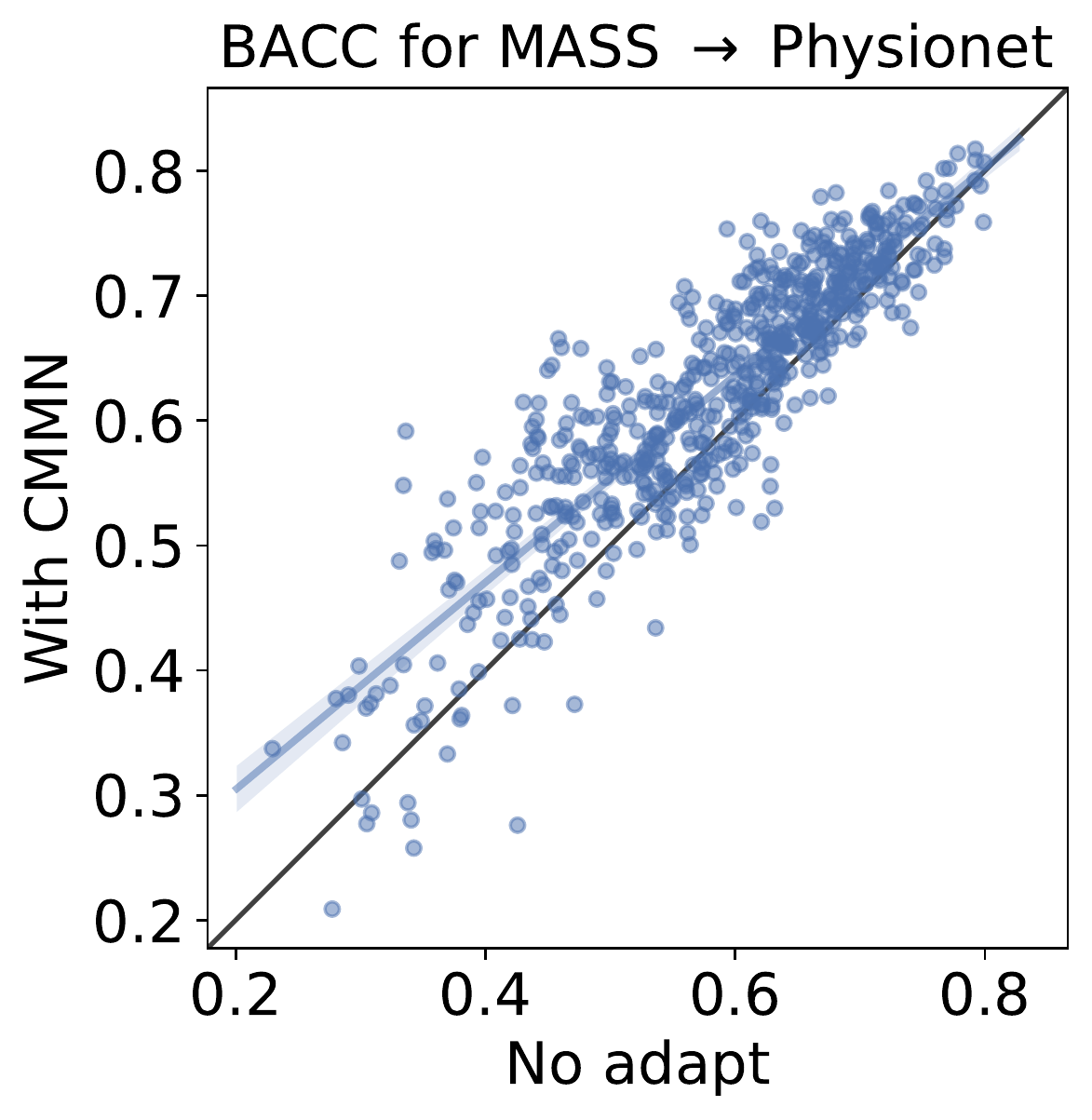}
  \caption{}
  \label{fig:sub1}
\end{subfigure}%
\begin{subfigure}{.5\textwidth}
  \centering
  \includegraphics[width=.9\linewidth]{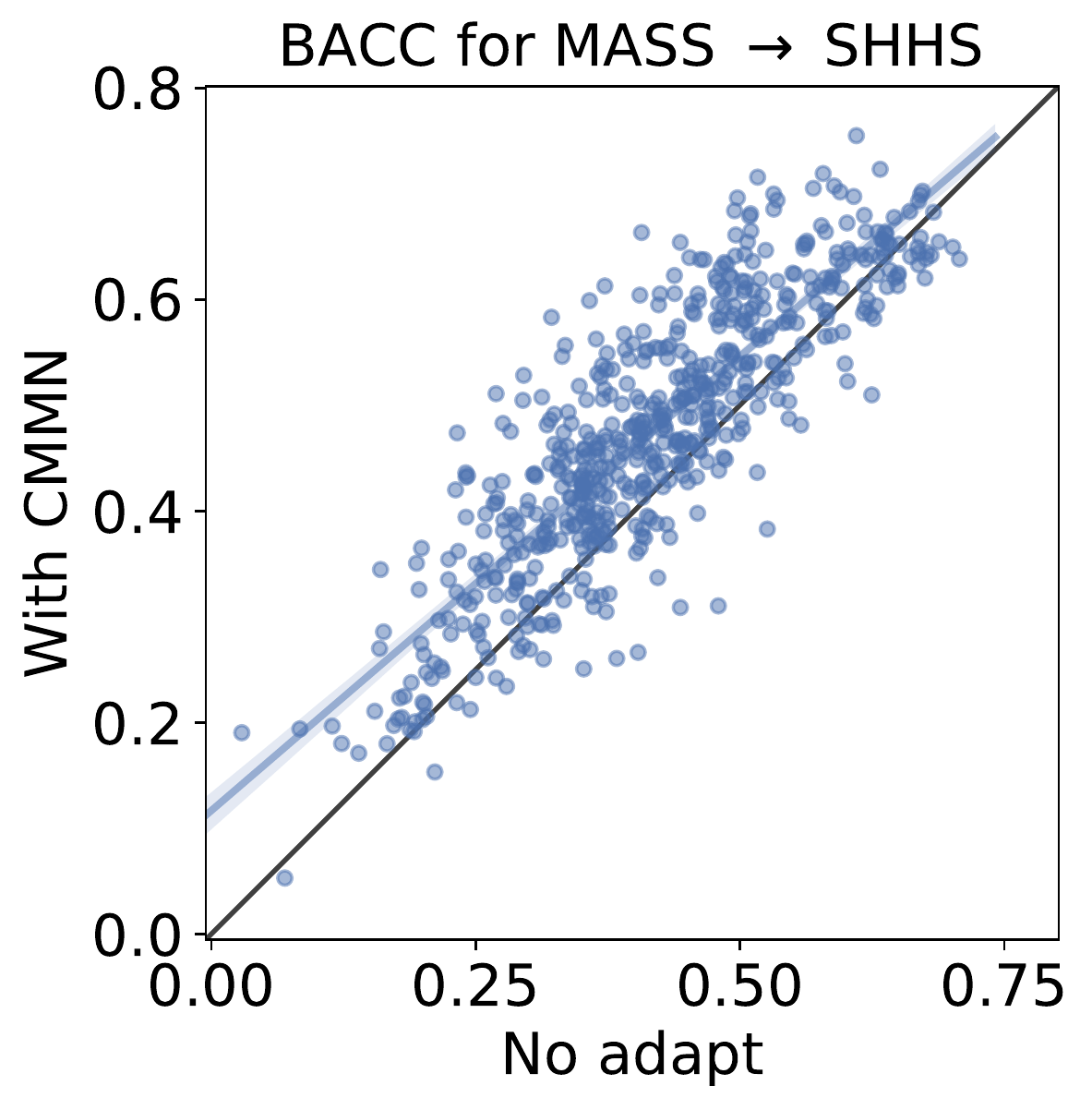}
  \caption{}
  \label{fig:sub2}
\end{subfigure}
\begin{subfigure}{.5\textwidth}
  \centering
  \includegraphics[width=.9\linewidth]{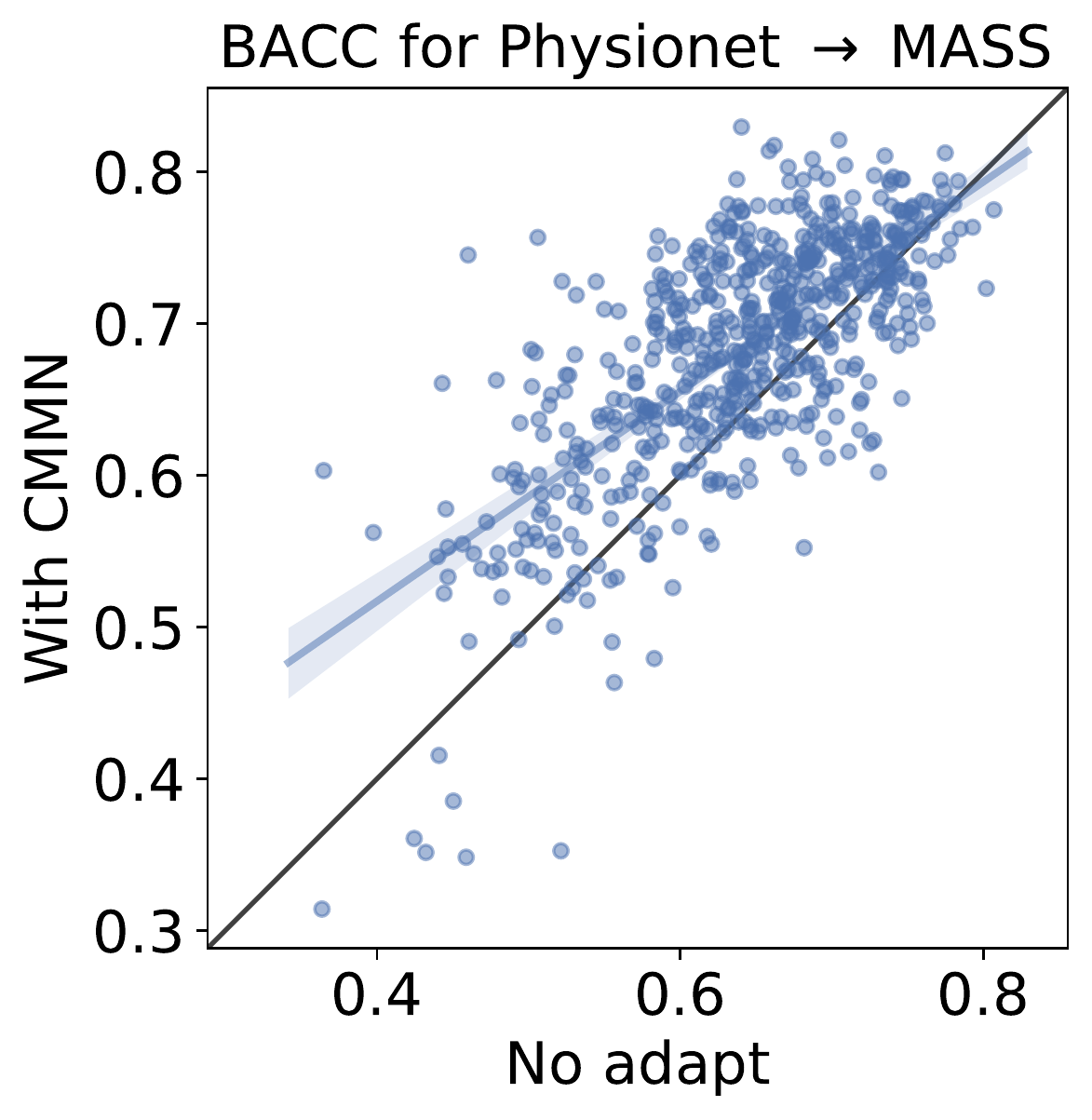}
  \caption{}
  \label{fig:sub1}
\end{subfigure}%
\begin{subfigure}{.5\textwidth}
  \centering
  \includegraphics[width=.9\linewidth]{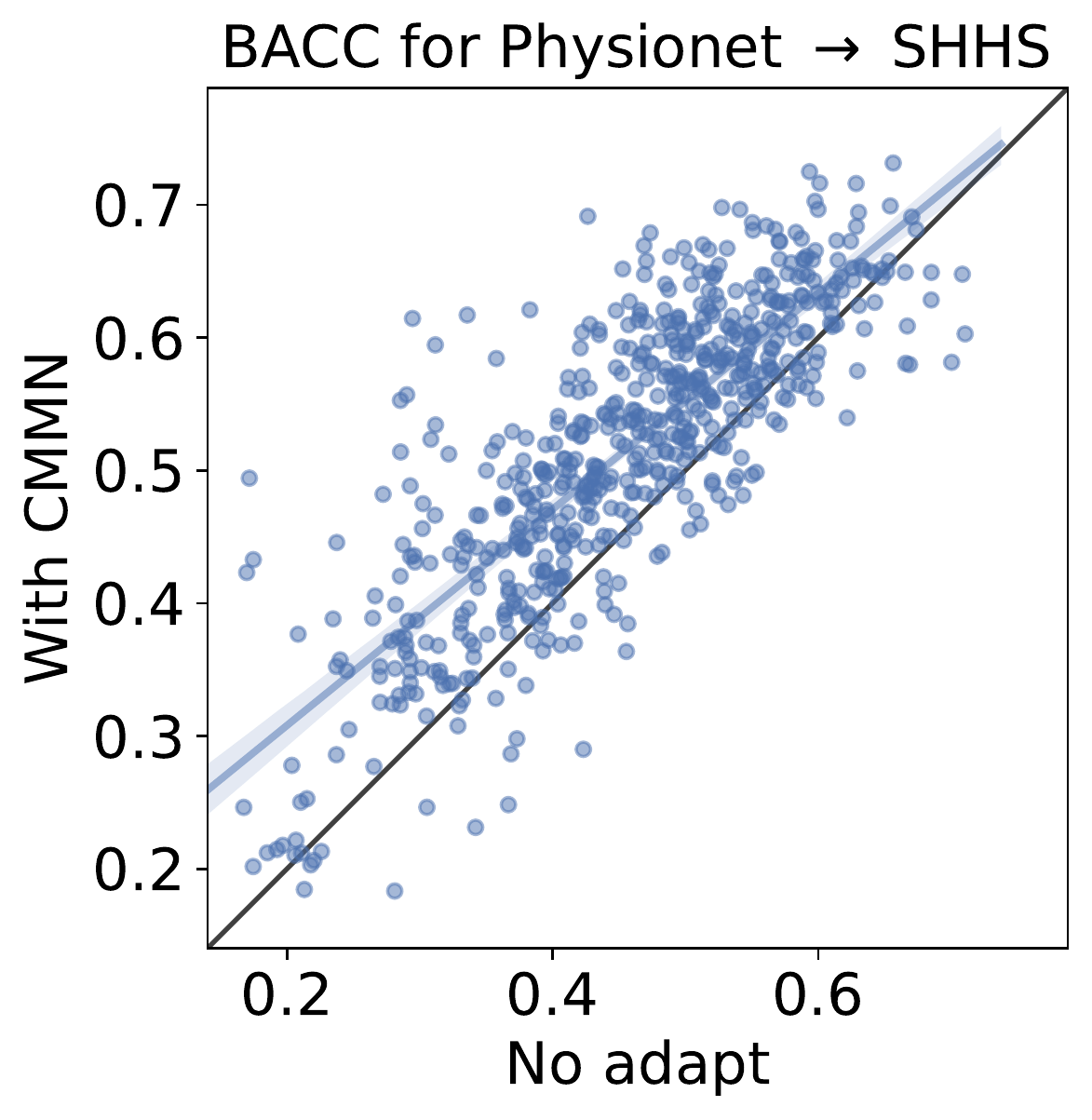}
  \caption{}
  \label{fig:sub2}
\end{subfigure}
\begin{subfigure}{.5\textwidth}
  \centering
  \includegraphics[width=.9\linewidth]{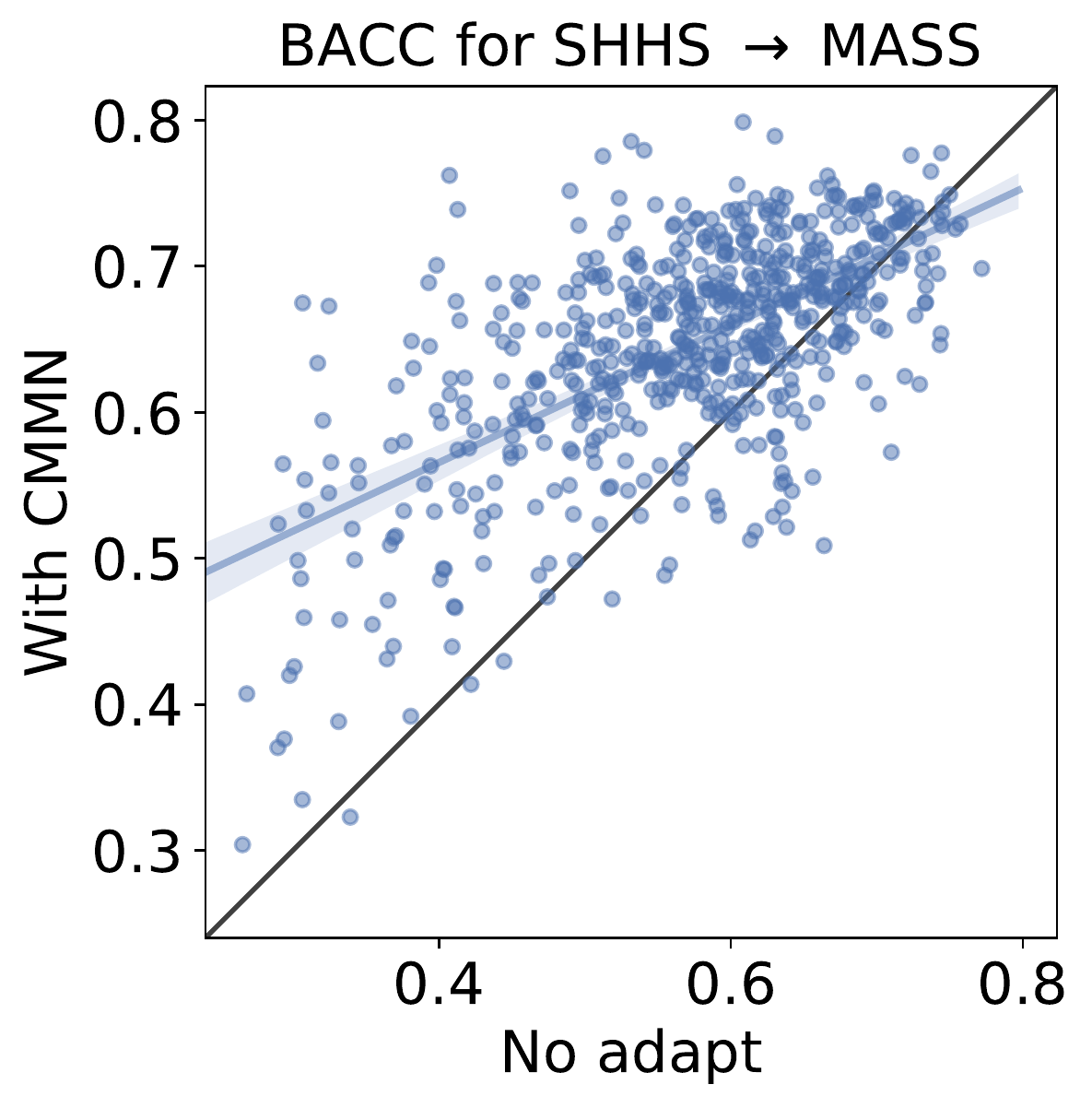}
  \caption{}
  \label{fig:sub1}
\end{subfigure}%
\begin{subfigure}{.5\textwidth}
  \centering
  \includegraphics[width=.9\linewidth]{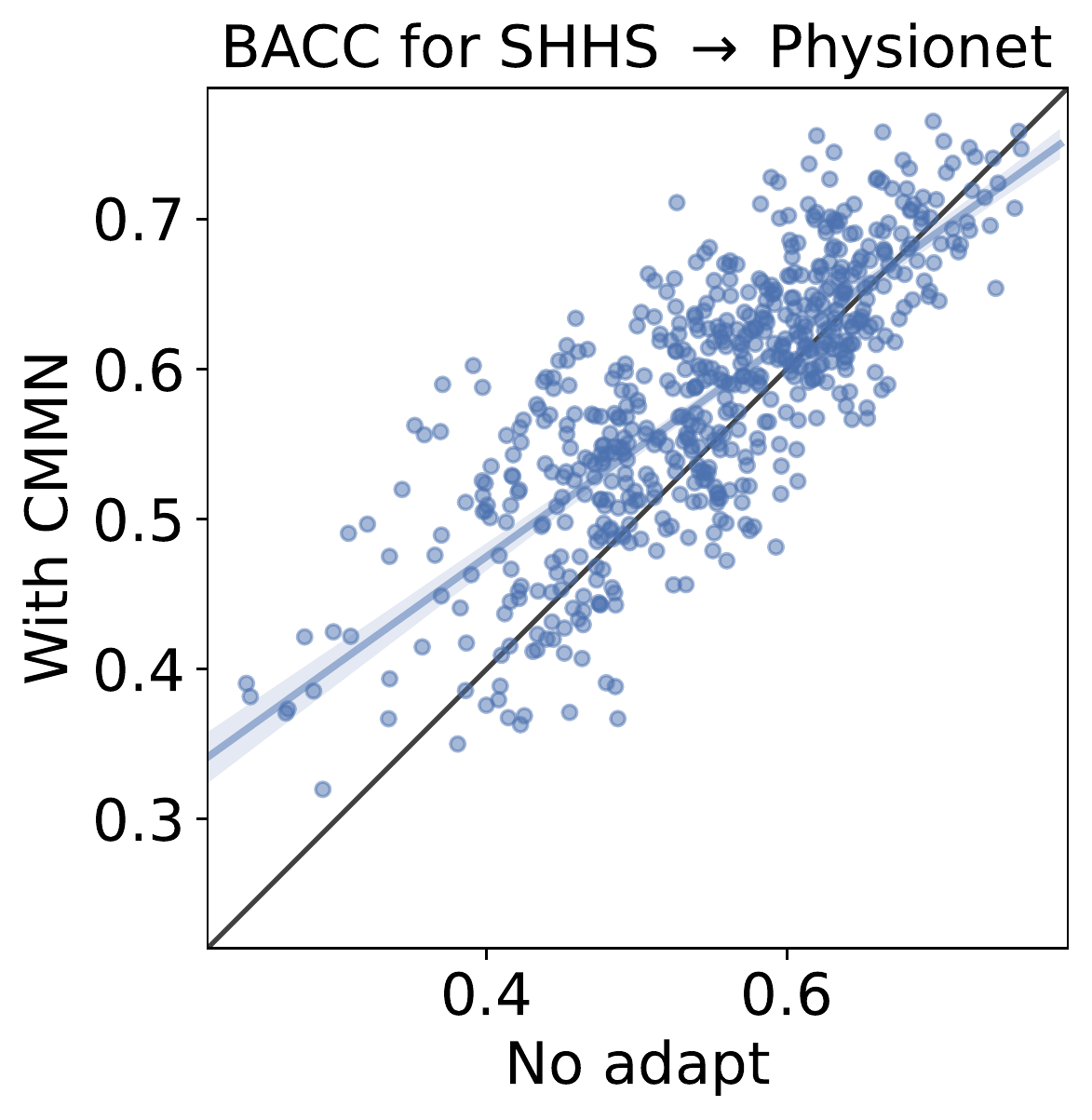}
  \caption{}
  \label{fig:scatter_chambon}
\end{subfigure}
\caption{Scatter plot of balanced accuracy (BACC) with \texttt{No Adapt} as a function
    of BACC with \texttt{CMMN} for different dataset pairs
    with \texttt{Chambon}.}
\label{fig:test}
\end{figure}

\subsection{Scatterplot for different architectures and different dataset pairs}

One significant benefit of \texttt{CMMN} is to have a massive impact on low-performing subjects. The scatter plots of balanced accuracy (BACC) with \texttt{No Adapt} as a function of BACC with \texttt{CMMN} emphasize this effect. In this section, the plots in figure \ref{fig:scatter_chambon} and figure \ref{fig:scatter_deep} show the increase in all subjects. For both architectures, the axis of the linear regression is always above the axis $x=y$. The differences between the two axes are generally higher for lower performances, which means that the worst the performance is, the higher the increase is. For the eccentric dots on the left (\ie low-performing subjects), the increase is generally around 8\% (see \autoref{tab:delta}). When SHHS is the training set, low-performing subjects get a considerable boost. On the other hand, for Physionet $\leftrightarrow$ MASS with \texttt{Chambon}, some subjects lose accuracy, but it remains rare compared to the average gain on all subjects. 
\begin{figure}
\begin{subfigure}{.5\textwidth}
  \centering
  \includegraphics[width=.9\linewidth]{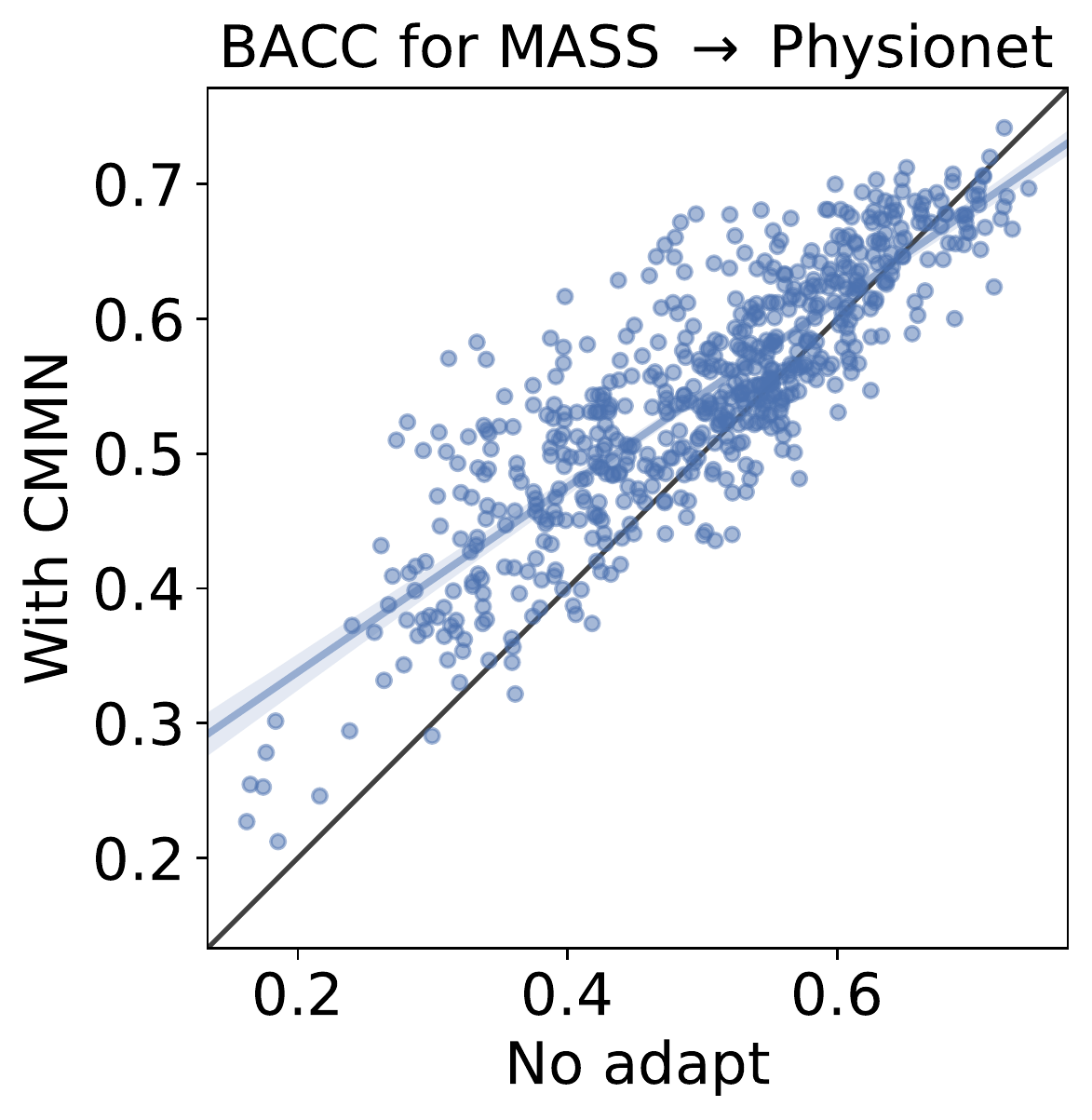}
  \caption{}
  \label{fig:sub1}
\end{subfigure}%
\begin{subfigure}{.5\textwidth}
  \centering
  \includegraphics[width=.9\linewidth]{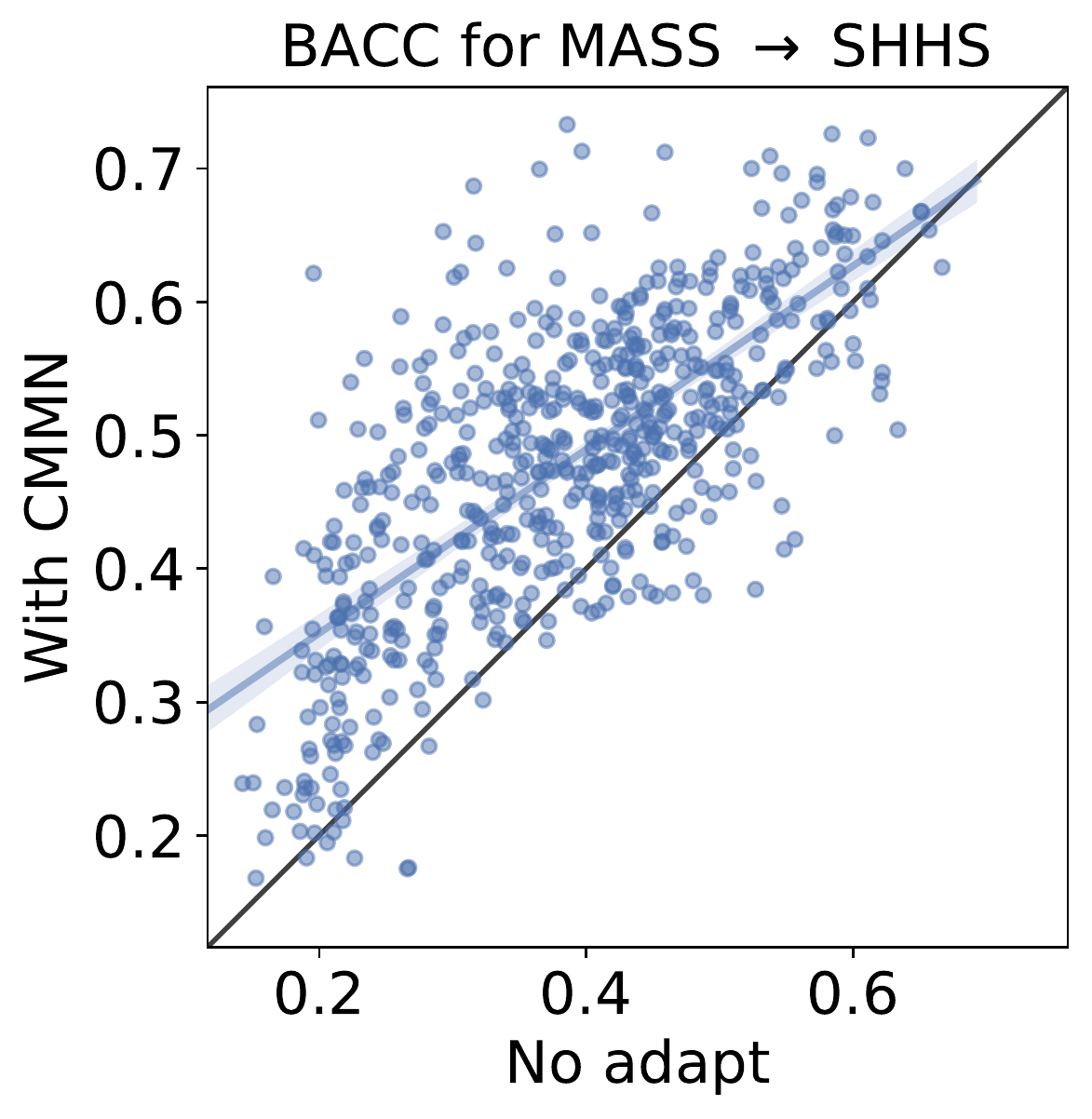}
  \caption{}
  \label{fig:sub2}
\end{subfigure}
\begin{subfigure}{.5\textwidth}
  \centering
  \includegraphics[width=.9\linewidth]{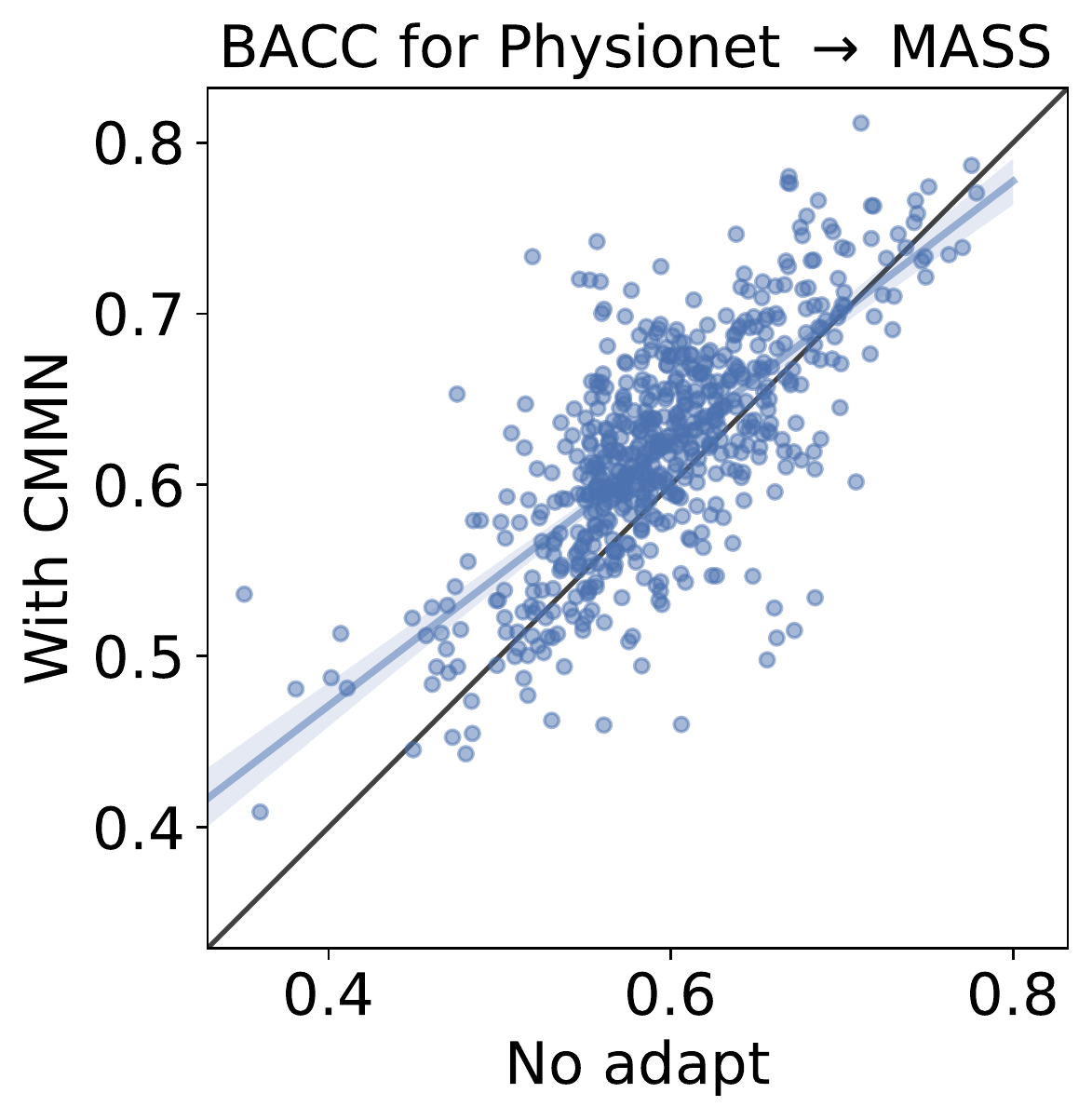}
  \caption{}
  \label{fig:sub1}
\end{subfigure}%
\begin{subfigure}{.5\textwidth}
  \centering
  \includegraphics[width=.9\linewidth]{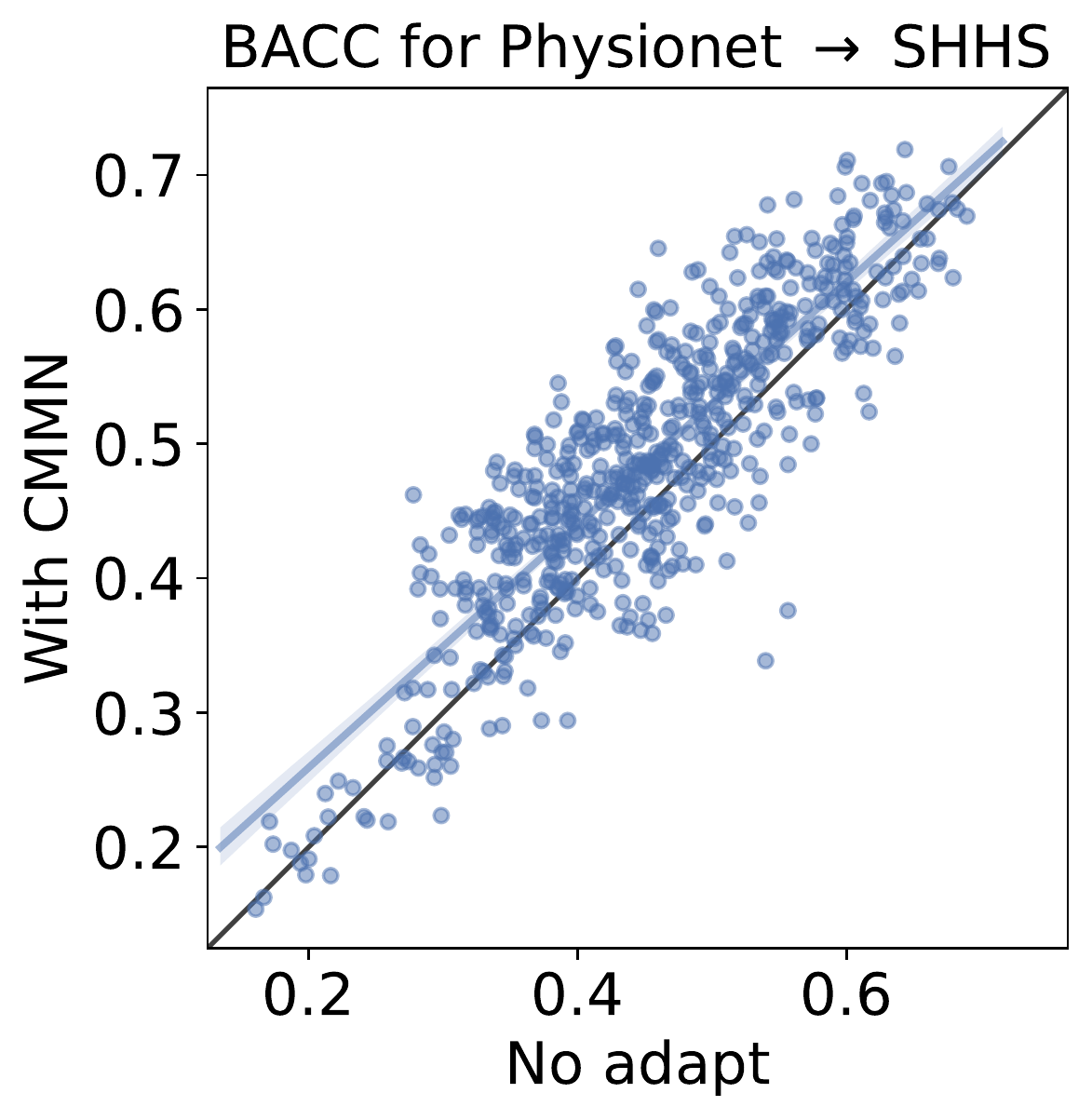}
  \caption{}
  \label{fig:sub2}
\end{subfigure}
\begin{subfigure}{.5\textwidth}
  \centering
  \includegraphics[width=.9\linewidth]{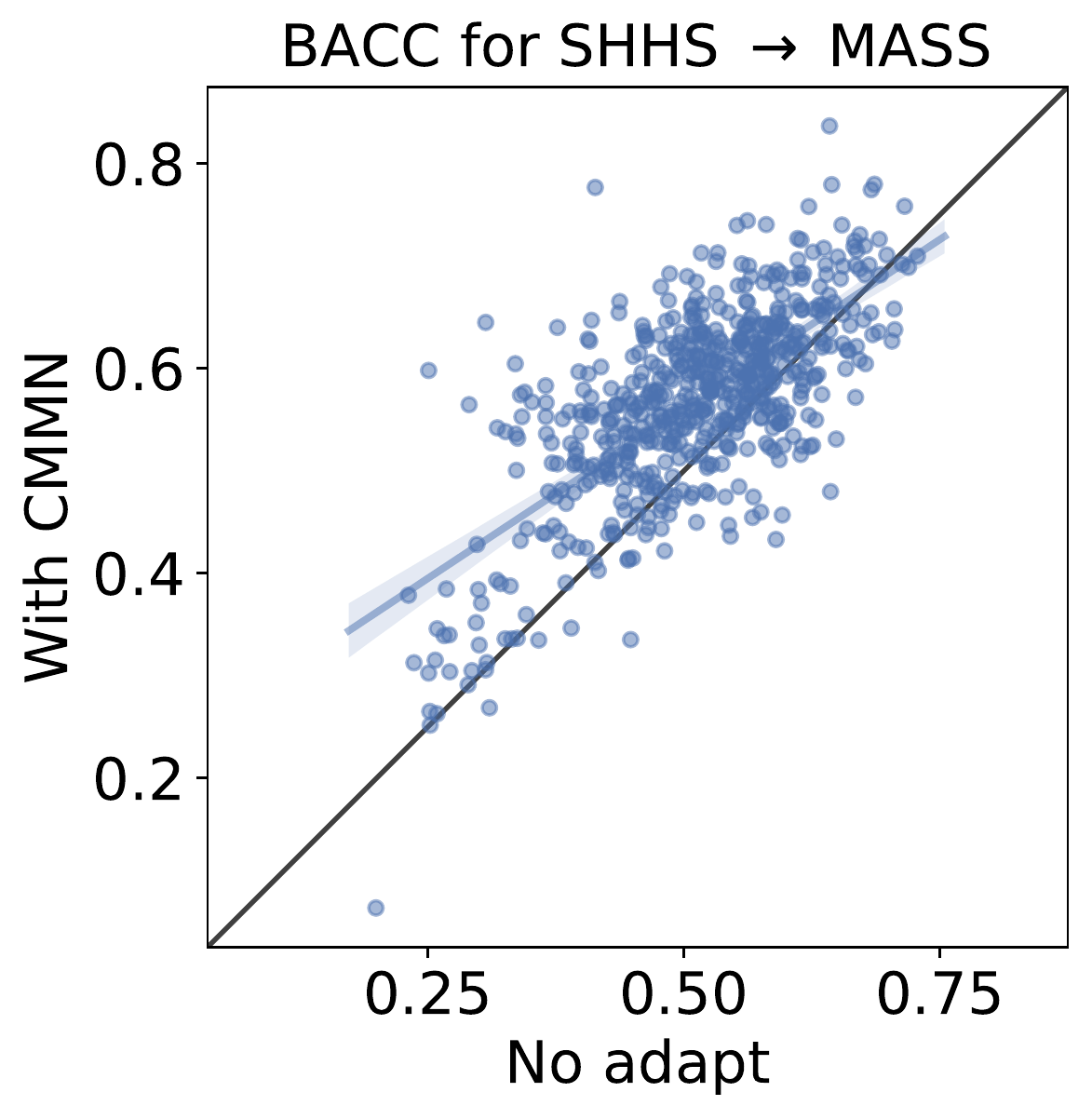}
  \caption{}
  \label{fig:sub1}
\end{subfigure}%
\begin{subfigure}{.5\textwidth}
  \centering
  \includegraphics[width=.9\linewidth]{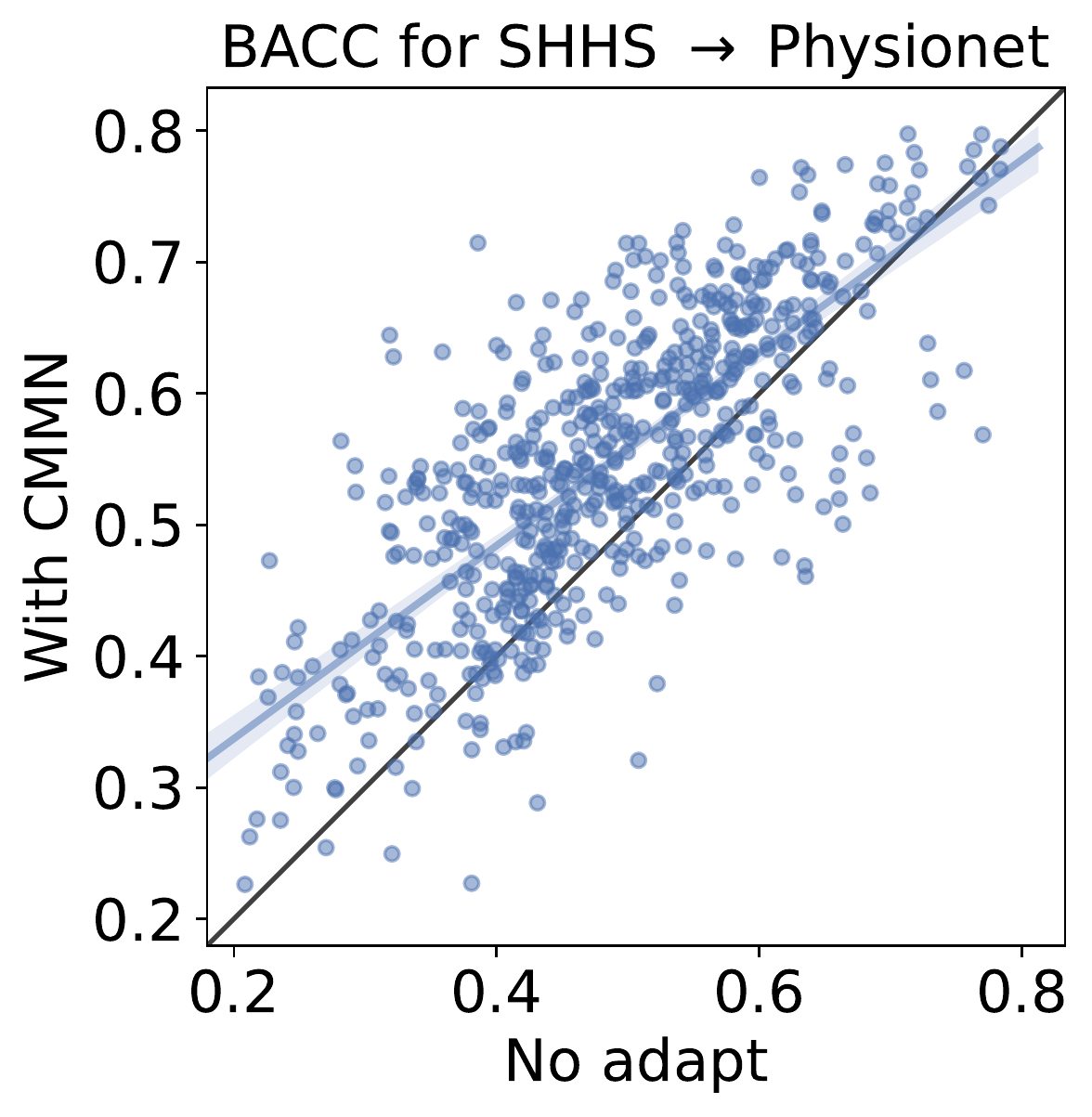}
  \caption{}
  \label{fig:scatter_deep}
\end{subfigure}
\caption{Scatter plot of balanced accuracy (BACC) with \texttt{No Adapt} as a function
    of BACC with \texttt{CMMN} for different dataset pairs
    with \texttt{DeepSleepNet}}
\label{fig:test}
\end{figure}

% \bibliographystyle{utphys}
% \bibliography{biblio}

\end{document}